\documentclass[final,3p,times]{elsarticle}

\usepackage{amssymb}
\usepackage{amsmath}

\usepackage{amsthm}
\theoremstyle{plain}
\newtheorem{theorem}{Theorem}[section]
\newtheorem{proposition}{Proposition}[section]

\theoremstyle{definition}

\theoremstyle{remark}
\newtheorem{rem}[theorem]{Remark}  

\newcommand\restr[2]{\ensuremath{\left.#1\right|_{#2}}}

\usepackage{algorithm}
\usepackage{algorithmic}
\usepackage{booktabs}
\usepackage{array}
\usepackage{siunitx} 

\journal{Computer Methods in Applied Mechanics and Engineering}

\begin{document}

\begin{frontmatter}



\title{Finite basis physics-informed neural networks with hard constraints for viscous fluid flow in highly perforated domains}


\author[label1]{Jeeeun Lee}
\author[label2]{Denis Korolev}
\author[label3]{Miro Duhovic}
\author[label1]{Seong Su Kim}

\affiliation[label1]{organization={Department of Mechanical Engineering, Korea Advanced Institute of Science and Technology},
             addressline={291 Daehak-ro, Yuseong-gu},
             city={Daejeon},
             postcode={34141},
            country={Republic of Korea}}

\affiliation[label2]{organization={Weierstrass-Institute for Applied Analysis and Stochastics},
            addressline={Anton-Wilhelm-Amo-Str. 39}, 
            city={Berlin},
            postcode={10117}, 
            country={Germany}}

\affiliation[label3]{organization={Leibniz-Institut für Verbundwerkstoffe GmbH},
            addressline={Erwin-Schrödinger-Str. 58}, 
            city={Kaiserslautern},
            postcode={67663}, 
            country={Germany}}


\begin{abstract}
In this work, viscous fluid flow governed by the Stokes equations in highly perforated domains is studied using physics-informed neural networks (PINNs). Perforated microstructures induce complex boundary conditions and fine-scale flow features that are difficult for standard neural networks to resolve. Conventional PINNs, even when combined with advanced training techniques, can suffer from a loss of accuracy and efficiency as the number of perforations increases. One important source of this difficulty is the soft enforcement of boundary conditions through penalty terms, which can lead to stiffness, gradient conflicts, and poor resolution of near-boundary flow structures. Hard constraints provide an alternative by encoding boundary conditions exactly into the network ansatz, but may introduce undesirable non-local effects due to the global nature of the approximation. To address these challenges, finite basis PINNs (FBPINNs), which are based on domain decomposition and localisation principles, are used together with hard boundary constraints that efficiently encode perforation-related boundary conditions. This approach helps mitigate spectral bias, improves overall accuracy, and exhibits convergence that is only weakly affected by the number of perforations, thereby providing an efficient and highly parallelisable neural network framework. The proposed approach is further supported with theoretical arguments, specifically focusing on the localisation and approximation properties of FBPINNs.
\end{abstract}



\begin{keyword}
Physics-informed neural networks \sep  Domain decomposition \sep Stiff gradient flow \sep Spectral bias \sep Multi-scale modelling 



\end{keyword}

\end{frontmatter}



\section{Introduction}\label{sec1}

The physics-informed approach to machine learning (ML) is a new paradigm in scientific computing, with rapid growth in techniques forming a new field of scientific machine learning (SciML). Driven by interdisciplinary contributions, it holds great promise for applications to complex technical systems in engineering and mechanics, aiming to provide new modelling approaches and fast simulation surrogates that are less reliant on data, which can often be scarce. The central concept, based on a physics-informed neural network (PINN, pluralised as PINNs), consists of incorporating a given mathematical model in the form of a partial differential equation (PDE) into the learning objective (also called the cost functional or loss), either to complement an existing learning objective driven by data or to serve as the primary source of information about the underlying system of study. The former constitutes a hybrid approach (often used in inverse problems), while the latter serves as a neural network–based PDE solver (also called a forward problem solver). The ability of PINNs to incorporate prior model knowledge, various measurements \cite{zhao2025physics, HANNA2024108019}, surrogate approximations \cite{moon2025physics, korolev2026hybrid}, and constitutive relations \cite{korolev2026hybrid, LEE2025108857, haghighat2023constitutive} into a single framework by modifying the loss, along with the potential for hybridisation with conventional numerical solvers in multi-scale \cite{korolev2026hybrid, hintermuller2026hybrid} and multi-physics \cite{shukla2025neurosem} settings, makes them a very promising tool for assisting with complicated modelling workflows. 

Fluid flow in highly perforated domains (i.e., involving multiple obstacles to flow) arises in many engineering applications across a range of Reynolds numbers (Re) and diameters. Some examples include the flow of resin within fibre tows in composite material manufacturing (Re $\ll$ 1, diameter $\sim$~10~{\textmu}m), flow of coolant past tube arrays in heat exchangers (Re = 100–600, diameter $\sim$~1~mm) \cite{asif2024heatexchanger}, flow in microchemical reactors (Re = 1–50, diameter~$\sim$~100~{\textmu}m) \cite{krishnamurthy2007microreactor, horgue2013microreactor} and flow of underfill material through the solder bumps in semiconductor packaging (Re < 1, diameter $\sim$~100~{\textmu}m) \cite{wu2024semiconductor}. Due to the large number of material boundaries, the mesh resolution must be sufficiently fine to capture the intricate flow characteristics, and the underlying meshless nature of PINNs makes them an attractive choice for modelling complex geometries. There is growing interest in applying PINNs to fluid-related equations; see, e.g., \cite{cai2021physics, cai2021flow, zhu2024physics, botarelli2025using, wang2025simulating, hanna2022residual, toscano2025pinns}. However, many previous studies are limited to benchmark problems with analytic solutions or to flow past single obstacles, which have been used to assess the accuracy of newly developed methods.

The focus of this work is the consideration of a prototypical model of single-phase viscous fluid flow for low Reynolds numbers, namely the Stokes equation, defined on a domain with multiple perforations. This model is a standard benchmark problem in numerical homogenisation of composite materials, see e.g., \cite{bodaghi2016statistics, griebel2010homogenization}. Overall, we believe that it constitutes an interesting benchmark problem for PINNs as well, owing not only to application needs but also to the multi-objective PINN optimisation stemming from the governing fluid mechanics equations, the rich structure of these equations, and the multi-scale features of the problem arising in complex geometries. In addition, hybrid PINN approaches can be applied for understanding the theoretical properties of composite reinforcement materials at the micro- and macro-scales, where existing experimental \cite{yong2025permbenchmark, annamalai2025extension}, numerical \cite{syerko2023imagebased,jo2024permeability}, and other ML \cite{caglar2022deeplearning, jean2026imagebased, schmidt2025numerical} methods can be used as a source of data and prior knowledge for training PINNs with hybrid objectives. However, to the best of our knowledge, the PINN approximation for this problem has only been considered in \cite{korolev2026hybrid}. Even with the use of a state-of-the-art variant of PINNs based on sophisticated modifications of multilayer perceptrons, as proposed in \cite{wang2023expert}, accurate approximations in \cite{korolev2026hybrid} are obtained only for problems with a small number of perforations, while the method remains quite slow in terms of both the iteration count and the wall-clock time required for optimisation, and is hardly scalable to more complex settings. 

Despite their continued growing interest, PINNs are widely acknowledged to be difficult to train and often fail to produce accurate predictions, especially for problems which contain high-frequency or multi-scale features and when scaling to large problem domains \cite{zhu2019surrogate}. The training of PINNs is a multi-objective optimisation problem, involving the minimisation of losses related to the PDE residuals, boundary conditions and any data losses. Recent research has identified unbalanced gradients in the multi-objective optimisation and \textit{spectral bias} among the key factors contributing to these limitations \cite{rohrhofer2023data, wang2021understanding}. We recall that spectral bias refers to the tendency of neural networks to learn low-frequency features preferentially \cite{rahaman2019spectral, wang2021eigenvector}. Since fluid flow in highly perforated domains contains substantial high-frequency components, this bias makes the problem inherently challenging for neural networks to learn and scale to more demanding applications.

One of the major contributors to the computational challenge identified in \cite{korolev2026hybrid} for flow in highly perforated domains is the use of soft boundary constraints, i.e., a penalty approach is employed to enforce the no-slip boundary conditions on the perforations. This approach adds significant stiffness to the discrete gradient flow underlying PINN optimisation, degrades the approximation of derivatives, and consequently hinders accurate simulation of fluid flow in the microstructure created by the perforations. Encoding boundary conditions directly into neural network architectures as "hard" constraints provides an alternative to soft constraints, promising improvements in both approximation quality by enforcing boundary conditions exactly and computational performance by reducing the number of loss terms. However, practical difficulties arise in designing such neural network classes. Despite some progress in this direction \cite{sukumar2022exact, liu2022unified, straub2025hard, berg2018unified, hintermuller2026constrained} and an expanding range of applications \cite{wang2023exact, lu2021physics, xie2024physics, sun2020surrogate}, the explicit incorporation of boundary conditions into neural networks depends strongly on the specific PDE, including the order of the differential operator, the domain shape and boundary smoothness, multi-scale features, and related factors. Therefore, tailored boundary-imposition strategies in PINNs are required, as improper handling may compromise solution properties or even render a well-posed problem ill-posed. 


Recently, an increasing number of studies have used domain decomposition (DD) to reduce a complex problem into many smaller, simpler problems in order to mitigate spectral bias, while also allowing parallelisation across multiple GPUs \cite{klawonn2024domain}. Instead of using one large network over the entire domain, the domain is split into multiple subdomains, with a neural network placed in each subdomain. Various domain decomposition methods have been used, with overlapping or non-overlapping subdomains. In methods with non-overlapping domains, such as extended physics-informed neural networks (XPINNs) \cite{jagtap2020xpinn}, loss terms are introduced to ensure continuity at the interfaces between adjacent subdomains, meaning that continuity is only softly enforced, and these additional loss terms introduce difficulties on both the approximation and optimisation aspects. On the other hand, in DD methods with overlapping subdomains, the solution is a weighted sum of the subdomain network outputs, and continuity is strictly enforced through window functions or gate networks without introducing any additional interface loss terms. Finite basis physics-informed neural networks (FBPINNs) utilise window functions to combine the solution from the subdomain networks, inspired by classical finite element (FE) methods, where the solution to a differential equation is expressed as a sum of a finite set of basis functions \cite{moseley2023finite}. Augmented PINNs (APINNs) are similar to FBPINNs in using overlapping subdomains, but the domain decomposition itself is learned through trainable weighting of the subdomain outputs through a gate network \cite{hu2023augmented}. Although they provide flexibility in the domain decomposition, each subnetwork sees the entire domain as input, which can be computationally expensive with a large number of subdomains, hindering their ability to model large complex domains.

In this work, to efficiently capture the high-frequency flow within the microstructure, we adopt FBPINNs \cite{moseley2023finite} to our problem setting. FBPINNs are especially advantageous for multi-scale problems because the inputs to each subdomain network are normalised, scaling a high-frequency problem into many lower frequency problems, which directly helps reduce the influence of spectral bias. Furthermore, we use tailored hard constraints for the no-slip boundary condition to encode the geometry of the no-slip boundary directly into the neural network model. This effectively shapes the curvature of the viscous terms in the Stokes residual and eliminates the associated boundary loss terms. 
Backed by extensive numerical experiments and supporting theory, we show that hard-constrained FBPINNs are highly effective at mitigating spectral bias. Combined with tailored training techniques, including problem scaling, adaptive collocation point refinement, and loss balancing, we obtain an architecture that scales to more complex settings with multiple perforations and exhibits convergence that is less sensitive to the number of perforations.

The structure of the paper is as follows. In Section 2, we define the PINN framework for solving the Stokes problem in a highly perforated domain. Additionally, we discuss the importance of hard constraints for both the approximation and optimisation of the Stokes problem. We describe the FBPINN structure and the application of hard boundary constraints in Section 3. In this section, we also establish universal approximation properties of FBPINNs and present a Fourier analysis of the FBPINN ansatz, revealing two key mechanisms in its structure: frequency rescaling via localisation and normalisation, and convolution-induced spectral leakage induced by localisation of window functions. In Section 4, numerical experiments are carried out for varying numbers of perforations, highlighting the advantages of the hard-constrained FBPINN approach, even for large numbers of perforations. Finally, Section 5 concludes the paper with a summary of our findings.

\section{Physics-informed neural networks in perforated domains}

To better highlight the implications of our practical approach to enforcing hard constraints in PINNs on some of their important theoretical underpinnings, it will be convenient to introduce the standard notation related to the Hilbert spaces $L^{2}(\Omega)^{d}$ and $H^{k}(\Omega)^{d}$ of vector fields on $\Omega \subset \mathbb{R}^{d}$ (see, e.g. \cite{hunter2001applied}). First, for a scalar-valued function $v: \Omega \rightarrow \mathbb{R}$, we define its corresponding $L^{2}$ norm:  
\begin{align*}
\lVert v \rVert_{L^{2}(\Omega)} = \left(\int_{\Omega} |v(x)|^{2} \, dx \right)^{1/2},
\end{align*}
where $|\cdot| : \mathbb{R} \rightarrow \mathbb{R}_{\geq0}$ denotes the modulus function. We further use $\lVert \, \cdot \,  \rVert : \mathbb{R}^{d} \rightarrow \mathbb{R}_{\geq0}$ to denote the Euclidean norm. For a generic vector field $\boldsymbol{v} = (v_1,\dots,v_d)$ (for which we reserve bold letters), with scalar-valued component functions $v_i$ ($1 \le i \le d$) (for which we now reserve normal letters), we define
\begin{align*}
\|\boldsymbol{v}\|_{L^{2}(\Omega)^{d}}
:= \left( \sum_{i=1}^{d} \|v_i\|_{L^{2}(\Omega)}^{2} \right)^{1/2},
\qquad
\|\boldsymbol{v}\|_{H^{k}(\Omega)^{d}}
:= \left( \sum_{j=0}^{k} \int_{\Omega} \nabla^{j} \boldsymbol{v}(x) : \nabla^{j} \boldsymbol{v}(x)\, dx \right)^{1/2},
\end{align*}
where "$:$" denotes the Frobenius inner product between two tensors of the same order. We say that $\boldsymbol{v} \in L^{2}(\Omega)^{d}$ or $\boldsymbol{v} \in H^{k}(\Omega)^{d}$ (with $k\ge 1$), respectively, if the norm $\|\boldsymbol{v}\|_{L^{2}(\Omega)^{d}}$ or $\|\boldsymbol{v}\|_{H^{k}(\Omega)^{d}}$ is finite. The derivatives here are understood in the weak sense in general.

\subsection{Stokes equation in a perforated domain}\label{Sec 2.1}

Let $\Omega = D  \setminus \cup_{k=1}^{K} \,\mathcal{P}_{k}$ be the set obtained from an open, bounded, and connected domain $D \subset \mathbb{R}^d$ $(d=2,3)$ by removing a (relatively large) number $K$ of circular perforations $\mathcal{P}_{k} := \{\, x : \|x - c_{k}\| \le R \,\}$, where $c_{k}$ denotes the centre of each circle and $R$ denotes the common radius of all perforations. Hereafter, we also assume that $\Omega$ is connected. We consider the Stokes equation in such a highly perforated domain $\Omega$, which is given by the system
\begin{equation}\label{Stokes equation}
\begin{aligned}
- \mu_{\mathcal{D}} \Delta \boldsymbol{u} + \nabla p &= \boldsymbol{f} \quad \text{in } \Omega, \\
\operatorname{div} \boldsymbol{u} &= 0 \quad \text{in } \Omega, \\
\boldsymbol{u} &= 0 \quad \text{on } \partial \Omega^{\mathrm{p}}, \\
\boldsymbol{u} &= \boldsymbol{g} \quad \text{on } \partial \Omega^{\mathrm{w}},
\end{aligned}
\end{equation}
where the velocity field $\boldsymbol{u}: \Omega \rightarrow \mathbb{R}^{d}$ and the pressure function $p: \Omega \rightarrow \mathbb{R}$ are sought. The parameter $\mu_{\mathcal{D}}  \in \mathbb{R}_{+}$ denotes the dynamic viscosity of the fluid. The boundary $\partial \Omega^{\mathrm{p}}=\cup_{k=1}^{K} \, \partial \mathcal{P}_{k}$ corresponds to the collection of smooth surfaces $\partial \mathcal{P}_{k}$ (perforation boundaries) of the circular perforations within $\Omega$, while $\partial \Omega^{\mathrm{w}} := \partial \Omega \setminus \partial \Omega^{\mathrm{p}}$ denotes the outer walls of the domain. The vector field $\boldsymbol{f} \in L^{2}(\Omega)^{d}$ represents the body force acting on the fluid, and $\boldsymbol{g}$ is the prescribed velocity profile on $\partial \Omega^{\mathrm{w}}$ satisfying the compatibility condition
\begin{align}\label{compatibility condition}
\int_{\partial \Omega^{\mathrm{w}}}  \boldsymbol{g} \cdot \boldsymbol{\eta}^{\mathrm{w}} \  ds = \int_{\Omega}\operatorname{div} \boldsymbol{u} \ dx = 0,  
\end{align}
where $\boldsymbol{\eta}^{\mathrm{w}}$ is the outer normal vector to $\partial \Omega^{\mathrm{w}}$. We assume that $\boldsymbol{u} \in H^2(\Omega)^d$ and $ p \in H^1(\Omega) \cap L^{2}_{0}(\Omega)$; that is, $ \lVert p  \rVert_{H^{1}(\Omega)} < \infty$ and $\int_\Omega p(x) \, dx = 0$. Note that the latter mean normalisation results in a unique pressure function for \eqref{Stokes equation}. We note that the above regularity requirements for the velocity field and the pressure are necessary for an appropriate  PINN loss minimisation -- approximation error equivalence \cite{zeinhofer2025unified, mishra2023estimates, shin2023error}, yielding convergence in suitable Sobolev norms for PINNs; see also the next section for details. For example, these regularity assumptions are satisfied if the condition \eqref{compatibility condition} holds, $D$ is a convex polygon, and $\partial \Omega^{\mathrm{p}}$ is of class $C^{2}$; cf. \cite{wolf2022homogenization}. The latter may be achieved by assuming that the circular perforations neither intersect each other nor the boundary $\partial \Omega^{\mathrm{w}}$. If the perforations intersect, the intersection forms an acute corner that may further complicate the convergence of PINNs in the neighbouring region.

Fig.~\ref{fig:oscillations} shows the velocity magnitude $\lVert \boldsymbol{u}(x)\rVert$ obtained from finite element solutions (we defer the details of our Taylor--Hood finite element approximation to a later section) for an example 2D fluid flow problem \eqref{Stokes equation} with a viscosity of 0.1~Pa$\cdot$s and a prescribed velocity of $\boldsymbol{g} = (100,0)$~{\textmu}m/s through rectangular arrangements of 16 and 64 fibres with a diameter of 6~{\textmu}m. The plots of the velocity magnitude at $x = 50$~{\textmu}m clearly illustrate the multi-scale nature of the problem and the increase in high-frequency features with an increase in the number of fibres, making the solution of the Stokes equation in a highly perforated domain a suitable test case for studying the advantages of FBPINNs in mitigating spectral bias.

\begin{figure}
\centering
\includegraphics[width=0.7\linewidth]{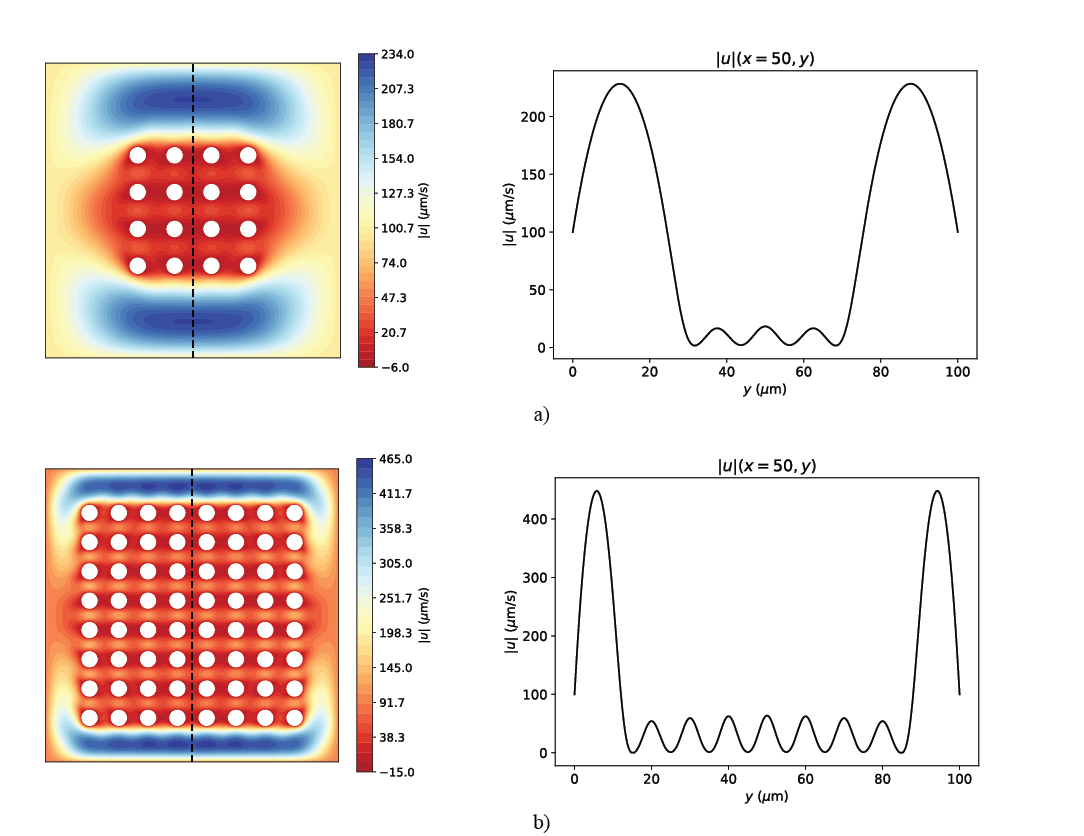}
\caption{\label{fig:oscillations}Plots of the velocity magnitude contours and the velocity magnitude at $x=50$~{\textmu}m for flow through a) 16 and b) 64 fibres.}
\end{figure}


\subsection{Soft and hard constraints in PINNs: Approximation and optimisation aspects}
Physics-informed neural networks approximate the velocity field $\boldsymbol{u}$ and the pressure $p$ using neural networks $\boldsymbol{u}_{\theta}$ and $p_{\psi}$ from suitably chosen neural network classes $\mathcal{NN}_{\Theta}^{\boldsymbol{g}} \subset H^{2}(\Omega)^{d}$ and $\mathcal{NN}_{\Psi}\subset H^{1}(\Omega) \cap L^{2}_{0}(\Omega)$, parameterised by $\theta \in \Theta \subseteq \mathbb{R}^{n_u}$ and $\psi \in \Psi \subseteq \mathbb{R}^{n_p}$, respectively, which yields the total number of neural network parameters $n = n_u +n_p$. For any $\boldsymbol{v}_{\theta} \in \mathcal{NN}_{\Theta}^{\boldsymbol{g}}$, we have $\restr{\boldsymbol{v}_{\theta}}{\partial \Omega^{\mathrm{w}}} = \boldsymbol{g}$, and hence the boundary condition on $\partial \Omega^\mathrm{w}$ for $\boldsymbol{u}_{\theta}$ is satisfied exactly by design. The neural networks are then inserted into the governing equation \eqref{Stokes equation}, yielding the residuals for the momentum equation, the divergence-free condition, and the no-slip boundary condition on the perforations. These residuals are minimised via combined  $L^2$-penalty objectives, resulting in the following optimisation problem
\begin{equation}\label{PINN objective}
\min_{u_\theta, \, p_\psi} \; E(\boldsymbol{u}_\theta, p_\psi) := 
 E_{r}(\boldsymbol{u}_\theta, p_\psi) + 
\lambda_{\text{div}} \, E_{\text{div}}(\boldsymbol{u}_\theta) + 
\lambda_{b} \, E_{b}(\boldsymbol{u}_\theta),
\end{equation}
where $\lambda_{\text{div}}, \lambda_b > 0$ are penalty parameters, and the individual penalty terms are given by:
\begin{equation*}
\begin{aligned}
E_{r}(\boldsymbol{u}_\theta, p_\psi) 
= \int_{\Omega} \left \lVert \boldsymbol{r}_{\theta, \psi}(x) \right\rVert^2 \, dx, \quad 
E_{\text{div}}(\boldsymbol{u}_\theta) 
= \int_{\Omega} \left| \operatorname{div} \boldsymbol{u}_{\theta}(x) \right|^2 \, dx, \quad 
E_{b}(\boldsymbol{u}_\theta) = \int_{\partial \Omega^p} \left \lVert \boldsymbol{u}_{\theta}(x) \right \rVert^2 \, dx,
\end{aligned}
\end{equation*}
where by $\boldsymbol{r}_{\theta, \psi} : \Omega \rightarrow \mathbb{R}^{d}$ we denote the momentum residual 
\begin{align}\label{momentum residual}
\boldsymbol{r}_{\theta, \psi}(x)=-\mu_{\mathcal{D}}  \,\Delta \boldsymbol{u}_{\theta}(x) + \nabla p_{\psi}(x) - \boldsymbol{f}(x).
\end{align}
The integrals above are approximated by Monte Carlo quadrature using random uniformly sampled collocation points $\{x^{r}_{i}\}_{i=1}^{M_{r}} \subset \Omega$ and $\{x^{b}_{i}\}_{i=1}^{M_{b}} \subset \partial\Omega^{\mathrm{p}}$. Reducing the dependence on neural networks to the generating parameters, we obtain
\begin{equation}\label{discrete losses}
\begin{aligned}
\mathcal{L}_{r}(\theta, \psi) 
= \frac{|\Omega|}{M_{r}}\sum_{i=1}^{M_{r}} \left\lVert \boldsymbol{r}_{\theta, \psi}(x_{i}^{r}) \right\rVert^2, \ \ 
\mathcal{L}_{\text{div}}(\theta) 
= \frac{|\Omega|}{M_{r}}\sum_{i=1}^{M_{r}} \left| \operatorname{div} \boldsymbol{u}_{\theta}(x_{i}^{r}) \right|^2, \ \ 
\mathcal{L}_{b}(\theta) = \frac{|\partial \Omega^{\mathrm{p}}|}{M_{b}}\sum_{i=1}^{M_{b}} \left\lVert \boldsymbol{u}_{\theta}(x_{i}^{b}) \right\rVert^2.
\end{aligned}
\end{equation}
The spatial derivatives appearing in these penalties are computed using automatic differentiation, as popularised in \cite{raissi2019physics}. Combining the losses yields the finite-dimensional multi-objective PINN optimisation problem
\begin{align}\label{PINN soft BC}
\min_{\theta, \, \psi} \;  \mathcal{L}(\theta, \psi): =  \mathcal{L}_{r}(\theta, \psi) + \lambda_{\mathrm{div}} \, \mathcal{L}_{\text{div}}(\theta)  +   \lambda_{b} \, \mathcal{L}_{b}(\theta),
\end{align}
which is the formulation introduced in \cite{raissi2019physics, lagaris1998artificial} with many possible extensions, see, e.g., \cite{toscano2025pinns} and references therein.

When training a physics-informed neural network, it is reasonable to ask what quality of approximation is expected, given a particular optimisation objective and the class of neural networks for approximation. These questions fall within approximation theory and numerical analysis, and several works have studied them in depth (see, e.g., \cite{zeinhofer2025unified, mishra2023estimates, shin2023error, de2024error, de2024numerical}), with a well-developed theory available for linear PDEs \cite{zeinhofer2025unified, de2024numerical}. A systematic analysis of the influence of soft and hard constraints on approximation is provided in \cite{zeinhofer2025unified}. From this viewpoint, we first consider the soft-constrained approach \eqref{PINN soft BC}, where the no-slip boundary condition is enforced via the penalty term. For $\boldsymbol{u}_{\theta} \in \mathcal{NN}_{\Theta}^{\boldsymbol{g}}$ and $p_{\psi} \in \mathcal{NN}_{\Psi}$, the following estimate for the approximation error is then expected \cite[Theorem 8]{zeinhofer2025unified}:
\begin{align}\label{H1/2 rate}
\lVert \boldsymbol{u} - \boldsymbol{u}_{\theta} \rVert_{H^{1/2}(\Omega)^{d}}^{2} + \lVert p - p_{\psi} \rVert_{[H^{1/2}(\Omega) \cap L^{2}_{0}(\Omega)]^{\ast}}^{2}  \lesssim \, \mathcal{L}(\theta, \psi) + \mathrm{quad}(\theta, \psi),
\end{align}
where the quadrature error $\mathrm{quad}(\theta, \psi) = E(\boldsymbol{u}_{\theta}, p_{\psi}) - \mathcal{L}(\theta, \psi)$ can be made arbitrarily small with high probability by using sufficiently many Monte Carlo collocation points; cf. \cite[Theorem 1]{zeinhofer2025unified}, also \cite{mishra2023estimates}. Therefore, if the loss \eqref{PINN soft BC} is small, the approximation error of PINNs is small as well. However, the (fractional) Sobolev norm for the velocity field approximation is rather weak, such that even the gradient $\nabla \boldsymbol{u}$ may not be well approximated in the $L^{2}$ norm. Similarly, the pressure approximation is controlled in the dual norm only, which is a very weak norm. As shown in \cite{muller2022notes}, the estimates of the Sobolev exponents in bounds of the above type can, in fact, be sharp. For the hard-constrained approach, consider a neural network class $\mathcal{NN}_{\Theta}^{\dagger} \subset H^{2}(\Omega)^{d}$ satisfying
\begin{align}\label{hard constrained class}
\mathcal{NN}_{\Theta}^{\dagger} = \{\boldsymbol{v}_{\theta} :  \restr{\boldsymbol{v}_{\theta}}{\partial \Omega^{\mathrm{w}}} = \boldsymbol{g} \  \mathrm{and} \   \restr{\boldsymbol{v}_{\theta}}{\partial \Omega^{\mathrm{p}}} = 0 \}. 
\end{align}
Then the following estimate \cite[Remark 13]{zeinhofer2025unified} for $\boldsymbol{u}_{\theta} \in \mathcal{NN}_{\Theta}^{\dagger}$ and $p_{\psi} \in \mathcal{NN}_{\Psi}$ is available
\begin{align}\label{H1 rate}
\lVert \boldsymbol{u} - \boldsymbol{u}_{\theta} \rVert_{H^{1}(\Omega)^d}^{2} + \lVert p - p_{\psi} \rVert_{L^{2}(\Omega)}^{2}  \lesssim \, \mathcal{L}(\theta, \psi) + \mathrm{quad}(\theta, \psi),
\end{align}
yielding more plausible norms for approximation, similar to those in FEM, e.g., in Taylor–Hood finite elements. The norms in \eqref{H1 rate} can be further strengthened to the $H^{2}$ norm for the velocity field and in the $H^{1}$ norm for the pressure, provided that the neural network class satisfies $\mathcal{NN}_{\Theta}^{\dagger} \subset H^{2}(\Omega)^{d}\cap H(\mathrm{div};\Omega)$, where $ H(\mathrm{div};\Omega) = \{ \boldsymbol{v} \in H^{1}(\Omega)^d : \mathrm{div} \  \boldsymbol{v} = 0\}$. For $\Omega \subset \mathbb{R}^{2}$, exact enforcement of the divergence-free condition in a neural network architecture can be achieved in some cases by introducing a scalar-valued stream function $\varphi$ such that $\boldsymbol{u}= \mathrm{curl} \,\varphi := ( \partial \varphi /  \partial x_{2}, - \,  \partial \varphi / \partial x_{1})$, and employing a neural network $\varphi_{\theta}$ to approximate $\varphi$, defining $\boldsymbol{u}_{\theta}=\mathrm{curl} \ \varphi_{\theta}$. However, $\varphi$ is only defined for simply connected domains, in which every closed curve can be continuously shrunk to a point. This topological property does not hold in the presence of perforations \cite{johnson2009numerical}, rendering the stream function formulation inapplicable for our case. Therefore, we enforce the divergence-free condition via the penalty, and only \eqref{H1 rate} can be expected. 
\begin{rem}
The ability to make the losses in \eqref{H1/2 rate} and \eqref{H1 rate} small can, in principle, be ensured by universal approximation results \cite{shin2023error, de2024numerical}. For this argument to hold, one requires $H^2$ regularity of the velocity field and $H^1$ regularity of the pressure. In particular, \cite{guhring2021approximation} (see also \cite[Theorem B7]{de2024error}) shows that any $\boldsymbol{w} \in H^3(\Omega)^d$ can be approximated arbitrarily well in the $H^{2}(\Omega)^d$ norm by a sufficiently large two-layer neural network $\boldsymbol{v}_{\theta}$ with $\tanh$ activation. 
Since $H^{3}(\Omega)^d$ is dense in $H^{2}(\Omega)^d$, for any $\varepsilon > 0$ and any $\boldsymbol{v} \in H^{2}(\Omega)^d$ there exists $\boldsymbol{w} \in H^{3}(\Omega)^d$ and a neural network $\boldsymbol{v}_{\theta}$ such that
\begin{align*}
\|\boldsymbol{v} - \boldsymbol{v}_{\theta}\|_{H^{2}(\Omega)^d}
&\le 
\|\boldsymbol{v} - \boldsymbol{w}\|_{H^{2}(\Omega)^d}
+ 
\|\boldsymbol{w} - \boldsymbol{v}_{\theta}\|_{H^{2}(\Omega)^d}
< \varepsilon.
\end{align*}
An analogous argument applies to the pressure variable, yielding approximation in the $H^{1}(\Omega)$ norm. 
\end{rem}
Despite universal approximation guarantees, the parameters $\theta$ and $\psi$ must in practice be obtained by solving the non-convex, highly nonlinear optimisation problem \eqref{PINN soft BC}, where the soft imposition of boundary conditions further complicates matters. Indeed, PINNs are typically trained by variants of (stochastic) gradient descent, whose generic form in our setting is:
\begin{equation}\label{gradient flow}
\begin{aligned}
\theta^{k+1} &= \theta^k - l_{\textrm{r}} \, \nabla_{\theta} \mathcal{L}(\theta^k, \psi^k) 
= \theta^k - l_{\textrm{r}} \left[ 
 \nabla_{\theta} \mathcal{L}_r(\theta^k, \psi^k) 
+ \lambda_{\rm div} \, \nabla_{\theta} \mathcal{L}_{\rm div}(\theta^k) 
+ \lambda_b \, \nabla_{\theta}\mathcal{L}_b(\theta^k)
\right], \\
\psi^{k+1} &
= \psi^k - l_{\textrm{r}}  \, \nabla_{\psi} \mathcal{L}_r(\theta^k, \psi^k),
\end{aligned}
\end{equation}
where $l_\textrm{r} >0$ is the learning rate. The discrete gradient flow above is known to be stiff, which is one of the major factors contributing to the failure of PINNs \cite{wang2021understanding}. Here, stiffness refers to large differences in the gradient scales of individual loss terms, causing gradient-based training to prioritise objectives with dominant gradients. In our examples, the ratio of $\lVert \nabla_{\theta} \mathcal{L}_{r}(\theta^{k}, \psi^{k})\rVert$ to $\lVert \nabla_{\theta} \mathcal{L}_{b}(\theta^{k})\rVert$ ranges between $10^5$ and $10^6$, which hinders the learning of the no-slip boundary condition and prevents the complex microstructure (high-frequency features) from even emerging in the underlying PDE solution. A common remedy is to balance the gradients in \eqref{gradient flow} by tuning the weights $\lambda=(\lambda_{\mathrm{div}},\lambda_b)$; see \cite{wang2023expert, wang2021understanding, bischof2025multi}. GradNorm is often used as a first choice and performs well on many benchmarks \cite{wang2023expert}. It consists of scaling the gradients based on their relative magnitudes:
\begin{equation}\label{standard loss balancing}
\begin{aligned}
\hat{\lambda}_{\text{div}} & = \frac{\left[\lVert \nabla_\theta \mathcal{L}_r(\theta, \psi) \rVert^2 + \lVert \nabla_\psi \mathcal{L}_r(\theta, \psi) \Vert^2\right ]^{1/2} + \lVert \nabla_\theta \mathcal{L}_{\rm div}(\theta) \rVert + \lVert \nabla_\theta \mathcal{L}_b(\theta) \rVert }{\lVert \nabla_\theta \mathcal{L}_{\text{div}}(\theta) \rVert}, \\
\hat{\lambda}_{b} & =  \frac{\left[\lVert \nabla_\theta \mathcal{L}_r(\theta, \psi) \rVert^2 + \lVert \nabla_\psi \mathcal{L}_r(\theta, \psi) \Vert^2\right ]^{1/2} + \lVert \nabla_\theta \mathcal{L}_{\rm div}(\theta) \rVert + \lVert \nabla_\theta \mathcal{L}_b(\theta) \rVert }{\lVert \nabla_\theta \mathcal{L}_{b}(\theta) \rVert},
\end{aligned}
\end{equation}
and updating the existing weights every $100$–$1000$ iterations using a moving average $\lambda_{\mathrm{new}} = \alpha \, \lambda_{\mathrm{old}} + (1- \alpha) \, \hat{\lambda}$,
where $\hat{\lambda} = (\hat{\lambda}_{\text{div}}, \,  \hat{\lambda}_{b})$, and the parameter $\alpha=0.9$ is recommended. 

While the loss balancing approach \eqref{standard loss balancing} helps stabilise the discrete gradient flow, it comes at a cost of also scaling the Hessian of $\mathcal{L}$. This can introduce complicated curvature in the loss landscape that is difficult to navigate using first-order optimisation algorithms, which is a well-known issue in penalty approaches \cite{wright1999numerical}. Precisely, recall that the Hessian of $\mathcal{L}$ with respect to $(\theta,\psi)$ is given by the symmetric matrix 
\begin{align}\label{Hessian}
\nabla^2 \mathcal{L}(\theta,\psi)
=\begin{pmatrix}
 \nabla^2_{\theta\theta} \mathcal{L}_r(\theta,\psi)
+ \lambda_{\mathrm{div}} \nabla^2_{\theta\theta} \mathcal{L}_{\mathrm{div}}(\theta)
+ \lambda_b \nabla^2_{\theta\theta} \mathcal{L}_b(\theta)
&
\nabla^2_{\theta\psi} \mathcal{L}_r(\theta,\psi)
\\[1ex]
 \nabla^2_{\psi\theta} \mathcal{L}_r(\theta,\psi)
&
 \nabla^2_{\psi\psi} \mathcal{L}_r(\theta,\psi)
\end{pmatrix}.
\end{align}
The following estimate for the loss values between two consecutive optimisation steps then holds \cite{wang2021understanding}: 
\begin{align}\label{loss decrease}
\mathcal{L}(\theta^{k+1}, \psi^{k+1}) - \mathcal{L}(\theta^{k}, \psi^{k}) = l_\textrm{r} \, \lVert  \nabla \mathcal{L}(\theta^{k},\psi^{k}) \Vert^{2} \, \bigg( -1  + \frac{l_\textrm{r}}{2} \sum_{i=1}^{n} \nu_{i} \, y_{i}^{2}\bigg) 
\end{align}
for some $\boldsymbol{y}=(y_{1},\dots, y_{n})$ with $\lVert \boldsymbol{y}\rVert = 1$, and $\nu_{1} \leq \nu_{2} \leq \dots \nu_{n}$ are the eigenvalues of $\nabla^2 \mathcal{L}(\tilde{\theta},\tilde{\psi})$, where $\tilde{\theta} = t \theta^{k} + (1-t) \theta^{k+1}$ and $\tilde{\psi} = t \psi^{k} + (1-t) \psi^{k+1}$ for some $t\in (0,1)$. It is well-known that \eqref{Hessian} is highly ill-conditioned \cite{wang2021understanding, rathore2024challenges}, especially for complex PDEs, as measured by the ratio of its largest to smallest eigenvalues. As a result, many of these eigenvalues are large, and scaling by $\lambda$ can further increase them by orders of magnitude, potentially rendering the right-hand side of \eqref{loss decrease} positive.  Therefore, even with $-\nabla \mathcal{L}(\theta^{k}, \psi^{k})$ defining the descent direction for $\mathcal{L}(\theta^{k}, \psi^{k})$, one may either require a small learning rate to obtain a decrease in \eqref{loss decrease} (too small for efficient optimisation), or face instabilities that prevent the optimisation from reaching good accuracy in a reasonable amount of time \cite{wang2021understanding}. We also refer to \cite{DBLP:conf/iclr/RyckBMB24} for a more refined analysis of the ill-conditioning in PINNs, where, in particular, the conditioning of \eqref{Hessian} is related to the conditioning of the underlying differential operator (specifically its square, due to the least-squares formulation) in the loss and the employed neural network architecture (its neural tangent kernel); within this framework, it is demonstrated that hard constraints act as a preconditioner for PINNs. 

Last but not least, we note that it is impossible to identify a unique PDE solution without satisfying boundary conditions. In PINNs, enforcing these conditions via soft penalties may introduce conflicting gradients between the boundary loss terms and the PDE residual terms, contributing to the presence of multiple local minima in the optimisation landscape \cite{liu2024config, yu2020gradient}. For example, although the no-slip loss in our Stokes setting directly penalises only velocity values on $\partial\Omega^{\mathrm p}$, satisfying this constraint together with the PDE residual requires an accurate representation of the velocity derivatives near the perforation boundaries. A large boundary penalty may then bias the network toward simpler, overly damped approximations that reduce $\mathcal{L}_{b}$ but deteriorate the residual loss $\mathcal{L}_{r}$, leading to conflicting optimisation directions. Similarly, an update that improves incompressibility can simultaneously worsen the momentum balance, again leading to gradient conflicts and unstable training. Several approaches, including second-order optimisation \cite{wang2025gradient} and gradient surgery \cite{liu2024config, yu2020gradient, xu2023transfer}, have been proposed as potential means of minimising these conflicts.

Despite the partial success of modern gradient-based optimisers such as Adam \cite{kingma2014adam} in navigating complex loss landscapes, we believe that spectral bias, gradient stiffness, and Hessian ill-conditioning remain major obstacles to scaling \eqref{Stokes equation} to complex settings with multiple perforations using PINNs. Enforcing the no-slip condition through suitable hard-constrained neural network classes, thereby eliminating the boundary penalty term, simplifies the optimisation problem: it reduces stiffness, improves Hessian conditioning, alleviates boundary-induced gradient conflicts, and promotes convergence in stronger norms.


\section{Hard-constrained FBPINNs}

In this section, we consider the practical design of neural network classes with hard constraints, as given in \eqref{hard constrained class}, using FBPINNs, and discuss several nuances and tricks related to FBPINN training. Before proceeding, however, we study some important theoretical underpinnings of FBPINNs as a domain decomposition framework, specifically their universal approximation properties and their capability to mitigate spectral bias. For the latter, we employ Fourier analysis.

\subsection{FBPINNs as a domain decomposition framework}\label{section:DD}

We begin by defining the unconstrained finite-basis neural network classes. Let $\Omega \subset D$ denote the perforated domain, where $D$ is an enclosing rectangular domain. We consider an overlapping cubic partition $\{D_i\}_{i=1}^{N}$ of $D$. The amount of overlap between neighbouring subdomains is controlled by the overlap ratio $\delta \geq 1$, where $\delta = 1$ corresponds to the non-overlapping case. 
The boundaries of each rectangular subdomain are defined as follows
\begin{align}
x^{\min}_{i,j} = \frac{x^{\max}_j - x^{\min}_j}{N_j-1} \left(i-\frac{\delta}{2}\right), \quad  x^{\max}_{i,j} = \frac{x^{\max}_j - x^{\min}_j}{N_j-1} \left(i+\frac{\delta}{2}\right),
\end{align}
where $N_j$ is the number of subdomains in each direction, with $\prod_{j=1}^{d} N_j=N$. 
Let $\{\omega_i\}_{i=1}^{N}$ be a collection of window functions forming a partition of unity subordinate to $\{D_i\}_{i=1}^{N}$, i.e., $0 \leq \omega_i \leq 1$ and $\sum_{i=1}^{N} \omega_i = 1$ on $D$, with $\operatorname{supp}(\omega_i)\subset D_i$. For each $i$, define the local perforated subdomains $\Omega_i := D_i \cap \Omega$. Since $\Omega \subset D$ and $\bigcup_{i=1}^{N} D_i = D$, we obtain $\bigcup_{i=1}^{N} \Omega_i = \Omega$, so that $\{\Omega_i\}_{i=1}^{N}$ forms an overlapping decomposition of $\Omega$. We restrict the partition of unity from $D$ to $\Omega$ by defining $\omega_i^{r} := \omega_i|_{\Omega}$. Clearly, $\{\omega_i^{r}\}_{i=1}^{N}$ still form a partition of unity on $\Omega$. Moreover, $\operatorname{supp}(\omega_i^{r})=\operatorname{supp}(\omega_i)\cap \Omega\subset \Omega_i$. Hence, $\{\omega_i^{r}\}_{i=1}^{N}$ is subordinate to the decomposition $\{\Omega_i\}_{i=1}^{N}$ of $\Omega$. For simplicity of notation, we subsequently write $\omega_i$ in place of $\omega_i^{r}$.

Following \cite{moseley2023finite}, the FBPINN ansatz for $\boldsymbol{v}_{\theta} \in \mathcal{NN}_{\Theta}$, where $\boldsymbol{v}_{\theta} : \Omega \rightarrow \mathbb{R}^{d}$, is given by
\begin{align}\label{FBPINN ansatz}
\boldsymbol{v}_{\theta}(x)
=
\sum_{i=1}^{N}
\omega_i(x)\,
\mathrm{unnorm}
\circ
\boldsymbol{NN}_i
\circ
\mathrm{norm}_i(x),
\qquad x \in \Omega,
\end{align}
where $\boldsymbol{NN}_i(\mathrm{norm}_i(\cdot);\theta_i) : \Omega_{i} \rightarrow \mathbb{R}^{d}$ denotes a local neural network assigned to $\Omega_i$, $\theta = \{\theta_i\}_{i=1}^{N}$ is the collection of network parameters, and the normalisation functions $\mathrm{norm}_i$ are defined using the ``parent'' cubic subdomains $D_i = \prod_{j=1}^{d} [x^{\min}_{i,j}, x^{\max}_{i,j}]$ as follows:
\begin{align}\label{scaling transform}
\mathrm{norm}_i : D_i \to [-1,1]^d,
\qquad
x \mapsto \tilde{x} := \frac{x - \mu_i}{\sigma_i}.
\end{align}
Here, $\mu_i =(\mu_{i,1},\cdots, \mu_{i,d}) \in\mathbb R^d$ and $\sigma_i \in\mathbb R$ are
\begin{align}
\mu_i = \frac{x^{\max}_i + x^{\min}_i}{2},
\qquad
\sigma_i
=
\frac{x^{\max}_{i,j}-x^{\min}_{i,j}}{2},
\end{align}
where, due to the isotropy of the cubic subdomains, $\sigma_i$ is independent of the coordinate direction $j$. In general, it is preferable to choose smooth bijective functions (diffeomorphisms) for the normalisation \eqref{scaling transform}. The $\mathrm{unnorm}$ function in \eqref{FBPINN ansatz} maps the local neural network outputs to be within the range $[-1,1]$, and thereby also balances the training of the velocity and pressure networks. Lastly, a variety of window functions can be used in \eqref{FBPINN ansatz}, as provided by \cite{moseley2023finite}. The current study uses the cosine window function
\begin{equation}\label{window functions}
\omega_i(x)
=
\frac{\hat{\omega}_i(x)}
{\sum_{k=1}^{N}\hat{\omega}_k(x)},
\qquad
\hat{\omega}_i(x)
=\prod_{j=1}^{d}
\left(
\frac{1+\cos\left(\pi \tilde{x}_{j}\right)}{2}
\right)^2,
\end{equation}
where $\tilde{x}_{j}=(x_j-\mu_{i,j})/\sigma_i$ denotes the $j$-th normalised coordinate associated with the subdomain $\tilde{\Omega}_i:=\mathrm{norm}_i(\Omega_i)$.

The ansatz \eqref{FBPINN ansatz} also possesses universal approximation properties. 
\begin{proposition}[Universal approximation]\label{UA proposition}
Let $\Omega \subset D \subset \mathbb{R}^{d}$ be a perforated domain, and let $\{D_i\}_{i=1}^{N}$ be an overlapping cubic subdomain partition of $D$. Assume that the maximum diameter of the perforations is strictly smaller than the side length of each subdomain $D_i$, and let $\mathcal{NN}_{\Theta}$ denote the class of FBPINN ansatz functions\footnote{we use the  unnormalisation function $\mathrm{unnorm}=\mathrm{id}$; however, the choice of unnormalisation does not affect the approximation capabilities.} of the form \eqref{FBPINN ansatz}. Then, for every $\boldsymbol v \in H^{s}(\Omega)^d$ and every $\varepsilon > 0$, there exists $\boldsymbol v_\theta \in \mathcal{NN}_{\Theta}$ such that $\lVert \boldsymbol v - \boldsymbol v_\theta \rVert_{H^{s}(\Omega)^d} < \varepsilon$.
\end{proposition}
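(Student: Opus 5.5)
The plan is a standard partition-of-unity argument combined with the universal approximation result recalled in the Remark above. Each local network is approximated separately, in a sufficiently strong Sobolev norm, and the window functions are treated as fixed smooth multipliers.

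\textbf{Step 1: regularity of the windows on $\Omega$.} The window functions \eqref{window functions} are smooth: each $\hat\omega_i$ is a product of $C^\infty$ cosine terms. For the normalised $\omega_i$ we need $\sum_k \hat\omega_k \ge c > 0$ on $\overline\Omega$. This holds because every point of $D$ lies in the interior of some $D_i$ thanks to the overlap $\delta>1$. Hence each $\omega_i \in C^\infty(\overline{\Omega})$ and $\|\omega_i\|_{W^{s,\infty}(\Omega)} \le C_\omega$, with $C_\omega$ depending only on the fixed decomposition. By Leibniz's rule, multiplication by $\omega_i$ is then a bounded operator on $H^s(\Omega_i)^d$, with norm at most $C(s,d)\,C_\omega$.

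\textbf{Step 2: localisation.} Write $\boldsymbol v = \sum_i \omega_i \boldsymbol v$ on $\Omega$. For each $i$, consider $\boldsymbol v|_{\Omega_i} \in H^s(\Omega_i)^d$. Here the geometric assumption enters. Because each perforation is smaller than the side of $D_i$ and the perforations are disjoint closed balls, $\Omega_i = D_i\setminus\bigcup_k\mathcal P_k$ is a Lipschitz domain. This requires the perforations to be non-tangent to $\partial D_i$; if they are tangent, I would slightly enlarge or shift $D_i$, which does not change the support of $\omega_i$. Consequently there is a bounded Stein extension $E_i : H^s(\Omega_i)^d \to H^s(\mathbb R^d)^d$. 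Composing with the affine diffeomorphism $\mathrm{norm}_i^{-1}$ transfers this to a function $\tilde{\boldsymbol w}_i := (E_i\boldsymbol v)\circ \mathrm{norm}_i^{-1} \in H^s(Q)^d$ on a bounded cube $Q \supset [-1,1]^d$. Affine maps preserve $H^s$ up to constants depending on $\sigma_i$.

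\textbf{Step 3: network approximation.} Apply the universal approximation result \cite{guhring2021approximation} in the form of the Remark: first approximate $\tilde{\boldsymbol w}_i$ by a smoother function using density of $H^{s+1}$ in $H^s$, then approximate that by a tanh network in the $H^s([-1,1]^d)$ norm. This yields a network $\boldsymbol{NN}_i$ with
\[
\|\tilde{\boldsymbol w}_i-\boldsymbol{NN}_i\|_{H^s([-1,1]^d)^d}<\eta .
\]
Pulling back through $\mathrm{norm}_i$ and restricting to $\Omega_i$ gives
\[
\|\boldsymbol v-\boldsymbol{NN}_i\circ\mathrm{norm}_i\|_{H^s(\Omega_i)^d}\le C_{\sigma_i}\,\eta ,
\]
with unnorm taken to be the identity. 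Then, since $\operatorname{supp}\omega_i\cap\Omega\subset\Omega_i$,
\[
\|\boldsymbol v-\boldsymbol v_\theta\|_{H^s(\Omega)^d} \le \sum_{i=1}^N \|\omega_i(\boldsymbol v-\boldsymbol{NN}_i\circ\mathrm{norm}_i)\|_{H^s(\Omega_i)^d} \le N\, C(s,d)\,C_\omega \max_i C_{\sigma_i}\,\eta .
\]
Choosing $\eta$ small enough makes the right-hand side smaller than $\varepsilon$.

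\textbf{Main obstacle.} The delicate point is Step 2: making rigorous that $\Omega_i$ admits an $H^s$ extension, or otherwise circumventing it. The fallback I would try first avoids local domains altogether. Extend $\boldsymbol v$ once from $\Omega$ to $H^s(\mathbb R^d)^d$, which is legitimate because $\Omega$ is a convex polygon with disjoint smooth holes and hence Lipschitz. Then restrict this extension to each $D_i$ before approximating. In that version the size condition on the perforations is needed only to guarantee $\Omega_i\neq\emptyset$ and the Lipschitz structure, while the partition-of-unity estimate in Step 3 is unchanged.
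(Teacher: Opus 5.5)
Your argument is correct and is essentially the paper's proof. Both use a partition of unity, transfer of $\boldsymbol v|_{\Omega_i}$ through the affine normalisation, density of a smoother Sobolev space followed by the G\"uhring--Raslan $\tanh$ approximation result, and Leibniz-type constants for multiplication by $\omega_i$. Your extension step only makes explicit the reduction to the cube $[-1,1]^d$ that the paper leaves implicit. Your global-extension fallback is in fact the cleaner version: it needs only $\Omega$ to be Lipschitz, so, contrary to your last sentence, neither the Lipschitz property of $\Omega_i$ nor the perforation-size assumption is needed there.

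One caveat, which the paper's constant $C_{\omega_i}$ shares: the cosine windows truncated to $D_i$ are not $C^\infty$ but only $W^{4,\infty}$, since they vanish to fourth order on $\partial D_i$. The multiplier bound in your Step 1 is therefore justified only for $s\le 4$, which still covers the cases $s=1,2$ the paper actually uses.
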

\begin{proof}
See~\ref{APP1}
\end{proof}
\noindent
Despite the qualitative nature of Proposition~\ref{UA proposition}, one practical conclusion is that it is desirable to avoid subdomain partitions in which the boundary of a perforation intersects a subdomain face near its diameter, since this may create acute corners (and hence non-Lipschitz boundaries) in the resulting subdomain elements of the partition, making approximation with smooth neural networks more challenging.

Multilayer perceptron neural networks are known for their spectral bias, as they preferentially learn low-frequency components while delaying convergence on high-frequency components. FBPINNs, however, are capable of efficiently mitigating spectral bias due to domain decomposition principles and the scaling induced by the normalisation \eqref{scaling transform}. Subdomain rescaling mitigates spectral bias by transforming a high-frequency approximation problem into a collection of lower-frequency local problems, which are easier to learn. This frequency transformation can be characterised explicitly using the Fourier transform. For the latter, we use the following Fourier transform convention
\begin{align}\label{Fourier transform}
\mathcal{F}_{x}[f](\xi) = \int_{\mathbb{R}^d} f(x)e^{-2\pi \mathrm{i}x\cdot \xi}\,dx, 
\end{align}
where $\mathrm{i} = \sqrt{-1}$ denotes the imaginary unit. For vector-valued functions, the Fourier transform is applied componentwise. Since each window function $\omega_i$ is compactly supported in $\Omega_i$, we regard the windowed local contributions as functions on $\mathbb R^d$ by extending them by zero outside their supports. Consequently, the FBPINN ansatz \eqref{FBPINN ansatz} is compactly supported in $\Omega$, and its Fourier transform is well-defined componentwise on $\mathbb R^d$. 
\begin{theorem}[Fourier transform of the FBPINN ansatz]\label{Fourier transform proposition}
Consider the FBPINN ansatz \eqref{FBPINN ansatz}, together with the local normalisation map \eqref{scaling transform}. Furthermore, assume that the output unnormalisation is linear with $\alpha_{\mathrm{un}}>0$:
\begin{align}
\mathrm{unnorm}(z)=\alpha_{\mathrm{un}} z.
\end{align}
Then, for every frequency $\xi \in \mathbb{R}^{d}$, the Fourier transform of \eqref{FBPINN ansatz} is given by
\begin{align}\label{FBPINN in spectral domain}
\mathcal{F}[\boldsymbol{v}_\theta](\xi)
=
\alpha_{\mathrm{un}} 
\sum_{i=1}^{N}
\sigma_{i}^d e^{- 2\pi \mathrm{i} \, \mu_{i}\cdot\xi}
\bigg( \mathcal{F}_{\tilde{x}}[\tilde{\omega}_{i}] \ast \mathcal{F}_{\tilde{x}}[\boldsymbol{NN}_{i}] \bigg) (\sigma_{i} \xi),
\end{align}
where $\mathcal{F}_{\tilde{x}}$ denotes the Fourier transform with respect to the local (normalised) variable $\tilde{x}$, $\tilde \omega_{i}(\tilde{x}): = \omega_{i}(\mu_{i} + \sigma_{i} \tilde{x})$, and the convolution is given by
\begin{align}\label{convolution in F space}
\bigg( \mathcal{F}_{\tilde{x}}[\tilde{\omega}_{i}] \ast \mathcal{F}_{\tilde{x}}[\boldsymbol{NN}_{i}] \bigg) (\sigma_{i} \xi) =   \int_{\tilde{\Omega}_i} \omega_i(\mu+\sigma_{i}\tilde{x}) \boldsymbol{NN}_i(\tilde{x}) e^{-2 \pi \mathrm{i} \, \tilde{x}\cdot(\sigma_{i} \xi)} \,d\tilde{x}.
\end{align}
Since $\tilde{\omega}_i$ is scalar-valued and $\boldsymbol{NN}_i$ is vector-valued, the convolution is understood componentwise.
\end{theorem}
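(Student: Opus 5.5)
The proof is a direct computation: use linearity of the Fourier transform, then change variables into the local normalised coordinates of each subdomain, then apply the convolution theorem.

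\emph{Step 1 (linearity).} Since $\mathrm{unnorm}(z)=\alpha_{\mathrm{un}}z$ and the ansatz \eqref{FBPINN ansatz} is a finite sum, linearity of \eqref{Fourier transform} gives
\begin{align*}
\mathcal{F}[\boldsymbol v_\theta](\xi)=\alpha_{\mathrm{un}}\sum_{i=1}^N \mathcal{F}_x\big[\omega_i\,\boldsymbol{NN}_i\circ\mathrm{norm}_i\big](\xi).
\end{align*}
Each term is a compactly supported, bounded function extended by zero outside $\operatorname{supp}\omega_i\subset\Omega_i$. It is therefore in $L^1(\mathbb R^d)$, and its transform is just an integral over $\Omega_i$.

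\emph{Step 2 (change of variables).} In the $i$-th term I substitute $x=\mu_i+\sigma_i\tilde x$, which is the inverse of \eqref{scaling transform}. Its Jacobian is $\sigma_i^d$, and it maps $\tilde\Omega_i=\mathrm{norm}_i(\Omega_i)$ onto $\Omega_i$. The exponent splits as
\begin{align*}
e^{-2\pi\mathrm{i}x\cdot\xi}=e^{-2\pi\mathrm{i}\mu_i\cdot\xi}\,e^{-2\pi\mathrm{i}\tilde x\cdot(\sigma_i\xi)}.
\end{align*}
This yields
\begin{align*}
\sigma_i^d e^{-2\pi\mathrm{i}\mu_i\cdot\xi}\int_{\tilde\Omega_i}\tilde\omega_i(\tilde x)\boldsymbol{NN}_i(\tilde x)e^{-2\pi\mathrm{i}\tilde x\cdot(\sigma_i\xi)}\,d\tilde x,
\end{align*}
which is exactly the right-hand side of \eqref{convolution in F space}, with $\tilde\omega_i(\tilde x)=\omega_i(\mu_i+\sigma_i\tilde x)$. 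The integral is the local Fourier transform $\mathcal{F}_{\tilde x}[\tilde\omega_i\boldsymbol{NN}_i]$ evaluated at $\sigma_i\xi$, which is the scaling/modulation rule. The prefactors $\sigma_i^d$ and $e^{-2\pi\mathrm{i}\mu_i\cdot\xi}$ come from the dilation and the translation, respectively.

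\emph{Step 3 (convolution theorem).} It remains to identify $\mathcal{F}_{\tilde x}[\tilde\omega_i\boldsymbol{NN}_i]$ with $\mathcal{F}_{\tilde x}[\tilde\omega_i]\ast\mathcal{F}_{\tilde x}[\boldsymbol{NN}_i]$, componentwise. This step is the main obstacle: $\boldsymbol{NN}_i$ is a smooth network defined on all of $\mathbb R^d$ but is not integrable there, so its global Fourier transform need not exist as a function.

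I would resolve this by replacing $\boldsymbol{NN}_i$ with $\chi_i\boldsymbol{NN}_i$. Here $\chi_i\in C_c^\infty(\mathbb R^d)$ is a cutoff that equals $1$ on $\operatorname{supp}\tilde\omega_i$ and is supported in a neighbourhood of $[-1,1]^d$. The product $\tilde\omega_i\,\chi_i\boldsymbol{NN}_i$ is unchanged, and both factors are now in $L^1\cap L^2$. The product rule $\widehat{fg}=\hat f\ast\hat g$ then holds for $f,g\in L^2$ with $fg\in L^1$, by Plancherel/Parseval. Alternatively one can read $\mathcal{F}_{\tilde x}[\boldsymbol{NN}_i]$ in the tempered-distribution sense, pairing it with the Schwartz-class-like $\mathcal{F}[\tilde\omega_i]$.

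Either way, the convolution is by construction equal to the explicit integral in \eqref{convolution in F space}, which is how the statement defines it. Consequently the identity \eqref{FBPINN in spectral domain} follows rigorously from Steps 1 and 2 alone, and Step 3 only justifies the convolution notation. Summing over $i$ completes the proof.
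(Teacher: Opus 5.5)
Your proposal is correct and follows the same route as the paper's proof. Both use linearity together with $\operatorname{supp}\omega_i\subset\Omega_i$, then the substitution $x=\mu_i+\sigma_i\tilde x$, which gives the factor $\sigma_i^d e^{-2\pi\mathrm{i}\mu_i\cdot\xi}$ and the local transform evaluated at $\sigma_i\xi$, and finally the product--convolution theorem.

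Your cutoff $\chi_i$, combined with the $L^2$ (Parseval) form of that theorem, fills a gap the paper leaves open, since the paper applies the theorem even though $\boldsymbol{NN}_i\notin L^1(\mathbb{R}^d)$. Only your parenthetical distributional alternative is misstated. $\mathcal{F}_{\tilde x}[\tilde\omega_i]$ is not Schwartz-like, because $\tilde\omega_i$ vanishes on the perforations and is only finitely smooth at the subdomain edges, so the cutoff argument is the one to keep.
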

\begin{proof}
See~\ref{APP2}
\end{proof}
\noindent
Before interpreting Theorem~\ref{Fourier transform proposition}, we briefly recall the uncertainty principle from Fourier analysis. The uncertainty principle states that a function cannot be simultaneously highly localised in both physical space and frequency (Fourier) space. For $f\in L^2(\mathbb R^d)$ with finite second moments in both physical and Fourier space, one classical form of the uncertainty principle is given by the inequality 
\begin{align}\label{uncertainty principle}
\left(
\int_{\mathbb R^d} \lVert x \rVert ^2  \, |f(x)|^2\,dx
\right)
\left(
\int_{\mathbb R^d} \lVert \xi \rVert^2  \, |\mathcal{F}_{x}[f](\xi)|^2\,d\xi
\right)
\geq C \|f\|_{L^2}^4,
\end{align}
where $C>0$ is some constant (see, e.g. \cite{folland1997uncertainty}). Here, the first integral measures the spatial spread of $f$, while the second measures the spread of its Fourier transform $\mathcal{F}_{x}[f]$. The inequality therefore quantifies the trade-off between localisation in physical space and localisation in Fourier space: for $f\neq 0$, the product of these two measures of spread is bounded away from zero. Thus, one of the two quantities may be made small only at the expense of increasing the other. In particular, stronger localisation in physical space necessarily induces spreading in Fourier space, and vice versa.

Through the lens of Fourier analysis, we now observe that learning with FBPINNs involves two competing spectral mechanisms. First, the spectral representation \eqref{FBPINN in spectral domain} reveals that the frequency variable $\xi\in\mathbb R^d$ enters the local networks in the rescaled form $\sigma_i \xi$. Since the local subdomain scales are typically much smaller than the global domain scale, the effective frequencies $\sigma_i\xi$ seen by the local subnetworks are correspondingly reduced. Consequently, high-frequency features in the global coordinates $x$ appear as lower-frequency features in the local coordinates $\tilde{x}$, thereby providing an efficient mechanism for mitigating spectral bias. Second, multiplication by the local window function induces convolution in Fourier space \eqref{convolution in F space}. Using the definition of convolution, we obtain
\begin{align}\label{convolution 2}
\bigg(
\mathcal{F}_{\tilde{x}}[\tilde{\omega}_i]
*
\mathcal{F}_{\tilde{x}}[\boldsymbol{NN}_i]
\bigg)(\sigma_i\xi)
=
\int_{\mathbb R^d}
\mathcal{F}_{\tilde{x}}[\tilde{\omega}_i](\sigma_i\xi-\zeta)
\mathcal{F}_{\tilde{x}}[\boldsymbol{NN}_i](\zeta)
\,d\zeta.
\end{align}
Hence, the spectral contribution at frequency $\sigma_i\xi$ depends not only on the local neural network spectrum at the same frequency, but also on neighbouring frequencies $\zeta$, weighted by the Fourier transform of the window function. The Fourier transform $\mathcal{F}_{\tilde{x}}[\tilde{\omega}_i]$ therefore acts as a spectral coupling kernel. By the uncertainty principle, stronger spatial localisation of the window function $\tilde{\omega}_i$ necessarily produces broader spectral support for $\mathcal{F}_{\tilde{x}}[\tilde{\omega}_i]$. Consequently, the convolution kernel spreads over a wider range of frequencies, increasing frequency mixing and spectral leakage between neighbouring spectral modes. To demonstrate the application of the uncertainty principle, in Fig.~\ref{fig:localisation principle}, we plot 1D analogues of the (overlapping) window functions \eqref{window functions} on the interval $[0,1]$ for both a coarse ($N=3$) and a finer ($N=9$) subdomain partition, using small and large overlap ratios $\delta=1.1$ and $\delta=2.5$. One can clearly observe that smaller overlaps introduce fatter tails in the Fourier spectrum of a representative window (chosen to be the middle window of the partition in Fig. \ref{fig:localisation principle}) -  an effect that is further intensified for finer subdomain partitions. At the same time, coarser subdomain partitioning and larger overlaps localise the spectrum more effectively around zero and flatten the spectral tails.
\begin{figure}
    \centering
    \includegraphics[width=0.95\linewidth]{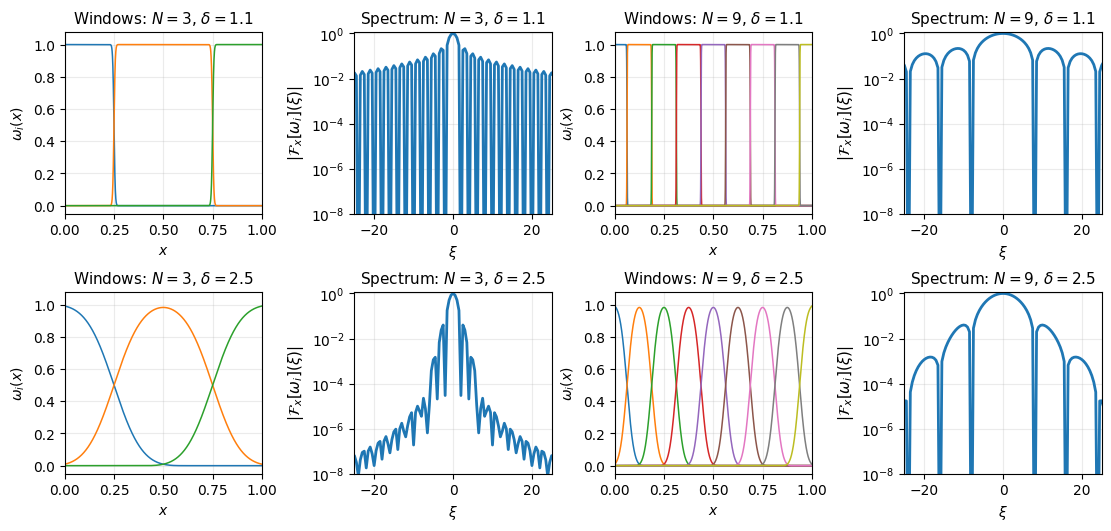}
    \caption{Effect of subdomain partition fineness and overlap on windows and their Fourier spectra: shrinking physical support of the window functions by either increasing the number of subdomains $N$ or decreasing the overlap $\delta$ broadens their representative Fourier spectrum. The plotted spectra correspond to the Fourier spectra of the windows located at the centres of the respective partitions.}
    \label{fig:localisation principle}
\end{figure}
\begin{rem}
We note that the convolution \eqref{convolution in F space} in Theorem~\ref{Fourier transform proposition} and in \eqref{convolution 2} is written in the local normalised variables $\tilde x$, but Fig.~\ref{fig:localisation principle} shows the window functions in the original physical variable $x$. However, it is easy to show via change of variables that the local and global Fourier spectra are related by:
\begin{align}\label{Fourier scaling relation}
\mathcal F_x[\omega_i](\xi)
=
\sigma_i^d e^{-2\pi \mathrm{i}\mu_i\cdot \xi}
\mathcal F_{\tilde x}[\tilde\omega_i](\sigma_i\xi).
\end{align}
The spectra in Fig.~\ref{fig:localisation principle} are plotted for the physical windows $\omega_i(x)$ in the global variable $x$. By the Fourier scaling relation \eqref{Fourier scaling relation}, these spectra are equivalent to the local spectra of $\tilde\omega_i$ up to the rescaling $\xi\mapsto\sigma_i\xi$ and a constant factor. Indeed, since $|e^{-2\pi \mathrm{i}\mu_i\cdot \xi}|=1$, we have $|\mathcal F_x[\omega_i](\xi)|=\sigma_i^d \, |\mathcal F_{\tilde x}[\tilde\omega_i](\sigma_i\xi)|$. Thus, the stronger spatial localisation of the windows leads to broader Fourier spectra, whether the spectra are viewed in local or global frequency variables.
\end{rem}

One practical conclusion suggested by the above analysis is that aggressive subdomain partitioning improves spectral separation and yields a collection of simpler local learning problems, but generally requires larger overlaps to avoid excessive convolution-induced frequency mixing, which may otherwise make the optimisation more challenging. 
In particular, insufficient overlap narrows the transition regions over which neighbouring subnetworks are blended, resulting in steeper spatial variations of the coupling windows. 
This, in turn, increases the high-frequency content of $\mathcal{F}_{\tilde{x}}[\tilde{\omega}_i]$ and introduces artificial frequency coupling that is not intrinsic to the underlying PDE solution. 
Excessive (convolution-induced) spectral leakage partially re-couples neighbouring frequency bands, reducing the effectiveness of the decomposition in mitigating spectral bias and potentially leading to slower or less stable optimisation. Increasing the overlap smooths the transitions between neighbouring subnetworks and reduces the resulting convolution-induced spectral coupling, thereby yielding more stable training for fine subdomain partitions. 

Although the discussion above addresses several aspects of choosing the number of subdomains and the amount of overlap in the continuous setting, an additional layer of discretisation in FBPINNs is introduced by the collocation method and the associated sampling strategy. 
In most domain decomposition methods for PINNs, a trade-off exists between overfitting due to insufficient training data (number of collocation points per subdomain) and the reduced complexity of the target function achieved through domain decomposition \cite{hu2023augmented}. If too few subdomains are used, there is less separation of the low- and high-frequency features, making the training susceptible to spectral bias. Having too many subdomains improves spectral separation, but increases the number of parameters to train and generally reduces the number of collocation points in each subdomain (as the overall set of collocation points is distributed among the subdomains), potentially leading to overfitting of localised target functions and poor predictions. In this regard, we remark that a certain amount of spectral leakage in frequency space may actually help regularise the training process, thus making the practical choice of $\delta$ less trivial. We also note that using larger overlaps increases the computational cost, as evaluating the solution at a given collocation point involves summing the solution (weighted by the window function) from all the subdomains it falls into. Therefore, multiple subdomain network evaluations are required at the same point in the overlapping regions, with the computational cost increasing with the amount of overlap. In our experiments, decomposition of the domain into subdomains whose size is comparable to the smallest characteristic length scale, for example, slightly larger than the diameter of the perforations, combined with a relatively large overlap ratio of $\delta = 2$, provides a reasonable choice for effective training.

\subsection{Hard-constraint design for FBPINNs}

The standard hard-constraint strategy in PINNs enforces boundary conditions by multiplying the network output by a distance function that vanishes on the boundary, allowing Dirichlet boundary conditions to be satisfied by construction. We illustrate it here for the velocity network $\boldsymbol{u}_{\theta}$:
\begin{align}\label{Intro 1}
\boldsymbol{u}_{\theta}(x) &= \mathcal{C}[\bar{\boldsymbol{u}}_{\theta}](x): = l_{\partial \Omega}(x)\bar{\boldsymbol{u}}_{\theta}(x) + \boldsymbol{g}(x) , 
\end{align}
where $\bar{\boldsymbol{u}}_{\theta}$ denotes the raw output of the neural network,  $\boldsymbol{u}_{\theta}$ is the hard-constraint neural network ansatz, $l_{\partial \Omega}(x)$ is a distance function to the boundary $\partial \Omega$  satisfying $l_{\partial \Omega}(x) = 0$ for $x \in \partial \Omega$, and $\boldsymbol{g}(x)$ represents the prescribed boundary data on $\partial \Omega$. Thus, $\mathcal{C}$ defines an affine operator applied to $\bar{\boldsymbol{u}}_{\theta}$, whose output must be sufficiently smooth to guarantee the existence of the derivatives required in the PINN training. This can be conveniently achieved by choosing smooth distance functions. We proceed with the generic hard-constrained form for enforcing the no-slip boundary condition
\begin{align}\label{generic HC}
\boldsymbol{u}_{\theta}(x)
&=
\mathcal{C}^{\mathrm p}[\bar{\boldsymbol{u}}_{\theta}](x)
=
l_{\partial \Omega^{\mathrm p}}(x)\,
\bar{\boldsymbol{u}}_{\theta}(x),
\end{align}
where $\mathcal{C}^{\mathrm p}$ is a linear constraining operator and $l_{\partial \Omega^{\mathrm p}}$ is a smooth boundary-distance function satisfying $l_{\partial \Omega^{\mathrm p}}(x)=0$ for $x\in\partial\Omega^{\mathrm p}$ and
$l_{\partial \Omega^{\mathrm p}}(x)>0$ for $x\in\Omega$. Owing to the hard constraint \eqref{generic HC}, the loss term associated with the no-slip condition in \eqref{PINN soft BC} is therefore removed. However, we note that $\boldsymbol{u}_{\theta}$ no longer enters \eqref{PINN soft BC} directly, but rather through its derivatives. To investigate the role of hard constraints on the optimisation problem, we expand the Laplacian of the constrained velocity and obtain
\begin{align}\label{eq: local laplacian I}
\Delta\bigl(l_{\partial \Omega^{\mathrm p}}\,
\bar{\boldsymbol{u}}_{\theta}\bigr)
= 
l_{\partial \Omega^{\mathrm p}}\,\Delta \bar{\boldsymbol{u}}_{\theta}
+
2\nabla l_{\partial \Omega^{\mathrm p}}\cdot\nabla \bar{\boldsymbol{u}}_{\theta}
+ \bar{\boldsymbol{u}}_{\theta}\,\Delta l_{\partial \Omega^{\mathrm p}}.
\end{align}
Note that the distance function $l_{\partial\Omega^{\mathrm p}}$ directly shapes the Laplacian of the constrained network output through $l_{\partial\Omega^{\mathrm p}}$, $\nabla l_{\partial\Omega^{\mathrm p}}$, and $\Delta l_{\partial\Omega^{\mathrm p}}$. Therefore, these terms inject the geometry of the no-slip condition into the residual \eqref{momentum residual} and can be interpreted as an inductive bias for the Laplacian of the constrained velocity. Consequently, not every function $l_{\partial \Omega^{\mathrm p}}$ that vanishes on $\partial\Omega^{\mathrm{p}}$ is equally suitable, and several design choices must be made. In particular, due to smoothness requirements and the dense arrangement of the perforations, $l_{\partial \Omega^{\mathrm{p}}}$ may remain relatively small in a neighbourhood of the perforation boundaries. Thus, the term $l_{\partial\Omega^{\mathrm p}}\, \Delta\bar{\boldsymbol{u}}_{\theta}$ in \eqref{eq: local laplacian I} may be damped precisely where the solution develops strong gradients and large curvature due to the no-slip condition. These near-boundary regions contain high-frequency spatial features that are already difficult for neural networks to learn because of spectral bias. On the other hand, $\Delta l_{\partial \Omega^{\mathrm p}}$ may become large and induce an overly restrictive bias, which is beneficial only when the curvature induced by $l_{\partial \Omega^{\mathrm p}}$ is compatible with that of the velocity field near the boundaries. In view of the damping effect introduced by $l_{\partial \Omega^{\mathrm p}}$, poor design choices may therefore create an imbalance among the terms in \eqref{eq: local laplacian I}, introducing stiffness into the residual \eqref{momentum residual} during training.

We now describe the construction of $l_{\partial\Omega^{\mathrm{p}}}$. For each perforation $\mathcal{P}_k$, we use a smooth disk function
\begin{align}\label{eq:disk_function}
\phi_{\mathrm{disk}}(x,c_k)
&=
\frac{\|x-c_k\|^2-R^2}{2R},
\end{align}
where $c_k$ and $R$ denote the perforation centre and radius, respectively, as defined in Section~\ref{Sec 2.1}. We define the aggregated inverse distance function, following \cite{sukumar2022exact}, and its corresponding hyperbolic tangent modulation
\begin{align}\label{eq:local_distance_function}
l_{\mathrm{inv}}(x)
&=
\left(
\sum_{k =1}^{K}
\phi_{\mathrm{disk}}(x,c_k)^{-m}
\right)^{-1/m}, \quad l_{\partial \Omega^{\mathrm{p}}}(x) = 
\tanh\!\left(a\,l_{\mathrm{inv}}(x)\right).
\end{align}
For $x \in \partial\Omega^{\mathrm{p}}$, the values of $l_{\mathrm{inv}}$ are understood in the sense of its continuous extension. Here, the parameter $m$ controls the degree of localisation in the aggregation of the individual disk functions. In fact, for $x \in \Omega$, one obtains the following pointwise limit \cite{sukumar2022exact}: 
\begin{align}\label{eq: minimum distance}
\underset{m\to\infty}{\lim}l_{\mathrm{inv}}(x) =  \min_{1\leq k\leq K}
\phi_{\mathrm{disk}}(x,c_k).
\end{align}
Thus, as $m$ increases, $l_{\mathrm{inv}}$ becomes increasingly dominated by the closest perforation boundary and approaches a smooth approximation of the minimum distance \eqref{eq: minimum distance}. Smaller values of $m$ retain a stronger cumulative contribution from several nearby perforations, which may lead to more pronounced damping in densely perforated regions.
At the same time, the parameter $a$ in \eqref{eq:local_distance_function} controls the subsequent hyperbolic tangent modulation. Larger values of $a$ make $l_{\partial\Omega^{\mathrm p}}$ saturate more rapidly away from $\partial\Omega^{\mathrm{p}}$, confining the damping effect to a thinner boundary layer. In   Fig.~\ref{fig:SDF design} (the top and middle rows), we compare $l_{\mathrm{inv}}$ and $l_{\partial \Omega^{\mathrm p}}$, defined in \eqref{eq:local_distance_function}, for different values of $m$ and $a$. We observe that larger values of these parameters yield stronger localisation, thereby reducing excessive damping in regions containing multiple perforations.  This, in turn, is expected to improve the learning of $\Delta\bar{\boldsymbol{u}}_{\theta}$ in \eqref{eq: local laplacian I}, particularly near $\partial\Omega^{\mathrm{p}}$.

The effect of the hard constraints on the remaining two terms in \eqref{eq: local laplacian I}, namely $2\nabla l_{\partial\Omega^{\mathrm p}}\cdot \nabla\bar{\boldsymbol{u}}_{\theta}$ and $\Delta l_{\partial\Omega^{\mathrm p}}\,\bar{\boldsymbol{u}}_{\theta}$, is more subtle. Computing the derivatives of the hard-constraint factor in \eqref{eq:local_distance_function}, we obtain
\begin{align}
\nabla l_{\partial \Omega^{\mathrm{p}}}
=
a\,\operatorname{sech}^2(a\,l_{\mathrm{inv}})\nabla l_{\mathrm{inv}}, \quad \Delta l_{\partial \Omega^{\mathrm{p}}}
=
a\,\operatorname{sech}^2(a\,l_{\mathrm{inv}})\Delta l_{\mathrm{inv}}
-
2a^2
\operatorname{sech}^2(a\,l_\mathrm{inv})
\tanh(a\,l_\mathrm{inv})
\|\nabla l_{\mathrm{inv}}\|^2.
\end{align}
We note that $\nabla l_{\mathrm{inv}}$ and $\Delta l_{\mathrm{inv}}$ are bounded for a finite $m$, provided that the perforations are mutually separated. Since $0 \leq \operatorname{sech}^{2}(a s)\leq 1$ and $\operatorname{sech}^{2}(a s)\to 0$ as $|s|\to\infty$, with the decay rate controlled by $a$, the derivative of the hyperbolic tangent localises the derivatives of $l_{\partial \Omega^{\mathrm{p}}}$. In particular, both $\nabla l_{\partial \Omega^{\mathrm{p}}}$ and $\Delta l_{\partial \Omega^{\mathrm{p}}}$ decay away from $\partial \Omega^{\mathrm p}$ once $a \,l_{\mathrm{inv}}(x)$ is sufficiently large. At the same time, the limiting minimum-distance function \eqref{eq: minimum distance} is generally non-smooth at points where several perforations are equally close. At such switching sets, the (generalised) derivative exists only in a set-valued sense, while the second derivative contains measure-valued contributions, analogous to Dirac-type singularities. Thus, increasing $m$ in \eqref{eq:local_distance_function} makes $l_{\mathrm{inv}}$ approach \eqref{eq: minimum distance} and can produce larger values of $\Delta l_{\mathrm{inv}}$. These curvature effects are inherited by $\Delta l_{\partial\Omega^{\mathrm p}}$ and can be further amplified by the parameter $a$ in the transition regions near the perforation boundaries, where $\operatorname{sech}^2(a \, l_{\mathrm{inv}})$ remains non-negligible. Similarly, tight localisation of $l_{\partial \Omega^{\mathrm{p}}}$ yields steeper transitions in $\nabla l_{\partial \Omega^{\mathrm{p}}}$, amplifying the gradient of the raw output $\bar{\boldsymbol{u}}_{\theta}$ in \eqref{eq: local laplacian I}. In summary, tight localisation of $l_{\partial \Omega^{\mathrm{p}}}$ provides better control of the first term in \eqref{eq: local laplacian I}, but at the cost of additional complexity in the remaining two terms. To support this analysis, in Fig.~\ref{fig:SDF design} (bottom row), we plot the logarithms of $\Delta l_{\partial \Omega^{\mathrm{p}}}$ over $\Omega$. The plots show that increasing $m$ and $a$ produces curvature contours with highly localised shapes and large magnitudes. In fact, for $m=12$ and $a=25$, the magnitudes differ by up to seven orders of magnitude (this is particularly visible near points that are equidistant from two neighbouring perforations). This may introduce significant stiffness into the residual by imposing an overly restrictive inductive bias, contributing to overfitting or stripe-like artefacts during training.

\begin{figure}[H]
    \centering
    \includegraphics[width=0.8\linewidth]{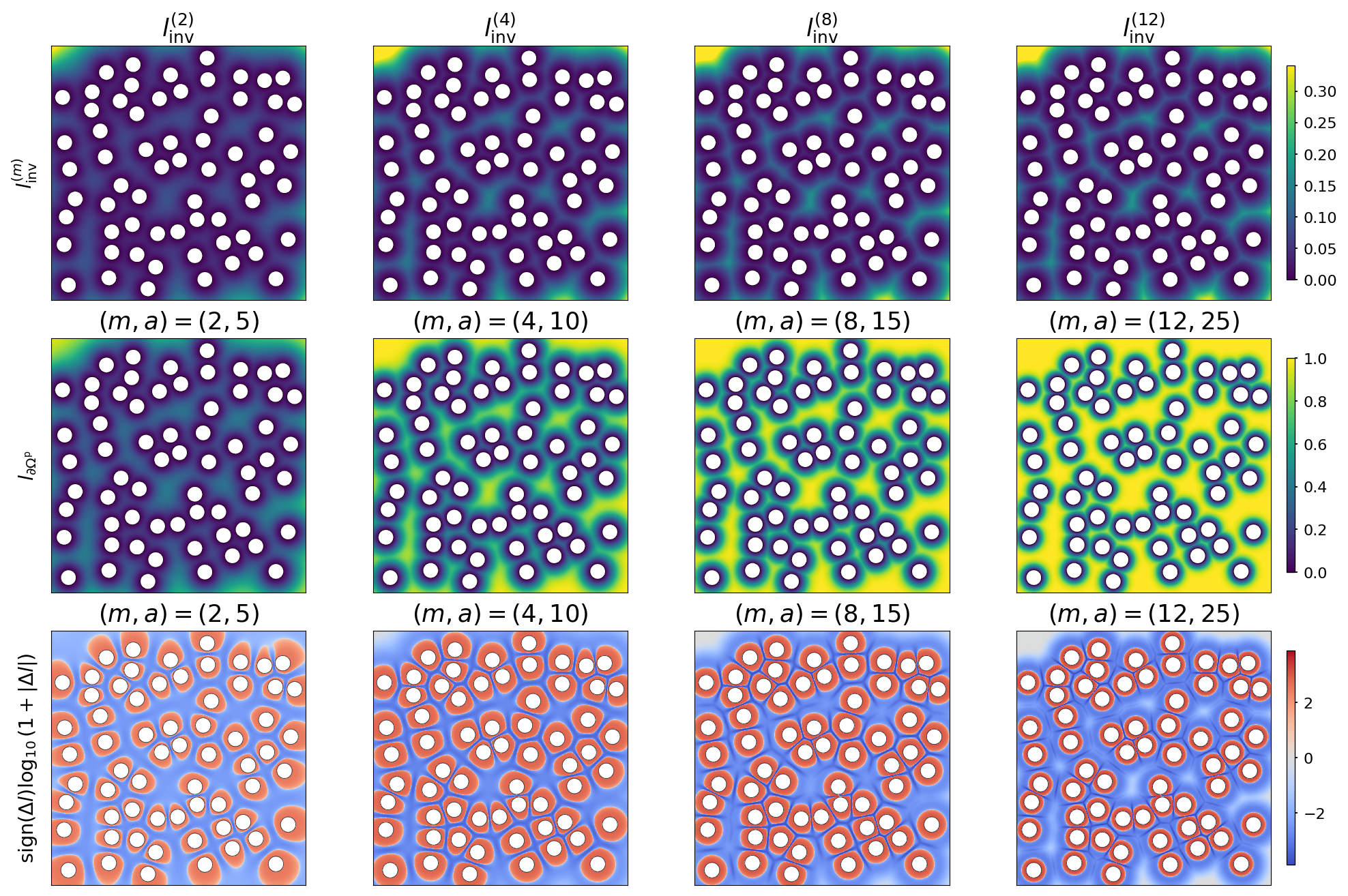}
    \caption{Localisation and curvature of $l_{\partial \Omega^{\mathrm{p}}}$ ($K=64$, $R=0.03$). The top row shows $l_{\mathrm{inv}}^{(m)}$ for $m=2,4,8,12$, while the middle row shows $l_{\partial\Omega^{\mathrm p}}=\tanh(a,l_{\mathrm{inv}}^{(m)})$ for $(m,a)=(2,5),(4,10),(8,15),(12,25)$. The bottom row displays the signed logarithmic transform $\operatorname{sign}(\Delta l_{\partial\Omega^{\mathrm p}})\log_{10}(1+|\Delta l_{\partial\Omega^{\mathrm p}}|)$ for the same parameter pairs. Larger values of $m$ and $a$ lead to stronger localisation, but also generate sharper curvature patterns in $\Delta l_{\partial\Omega^{\mathrm p}}$.}
    \label{fig:SDF design}
\end{figure}

We proceed by modifying the FBPINN ansatz \eqref{FBPINN ansatz} to obtain the hard-constrained neural network class \eqref{hard constrained class}, which satisfies exactly both the boundary conditions on the wall boundary $\partial \Omega^{\mathrm{w}}$ and the no-slip boundary condition on $\partial \Omega^{\mathrm{p}}$. A distinctive feature of the FBPINN ansatz is that hard constraints can be imposed not only on the overall neural network function, but also incorporated into the local subnetworks. 
For the velocity approximation $\boldsymbol{u}_{\theta} \in \mathcal{NN}_{\Theta}^{\dagger}$, the general form of the hard-constrained FBPINN network is given by
\begin{align}\label{HC FBPINN ansatz}
\boldsymbol{u}_{\theta}(x) = \mathcal{C}^{\mathrm{w}}\bigg[
\sum_{i=1}^{N}
\omega_i(x) \cdot
\mathrm{unnorm}^{\boldsymbol u}
\circ
\mathcal C_i^{\mathrm p}
\!\left[
\boldsymbol{NN}_{i}^{\boldsymbol u}
\right]
\circ
\mathrm{norm}_i(x)
\bigg] = \mathcal{C}^{\mathrm{w}}\bigg[\sum_{i=1}^{N}
\omega_i(x) \, \mathcal C_i^{\mathrm p}[\boldsymbol{z}_{i}](x)  \bigg].
\end{align}
Here, the local networks are denoted by $\boldsymbol{z}_{i}(x):= \mathrm{unnorm}^{\boldsymbol u}\circ \boldsymbol{NN}_{i}^{\boldsymbol u} \circ \mathrm{norm}_i(x)$ for brevity, the (globally constraining) affine operator $\mathcal C^{\mathrm w}$ is applied to the global output to impose the boundary condition on $\partial\Omega^{\mathrm w}$, while the (locally constraining) linear operators $\mathcal C_i^{\mathrm p}$ are applied to the individual subnetworks to impose the no-slip condition on $\partial\Omega^{\mathrm p}$. Since $\mathrm{unnorm}^{\boldsymbol u}$ acts as multiplication by a constant, the above notation is consistent.

The local hard constraint for the $i$-th subnetwork is constructed only from those perforation boundaries on which the corresponding window function is active. Precisely, for $i=1,\dots,N$, we define the local perforation index set
\begin{align}\label{eq:local_perforation_index_set}
\mathcal J_i
&=
\left\{
j\in\{1,\dots,K\}
:
\partial \mathcal{P}_j\cap \operatorname{supp}(\omega_i)\neq\emptyset
\right\}.
\end{align} 
Thus, for the element $\Omega_{i}$ of the subdomain partition, $\Gamma_{i} = \cup_{j \in \mathcal{J}_{i}} \partial \mathcal{P}_{j}$ is the union of the perforation boundaries relevant to the $i$-th subnetwork. Similar to \eqref{eq:local_distance_function}, we define the local approximate distance function
\begin{align}\label{eq:local_distance_function_1}
l_{\mathrm{inv}}^{i}(x)
&=
\left(
\sum_{j\in\mathcal J_i}
\left(
\frac{1}{\phi_{\mathrm{disk}}(x,c_j)}
\right)^m
\right)^{-1/m}, \quad l_{\partial \Omega^{\mathrm{p}}}^{i}(x) = \begin{cases}
\tanh\!\left(a\,l_{\mathrm{inv}}^{i}(x)\right),
& \mathcal J_i\neq\emptyset, \\[0.6em]
1,
& \mathcal J_i=\emptyset.
\end{cases}
\end{align}
For $x \in \Omega_{i}$, the corresponding local hard-constraining operator is then defined by
\begin{align}
\label{eq:local_hard_constraint_operator}
\mathcal C_i^{\mathrm p}
\left[
\boldsymbol{z}_{i}
\right](x)
=
\begin{cases}
l_{\partial \Omega^{\mathrm{p}}}^{i}(x)
\boldsymbol{z}_{i}(x),
& \mathcal J_i\neq\emptyset, \\[0.6em]
\boldsymbol{z}_{i}(x),
& \mathcal J_i=\emptyset.
\end{cases}
\end{align}
To implement such local constraints, it is sufficient to provide each local subnetwork with its corresponding set $\mathcal J_i$, which can be stored, for example, through the perforation centres associated with $\Gamma_{i}$. 
However, in the overlapping regions where the subdomain network outputs are summed together, the mismatch in the local approximate distance function can introduce artificial features to the overall ansatz. In the current study, instead of applying the no-slip condition as local hard constraints, we use hyperbolic tangent modulation with an appropriate choice of $a$ to localise the distance functions to the vicinity of the perforations, and apply both the wall and no-slip boundary hard constraints globally as in \eqref{Intro 1}. The investigation of local hard constraints in FBPINNs remains as future work.

We also note that, when combining different hard constraints, such as those for the wall boundary and the local constraints on the perforations, it is crucial to ensure their compatibility. For example, the lifting function $\boldsymbol{g}$ in \eqref{Intro 1}, used to impose the wall boundary condition, may have non-zero support on the perforation boundaries, yielding a non-zero velocity there. We discuss this issue in the numerical section for a particular choice of Dirichlet boundary conditions on $\partial \Omega^{\mathrm{w}}$.

For the pressure variable $p_{\psi} \in \mathcal{NN}_{\Psi}$, we also use the FBPINN architecture together with the normalisation
\begin{align}\label{pressure FBPINN}
p_{\psi}(x):=\bar{p}_{\psi}(x) - \frac{1}{|\Omega|}\int_{\Omega} \bar{p}_{\psi}(x) \, dx, \quad \bar{p}_{\psi}(x) = \sum_{i=1}^{N} w_{i}(x) \cdot \mathrm{unnorm}^{p} \, \circ \, NN_{i}^{p} \, \circ \, \mathrm{norm}_{i}(x)
\end{align}
to fix the constant and ensure the uniqueness of the pressure. The integral in this normalisation is approximated using a Monte Carlo approach over the given set of collocation points.

\begin{rem}
We note that enforcing hard constraints into the neural network ansatz may lead to over-constraining the underlying problem. For example, assume that one requires $p=0$ on the outlet of $\Omega= (0,1)^{2}$, and use $p_{\psi}(x)  = \mathcal{C}(x)\bar{p}_{\psi}(x)$ with the distance function $\mathcal{C}(x) = (1-x_1)$ to achieve it. According to the chain rule, we obtain
\begin{align*}
\frac{\partial p_{\psi}(x)}{\partial x_1} =
- \bar{p}_{\psi}(x) +
(1 - x_1) \cdot \frac{\partial \bar{p}_{\psi}(x)}{\partial x_1}, \quad 
\frac{\partial p_{\psi}(x)}{\partial x_2} =
(1 - x_1) \cdot \frac{\partial \bar{p}_{\psi}(x)}{\partial x_2}.
\end{align*}
Therefore, $\frac{\partial p_{\psi}(\boldsymbol{x})}{\partial x_2} = 0$ on the outlet, which is an artefact of the ansatz not present in the initial Stokes formulation. 
\end{rem}

\subsection{Non-dimensionalisation and characteristic length scales}\label{section:nondim}
Data normalisation is essential in deep learning for stable and effective training. In practice, inputs to the neural network are often rescaled to lie within a fixed range, such as $[-1,1]$, and  this is a "non-physical" scaling of the inputs only. In contrast, non-dimensionalisation rescales the entire physical system so that all variables and governing equations are expressed on comparable magnitudes. Since first- and second-order derivatives scale as $l_{\mathrm{c}}^{-1}$ and $l_{\mathrm{c}}^{-2}$, respectively, the pressure gradient and viscous terms scale as
\begin{align}
\nabla p \sim P_{\mathrm{c}} l_{\mathrm{c}}^{-1}, \qquad 
\mu_{\mathcal{D}}  \Delta \boldsymbol{u} \sim \mu_{\mathcal{D}}  U_{\mathrm{c}} l_{\mathrm{c}}^{-2},
\end{align}
where $l_{\mathrm{c}}$ is the characteristic length scale, and $P_{\mathrm{c}}$ and $U_{\mathrm{c}}$ are characteristic pressure and velocity magnitudes. Hence,
\begin{align}\label{scaling ratio}
\frac{\lVert \mu_{\mathcal{D}} \, \Delta \boldsymbol{u}\rVert}{\lVert \nabla p\rVert}
\sim
\frac{\mu_{\mathcal{D}} U_{\mathrm{c}}}{P_{\mathrm{c}} l_{\mathrm{c}}}.
\end{align}
The Stokes pressure scale $P_{\mathrm{c}}\sim \mu_{\mathcal{D}}  U_{\mathrm{c}}/l_{\mathrm{c}}$ is precisely the choice that makes this ratio of order one. However, $P_{\mathrm{c}}$ and $U_{\mathrm{c}}$ are not known before solving the problem and must be estimated from physical or geometric considerations. Without such scaling, the velocity and pressure networks are typically initialised with comparable output magnitudes, although the physically consistent pressure scale may differ substantially from the velocity scale. Under this initial neural network parametrisation, $P_{\mathrm{c}}\sim U_{\mathrm{c}}$, and therefore the ratio in \eqref{scaling ratio} scales as $\mu_{\mathcal{D}}  \, l_{\mathrm{c}}^{-1}$. In our applications, the characteristic length is chosen as the size of the smallest geometric feature, namely the fibre diameter, since this is the scale on which the strongest velocity variations are expected. Thus, for small $l_{\mathrm{c}}$, the ratio in \eqref{scaling ratio} is large, and the unscaled momentum residual \eqref{momentum residual} may initially be dominated by the viscous contribution. Consequently, training may be biased towards updates of the velocity network, whose parameter adjustments yield a larger immediate reduction of the residual, thereby slowing down the convergence of the pressure network. Since pressure and velocity are coupled only through the momentum residual in the standard PINN formulation, non-dimensionalisation provides a direct way to improve convergence, particularly for the pressure network, by reducing residual stiffness. 

We proceed with the following choice of non-dimensional variables
\begin{equation}\label{nondim_variables}
x^* = \frac{x}{l_\mathrm{c}}, \quad \boldsymbol{u^*} = \frac{\boldsymbol{u}}{g_x}, \quad p^* = \frac{p l_\mathrm{c}}{\mu_{\mathcal{D}}  \,g_x}, \quad \nabla^*=l_\mathrm{c}\nabla, \quad \Delta^*=l_\mathrm{c}^2\Delta,
\end{equation}
where $g_x$ (taken here as an estimate of $U_{\mathrm{c}}$) is the horizontal component of the prescribed velocity field $\boldsymbol{g}$. Observe that in the above scaling, $P_{\mathrm{c}} = \mu_{\mathcal{D}}\,g_{x} /l_{\mathrm{c}}$, thus making the ratio \eqref{scaling ratio} of order one. The corresponding non-dimensional Stokes equation is then given by
\begin{equation}\label{nondim Stokes equation}
\begin{aligned}
\frac{\mu_{\mathcal{D}} \, g_x}{l_\mathrm{c}^2}\left(- \Delta^* \boldsymbol{u}^* + \nabla^* p^*\right) &= \boldsymbol{f} \quad \text{in } \Omega^*, \\
\frac{g_x}{l_\mathrm{c}} \operatorname{div}^* \boldsymbol{u}^* &= 0 \quad \text{in } \Omega^*, \\
g_x \, \boldsymbol{u}^* &= 0 \quad \text{on } \partial \Omega^{\mathrm{p}*}, \\
g_x \, \boldsymbol{u}^* &= \boldsymbol{g} \quad \text{on } \partial \Omega^{\mathrm{w}*},
\end{aligned}
\end{equation}
where $\Omega^*$, $\Omega^{\mathrm{p}*}$ and $\Omega^{\mathrm{w}*}$ denote their corresponding non-dimensionalised domains. From this point onwards, we drop the symbol denoting the non-dimensional form, and all variables are expressed in non-dimensional form.  

Generally, it is evident that the choice of characteristic quantities in \eqref{nondim_variables} will affect the relative scaling of the PDE residuals, potentially increasing or decreasing stiffness in the discrete gradient flow \eqref{gradient flow} that further depends on the loss weights $\lambda$, see also \cite{rohrhofer2023data}. In our case, the non-dimensionalisation scales the PDE residuals by different factors: the momentum equation scales with ${\mu_{\mathcal{D}} \, g_x}/{l_\mathrm{c}^2}$, the divergence equation scales with ${g_x}/{l_\mathrm{c}}$ and the no-slip and wall boundary conditions scale with $g_x$. We take advantage of the inherent scaling that exists in the physical system and set $\lambda_\mathrm{div}=({\mu_{\mathcal{D}} }/{l_\mathrm{c}})^{2}$ and $\lambda_{b}=({\mu_{\mathcal{D}} }/{l_\mathrm{c}}^2)^{2}$ as a first estimate (given that the squared norms are used to define the losses \eqref{discrete losses}). The adaptive loss weighting scheme \eqref{standard loss balancing} will then continue to adjust the loss weights to ensure balanced training. Here, it is important to clarify that loss weighting schemes balance between the different loss terms in the problem, whereas non-dimensionalisation aims to place the different terms in the PDE residual on comparable scales. However, both help to balance the loss gradients and stabilise learning. 


\subsection{Adaptive collocation point sampling}
The number and distribution of the collocation points have been shown to affect the accuracy of predictions significantly. Resampling collocation points is a well-established method for preventing overfitting, but more recently, methods have been developed to adaptively sample collocation points based on the value of the PDE residuals to strategically place points where training is required to minimise overfitting and maximise data efficiency. 
Similar to adaptive mesh refinement used in finite element methods, adaptive collocation point sampling has been used to successfully improve the accuracy of predictions using PINNs \citep{hanna2022residual,lu2021deepxde, wu2023rar}.
In the residual-based adaptive refinement (RAR) method \citep{lu2021deepxde}, new collocation points are added in regions with the largest PDE residuals. This effectively increases the contribution of these regions to the empirical residual loss, acting as an implicit form of loss rebalancing. Such adaptive refinement is particularly relevant in regions with sharp solution features, for example near steep fronts in Burgers' equation, where additional collocation points may be needed to resolve the solution accurately.
In the Stokes problem, steep gradients exist near the boundaries of the perforations and can be introduced due to the distance functions where the solution is forced to change rapidly (e.g., at the domain corners). Specific to FBPINNs, the gradients can be artificially steep at the subdomain overlaps, especially at initialisation, where the solution changes significantly between adjacent subdomains. Adaptive sampling helps mitigate the effects of steep gradients, as well as numerical artefacts that may arise due to overfitting in regions with insufficient collocation points.

In the current study, in addition to the random uniformly sampled collocation points $\{x^{r}_{i}\}_{i=1}^{M_{r}} \subset \Omega$, a smaller group of points is adaptively resampled every $T_{\mathrm{RAR}}$ iterations according to the probability density function proportional to the PDE residuals following \citep{wu2023rar}. 
A set of dense points $\mathcal{S}_0$ is randomly sampled ($100,000$ points), then the PDE residuals are evaluated for these points. Since there are multiple PDE residuals: the divergence-free and the momentum loss terms \eqref{PINN soft BC}, the sum of the normalised absolute residuals is used to identify regions where the overall residual is highest,
\begin{equation}\label{RAR_residual}
\varepsilon_\textrm{r} = \sum_{i=1}^C \frac{|{r}_i|-\min|{r}_i|}{\max|{r}_i| - \min|{r}_i|},
\end{equation}
where ${r}_i$ is PDE residual, and $C$ is the number of PDE residuals in the system. The probability density function is evaluated by
\begin{equation}\label{eq:pdf}
    \pi(x) = \frac{\varepsilon_\mathrm{r}(x)}{\int \varepsilon_\mathrm{r}(x) dx}.
\end{equation}
Then a subset of points $\mathcal{S}$ is randomly sampled from $\mathcal{S}_0$ using the probability mass function $\tilde \pi(x)=\pi(x)/\sum_{x\in{\mathcal{S}}_{0}}\pi(x)$ and added to the set of collocation points for training.

Residual adaptive refinement can automatically identify regions with high losses and add collocation points to speed up training and reduce overall errors, especially near the perforation boundaries. RAR allows for fewer collocation points overall than purely randomly sampled points, as the adaptive scheme targets where training is required. 
As presented in Fig.~\ref{fig:RAR}, it was observed that early in training (after 5000 iterations), the adaptive points are located in areas with insufficient collocation points, or in areas with steep gradients introduced by hard boundary constraints. Later in training (after 70,000 iterations), the adaptive points are more evenly distributed, with a higher concentration near the perforations, thereby assisting in capturing the micro-scale features. 

\begin{figure}[H]
\centering
\includegraphics[width=0.75\linewidth]{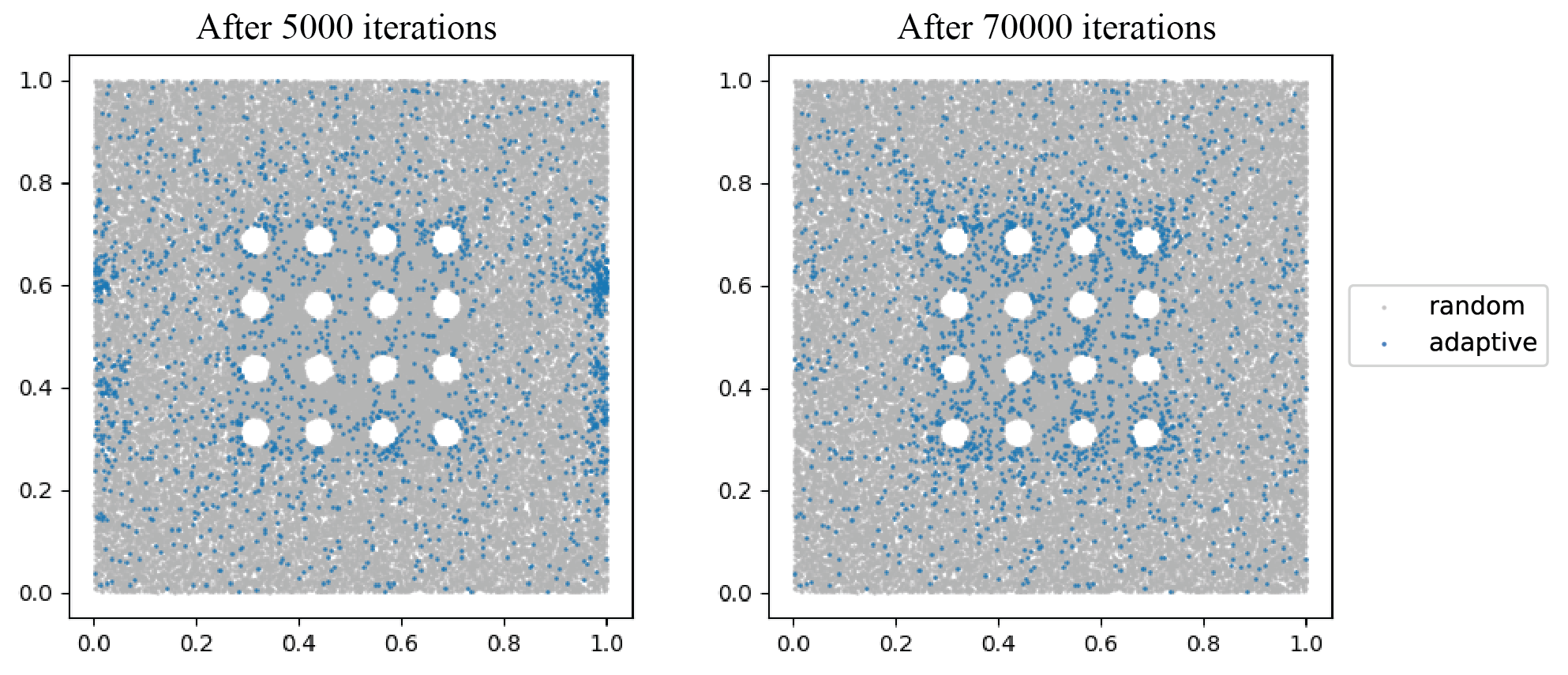}
\caption{\label{fig:RAR}Example of the randomly and adaptively-sampled collocation points after 5,000 and 70,000 iterations of training for 16 periodic perforations.}
\end{figure}


The hard-constrained FBPINN solver runs according to Algorithm \ref{alg:algorithm}. The solver receives as input the initial neural network parameters $\theta^0$ and $\psi^0$ for the velocity and pressure networks, respectively, the maximum number of training iterations $k_\text{max}$, and a number of user-defined update frequencies, including the batch resampling $T_\textrm{s}\in \mathbb N$, RAR sampling $T_\textrm{RAR}\in \mathbb N$, and loss weight update $T_\textrm{w}\in \mathbb N$ frequencies. The collocation points are sampled in the non-dimensionalised domain, and the inputs to each subdomain network are normalised. The local constraining operator is applied to the subdomain outputs, which are then unnormalised and multiplied by the window function. The subdomain outputs are summed, the global constraining operator is applied, and the required partial derivatives for the PDE are computed with automatic differentiation. Finally, the loss function and its gradient are evaluated, and the training parameters are updated via an optimiser of choice. After training for $k_\text{max}$ iterations, the trained parameters, along with the corresponding approximated velocity and pressure functions, are returned as outputs of the solver. These approximations can then be evaluated at any set of points within the domain to obtain the predictions.

\begin{algorithm}
\caption{Hard-constrained FBPINN solver}
\label{alg:algorithm}
\textbf{Inputs:} Initial neural network parameters ($\theta^0$, $\psi^0$), maximum number of iterations $k_{\text{max}}$, resampling frequency $T_\textrm{s}$, RAR sampling frequency $T_\textrm{RAR}$, loss weight update frequency $T_\mathrm{w}$ \\
\textbf{Outputs:} Trained neural network parameters ($\theta$, $\psi$)
\begin{algorithmic}[1]
\STATE{$k\gets 0$}
\WHILE { $0 \leq k \leq k_{\text{max}}-1$} \vspace{0.4em}
    \IF {$k\mod T_\mathrm{s}=0$} 
    \STATE{Randomly sample a set of points $\mathcal{T}$}
    \ENDIF
    \IF {$k \mod T_\textrm{RAR}=0$} 
    \STATE{Randomly sample a set of dense points $\mathcal{S}_0$}
    \STATE{Compute PDE residuals for the points in $\mathcal{S}_0$ according to \eqref{RAR_residual}}
    \STATE{Compute $\pi(x)$ using \eqref{eq:pdf} and randomly sample a set of points $\mathcal{S}$ from $\mathcal{S}_0$ based on the probability mass function $\tilde \pi(x)=\pi(x)/\sum_{x\in{\mathcal{S}}_{0}}\pi(x)$}
    \STATE{$\mathcal{T}\gets \mathcal{T} \cup \mathcal{S}$}
    \ENDIF
    \IF {$k \mod T_\mathrm{w} = 0$}
    \STATE{Update: $\lambda^k \gets $ Gradient-based weight scaling $(\nabla_{\theta,\psi}\mathcal{L}_{\lambda^k}(\theta^k,\psi^k),\lambda^k)$}
    \ENDIF

\STATE{Compute $\nabla_{\theta,\psi}\mathcal{L}_{\lambda_k}(\theta^k,\psi^k)$ using automatic differentiation}
\STATE{$\theta^{k+1},\psi^{k+1}\gets$ Optimiser($\nabla_{\theta,\psi}\mathcal{L}_{\lambda_k}(\theta^k,\psi^k)$, optimiser hyperparameters)}\\
\STATE{$k\leftarrow k+1$}\\
\ENDWHILE
  \end{algorithmic}
\end{algorithm}

\section{Numerical experiments}

In this section, we present our numerical simulations of the Stokes equations in two dimensions with FBPINNs. We consider a (dimensionless) unit square domain, while pointing out that solutions in a square domain of any size can be recovered by multiplying the pressure obtained on a unit domain by the side length of the domain, as evident from the dimensional analysis in Section \ref{section:nondim}. 
For the data in the Stokes equation \eqref{Stokes equation}, we choose $\boldsymbol{f}:=0$ for the body force term in $\Omega \subset \mathbb{R}^{2}$ and $\boldsymbol{g}:=(1,0)$ for the Dirichlet boundary condition on the wall boundary $\partial \Omega^{\mathrm{w}}$. Clearly, $\boldsymbol{g}$ satisfies the compatibility condition \eqref{compatibility condition}. Geometries with periodically and randomly scattered perforations are considered. For the periodic setting, three geometries with an increasing number (16, 36 and 64) of periodically arranged perforations are examined. In this setting, doubling the number of perforations
per direction also doubles the solution frequencies in the microstructure, thereby increasing the challenge for PINN approximation due to spectral bias. This periodic setting then introduces a convenient new benchmark for testing multi-scale PINNs. We further validate the methodology for randomly scattered perforations (36, 64, and 100 perforations), which better reflect practical applications. 

The Taylor--Hood finite element approximation is used to compute the reference solutions for validation purposes. This choice aligns well with the expected approximation quality \eqref{H1 rate} of our hard-constrained FBPINN approach, making such a comparison meaningful since neither method enforces the divergence-free condition exactly. The FBPINN model is implemented using the JAX framework \cite{bradbury2018jax} following the structure of \cite{moseley2023finite}. 

To encode $\boldsymbol{g}=(1,0)$ on $\partial \Omega^{\mathrm{w}}$ into the FBPINN ansatz \eqref{HC FBPINN ansatz}, we use the distance function $l_{\Omega^{\mathrm{w}}}(x) = 16x_{1}(1-x_{1})x_{2}(1-x_{2})$, and, for $\boldsymbol{v} : \Omega \rightarrow \mathbb{R}^{2}$, we define the following affine operator
\begin{align}\label{eq:wall_hard}
C^{\mathrm{w}}[\boldsymbol{v}](x) = \tanh(a^{\mathrm{w}} \, l_{\partial \Omega^{\mathrm{w}}}(x)) \, \boldsymbol{v}(x) + \bar{\boldsymbol{g}}(x), \quad \bar{\boldsymbol{g}}(x) =  \boldsymbol{g}[1- \tanh(a^{\mathrm{w}} \, l_{\partial\Omega^{\mathrm{w}}}(x))],
\end{align}
where $a^{\mathrm{w}}>0$ is a sufficiently large parameter that renders $\bar{\boldsymbol{g}}(x)$ negligible away from $\partial \Omega$, preventing a non-zero contribution of $\bar{\boldsymbol{g}}$ to the no-slip boundary, while also satisfying $\bar{\boldsymbol{g}} = \boldsymbol{g}$ on $\partial \Omega$. We use $a^{\mathrm{w}}=10$ in \eqref{eq:wall_hard}, and $m=2$ and $a=5$ in \eqref{eq:local_distance_function} for all our cases. For other aspects of boundary condition enforcement, we follow the methodology outlined in the previous sections. 
Lastly, by unnormalising the velocity by the characteristic length using $\mathrm{unnorm}^{\boldsymbol{u}}(z)=l_c\,z$ in \eqref{HC FBPINN ansatz}, we could achieve consistently good training for all test cases presented in this paper. A different choice of the unnormalisation parameter will still show satisfactory training, but with a slower convergence.
The same network structure is used for all test cases in this section. Each subdomain network has a single hidden layer with 32 neurons and $\tanh$ activation for both velocity and pressure. We use the Adam optimiser with an initial learning rate of 0.01 and an exponential decay rate of 0.9 every 2000 iterations. 
The penalty parameters for scaling the loss terms are updated every 200 iterations according to \eqref{standard loss balancing}. We train using a full set of collocation points (with varying numbers of points for the different test cases) and resample every 5000 iterations. We also adaptively sample 2000 points every 5000 iterations according to the probability density function in \eqref{eq:pdf}. These adaptive points are included in the total count when defining the number of collocation points; for example, for a total of 60,000 collocation points, 58,000 are sampled randomly and 2000 are sampled adaptively. 

\subsection{Periodic perforations}
Fig. \ref{fig:dom} shows a schematic of the geometry for a unit domain with 16 perforations with a diameter of 0.06 and an exemplary 11 by 11 decomposition of the domain. Each subdomain is centred on the vertices of the underlying grid. On the right of Fig. \ref{fig:dom} is a schematic representation of four overlapping subdomains that cover the region marked by the black square, for an overlap ratio of $\delta = 1.5$.
Since $L^2$ errors place a heavier weight on large errors, the higher magnitude flow around the perforations can dominate the error measure. To better quantify the errors within the microstructure, $L^1$ errors are evaluated in a smaller region confined to the perforations, marked as a grey square bounded by a dashed border in Fig. \ref{fig:dom}. The side lengths of this microstructure region are 0.46, 0.56 and 0.78 for the 16-, 36- and 64-perforation cases, respectively. We randomly sample 60,000 collocation points from the overall domain, with 20,000 additional points sampled in the microstructure region. 

\begin{figure}[!ht]
\centering
\includegraphics[width=0.65\linewidth]{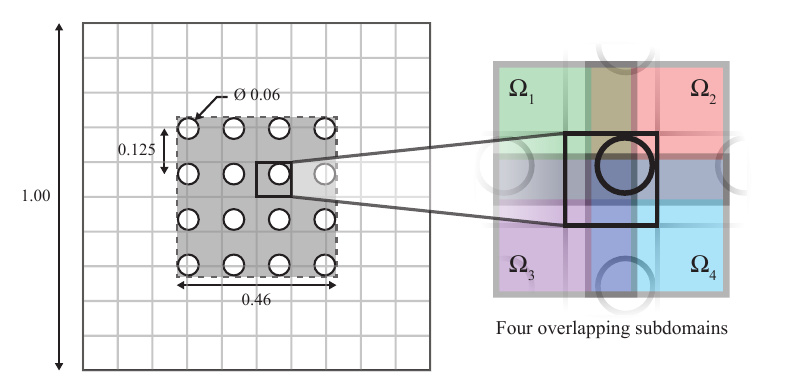}
\caption{\label{fig:dom} Unit-domain geometry for the 16-perforation case, showing the microstructure region in grey, where the $L^1$ errors are evaluated, and four representative overlapping subdomains.}
\end{figure}

In order to reduce the influence of spectral bias, Fourier feature embeddings have been used in previous studies, which transform the input variables into high-frequency signals before feeding them as input to the PINN \cite{wang2023expert,tancik2020fourier}. The application of Fourier features has been shown to improve the prediction of sharp gradients and reduce test errors significantly, but the scale of the input features, which is a user-specified hyperparameter (or alternatively, trainable, at the cost of an increased number of parameters), ideally needs to match the frequency range of interest, which may not be known \textit{a priori}. As a reference, we compare our FBPINN approach with regular PINNs (both soft- and hard-constrained variants) consisting of two hidden layers, each with 128 neurons, with 64 Fourier input features and random weight factorisation, following the implementation in \cite{wang2023expert}. We also consider FBPINNs with soft constraints to further investigate the influence of the hard constraints. The soft-constrained PINN and FBPINN models enforce the no-slip boundary condition on the surface of the perforations through soft penalisation as in \eqref{discrete losses}, using 200 randomly sampled collocation points around each perforation. The wall boundary conditions are enforced exactly in all models following \eqref{eq:wall_hard}. Note that the regular PINN model uses an initial learning rate of 0.001, as training was unstable with a higher learning rate. 
Furthermore, as discussed in Section \ref{section:nondim}, we set the initial loss weights as $\lambda_\mathrm{div}=({\mu_{\mathcal{D}} }/{l_\mathrm{c}})^{2}=278$ and $\lambda_{b}=({\mu_{\mathcal{D}} }/{l_\mathrm{c}}^2)^{2}=\num{7.72E04}$.

In the numerical experiments that follow, the size of each subdomain is chosen to be approximately 1.5 times the diameter of the perforations, leading to a domain consisting of 24 by 24 subdomains for an overlap ratio $\delta = 2$. This results in 576 subdomains and a total of 167,616 parameters to train for the FBPINN, whereas the regular PINN has 66,950 trainable parameters. The training time on an NVIDIA A100 GPU was 42 and 20 minutes for the regular PINN and the FBPINN (both soft and hard-constrained) models, respectively, for the 64-perforation case after 100,000 iterations. Combining domain decomposition with vectorised computation in the FBPINN model significantly reduces the computation time compared to regular PINNs, resulting in shorter training times despite the larger number of training parameters.


In Figs. \ref{fig:comparepinn16}, \ref{fig:comparepinn36} and \ref{fig:comparepinn64}, the $L^2$ errors in the overall domain and $L^1$ errors in the microstructure region using the soft- and hard-constrained variants of the regular PINN and FBPINN models are compared for 16, 36 and 64 periodic perforations, respectively. For the 16-perforation case, all four models can accurately model the multi-scale flow, with the FBPINN with hard constraints having the fastest training convergence and lowest $L^1$ errors. 
As the number of perforations (i.e., the frequency) in the domain increases, PINNs demonstrate a significant loss in performance, whereas hard-constrained FBPINNs maintain their superior accuracy and training efficiency with increasing numbers of perforations, achieving less than 5\% errors for all cases after only 10,000 iterations. On the other hand, soft-constrained FBPINNs show an increase in errors as training advances (most prominent for the 64-perforation case), indicating evidence of conflicting gradients between the different loss terms.

Although training convergence can be improved by applying the no-slip condition as hard constraints, conventional PINNs with hard constraints (i.e., without domain decomposition) portray a progressive increase in errors as the number of perforations increases. By comparing the results of hard-constrained FBPINNs with those of hard-constrained PINNs and soft-constrained FBPINNs, it is clear that tailored hard constraints, domain decomposition, and spatial localisation are all crucial for effective training that is less sensitive to the number of perforations in the domain.

\begin{figure}[H]
\centering
\includegraphics[width=\linewidth]{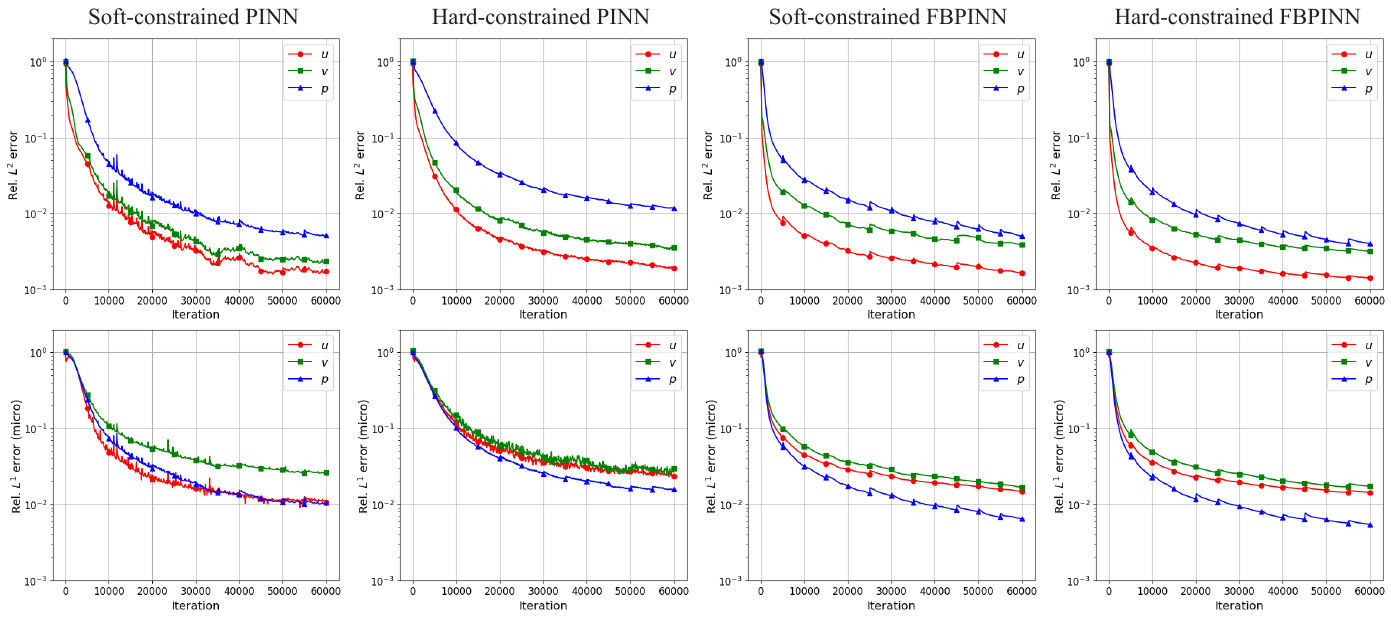}
\caption{\label{fig:comparepinn16} Comparison of the $L^2$ and microstructure $L^1$ errors for the soft- and hard-constrained variants of PINNs and FBPINNs for the 16-perforation case. The quantities $u$, $v$ and $p$ represent the errors in the horizontal velocity, vertical velocity, and pressure, respectively.}
\end{figure}

\begin{figure}[H]
\centering
\includegraphics[width=\linewidth]{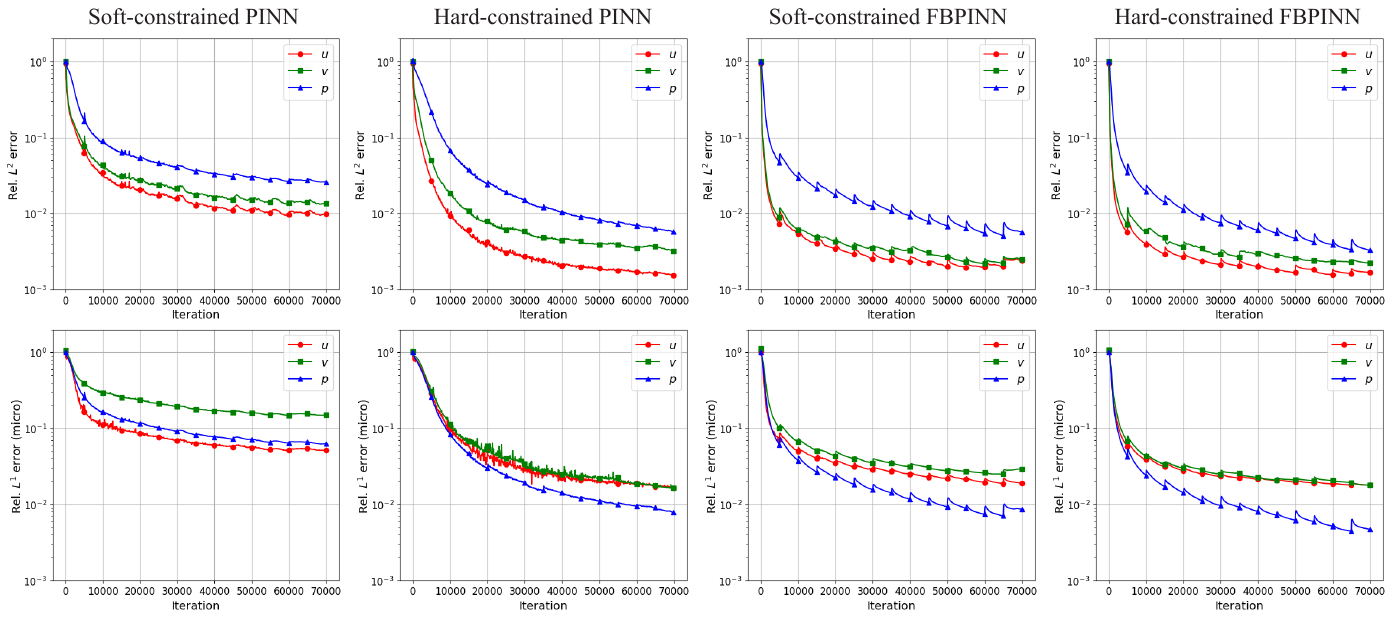}
\caption{\label{fig:comparepinn36} Comparison of the $L^2$ and microstructure $L^1$ errors for the soft- and hard-constrained variants of PINNs and FBPINNs for the 36-perforation case. The quantities $u$, $v$ and $p$ represent the errors in the horizontal velocity, vertical velocity, and pressure, respectively.}
\end{figure}

\begin{figure}[H]
\centering
\includegraphics[width=\linewidth]{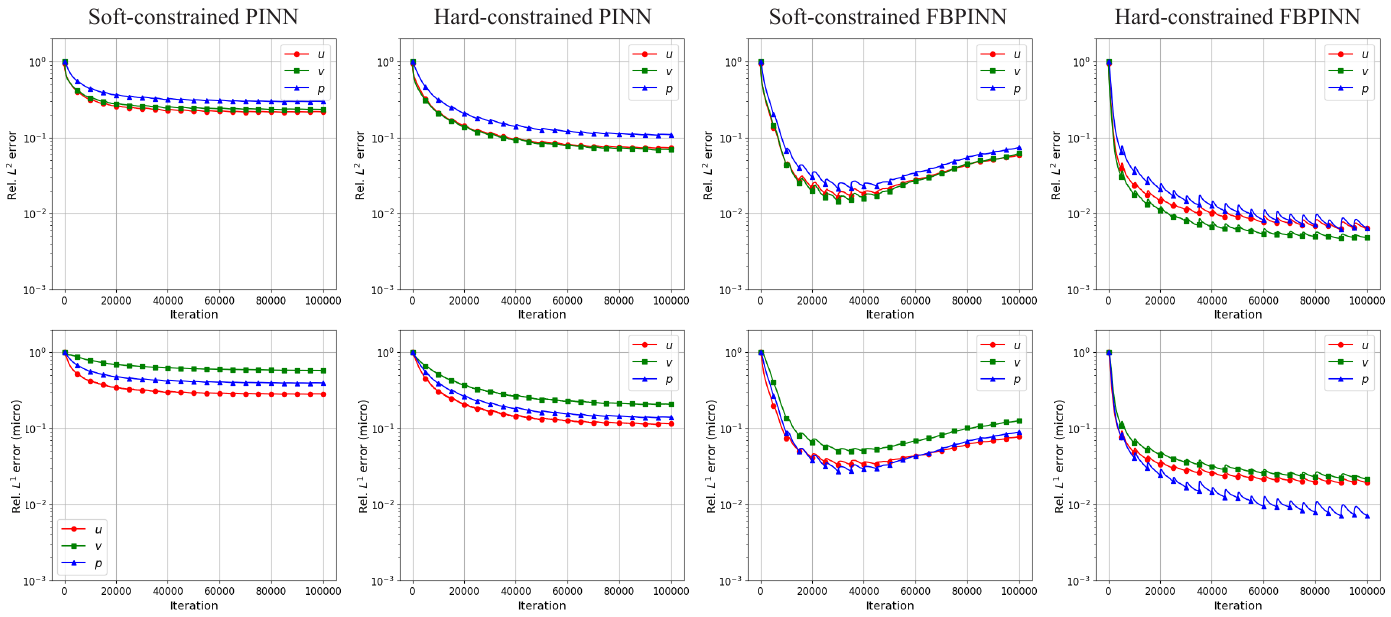}
\caption{\label{fig:comparepinn64}Comparison of the $L^2$ and microstructure $L^1$ errors for the soft- and hard-constrained variants of PINNs and FBPINNs for the 64-perforation case. The quantities $u$, $v$ and $p$ represent the errors in the horizontal velocity, vertical velocity, and pressure, respectively.}
\end{figure}

The contours of the velocity and pressure from the hard-constrained FBPINN models and reference FE solutions, with the corresponding pointwise absolute errors, are illustrated in Figs. \ref{fig:16_global}, \ref{fig:36_global} and \ref{fig:64_global} for 16, 36 and 64 perforations, respectively. The FBPINN model shows very good agreement with the FE results, regardless of the number of perforations. 
The solutions in the microstructure region are also presented in Figs. \ref{fig:16_local} and \ref{fig:64_local} for 16 and 64 perforations, respectively, showing that the FBPINN approach can accurately capture the complex flow patterns that arise in the highly perforated region. 

\begin{figure}[H]
\centering
\includegraphics[width=0.65\linewidth]{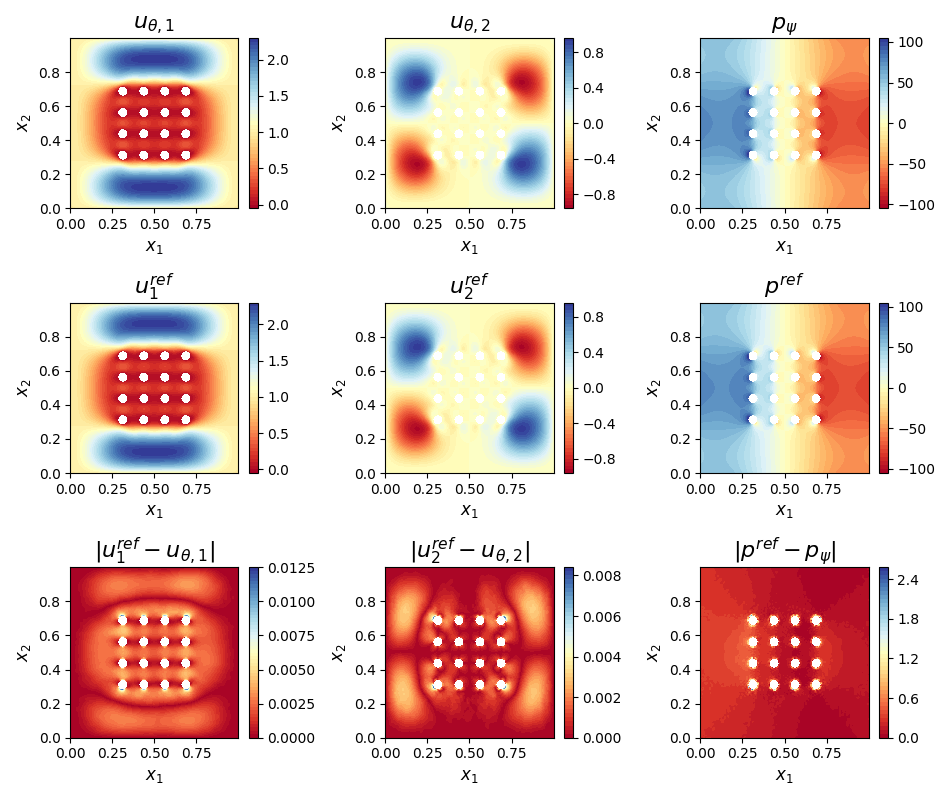}
\caption{\label{fig:16_global} Comparison of the hard-constrained FBPINN predictions and reference solution for the velocity ($u_1$: horizontal, $u_2$: vertical) and pressure in the periodic 16-perforation case.}
\end{figure}

\begin{figure}[H]
\centering
\includegraphics[width=0.65\linewidth]{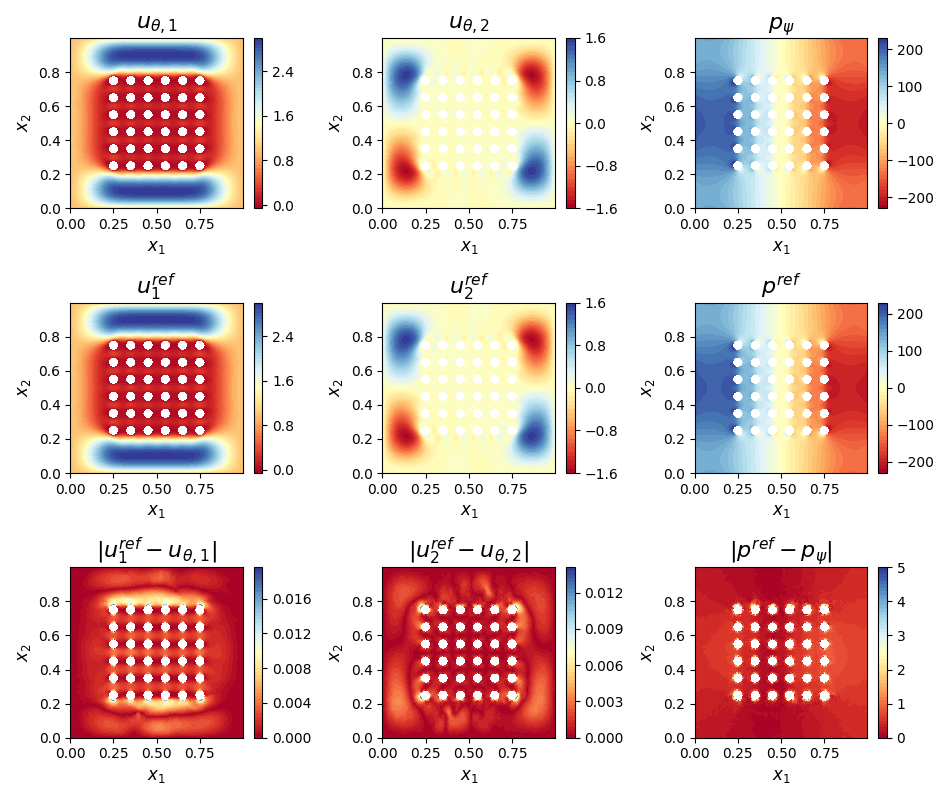}
\caption{\label{fig:36_global} Comparison of the hard-constrained FBPINN predictions and reference solution for the velocity ($u_1$: horizontal, $u_2$: vertical) and pressure in the periodic 36-perforation case.}
\end{figure}

\begin{figure}[H]
\centering
\includegraphics[width=0.65\linewidth]{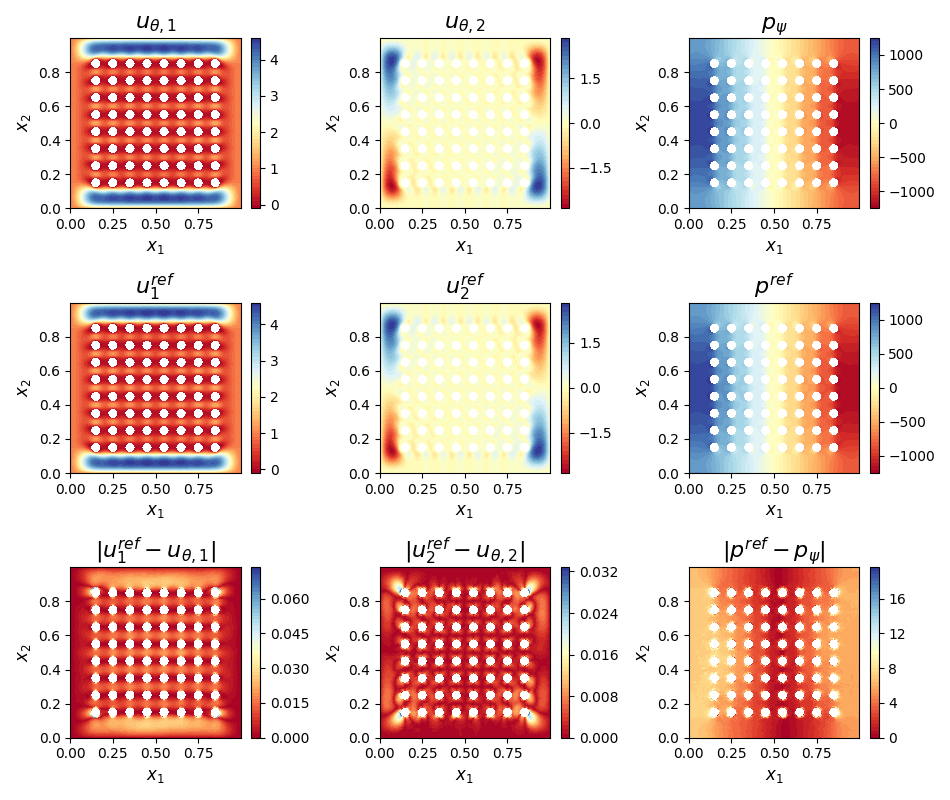}
\caption{\label{fig:64_global} Comparison of the hard-constrained FBPINN predictions and reference solution for the velocity ($u_1$: horizontal, $u_2$: vertical) and pressure in the periodic 64-perforation case.}
\end{figure}

\begin{figure}[H]
\centering
\includegraphics[width=0.65\linewidth]{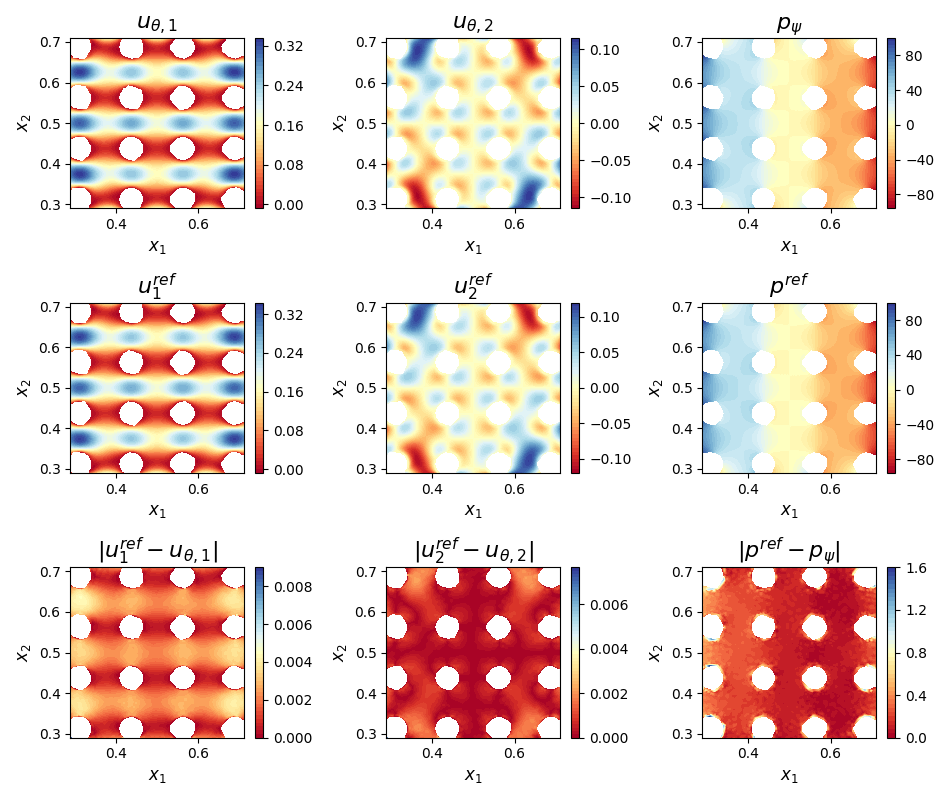}
\caption{\label{fig:16_local} Comparison of the hard-constrained FBPINN predictions and reference solution in the microstructure region for the velocity ($u_1$: horizontal, $u_2$: vertical) and pressure in the periodic 16-perforation case.}
\end{figure}


\begin{figure}[H]
\centering
\includegraphics[width=0.65\linewidth]{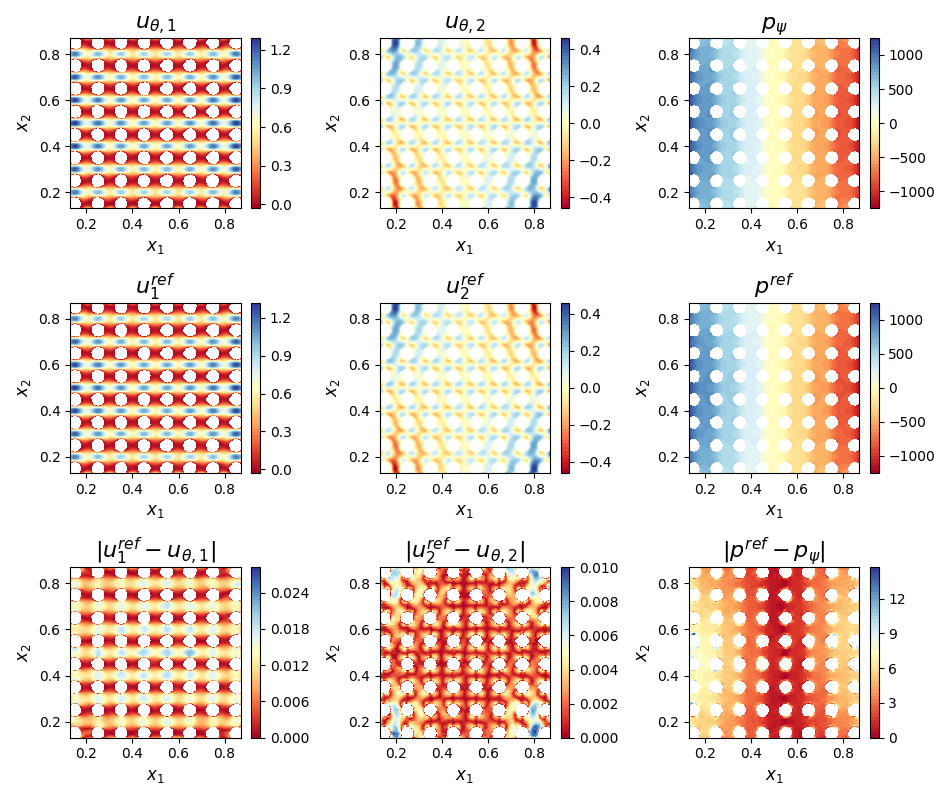}
\caption{\label{fig:64_local} Comparison of the hard-constrained FBPINN predictions and reference solution in the microstructure region for the velocity ($u_1$: horizontal, $u_2$: vertical) and pressure in the periodic 64-perforation case.}
\end{figure}


\subsubsection{Gradient conflicts during training of PINNs}
Two types of gradient conflict can arise during the training of PINNs: unbalanced gradient magnitudes and conflicting gradient directions \cite{wang2025gradient}.
Conflicts in the magnitudes of the loss gradients are mitigated through appropriate scaling of the physical problem through non-dimensionalisation and the adaptive loss weighting scheme \eqref{standard loss balancing}. 
Histograms of the back-propagated gradients $\nabla_{\theta} \mathcal{L}_{\rm div}(\theta)$, $\nabla_{\theta} \mathcal{L}_r(\theta, \psi)$, $\nabla_{\psi} \mathcal{L}_r(\theta, \psi)$, and $\nabla_{\theta} \mathcal{L}_{\rm b}(\theta)$ at 50,000 iterations for the 64-perforation case using the soft-constrained FBPINN model (Fig. \ref{fig:alignment}a), and a plot of the adaptive loss weights (Fig. \ref{fig:alignment}b, see "soft FBPINN"), illustrate how the loss weights $\lambda$ act to balance the loss gradients during training. Additionally, the plot of the loss weights in Fig. \ref{fig:alignment}b, given for all four models considered in this study, confirms that the initial estimates of the divergence-free and boundary loss weights described in Section \ref{section:nondim} were indeed close to the values obtained through the adaptive scheme, demonstrating dimensional analysis to be an effective approach for estimating the loss weights in physical problems. 

To illustrate the degree of directional conflict between the different loss gradients, the gradient alignment score (an extension of cosine similarity to multiple vectors) as defined in \cite{wang2025gradient} is evaluated. 
For vectors $v_1, v_2,...,v_n$, the alignment score is defined as
\begin{align}\label{alignmentscore}
\mathcal{A}(v_1, v_2,...,v_n) = 2 \left\lVert \frac{\sum_{i=1}^n \frac{v_i}{\lVert v_i \rVert} }{n} \right\rVert^2-1.
\end{align}
The alignment score takes values in $[-1, 1]$, where 1 indicates perfect alignment, and $-1$ indicates complete opposition.
The alignment scores, with exponential-moving-average smoothing, for PINNs and FBPINNs with soft and hard boundary constraints are plotted in Fig.~\ref{fig:alignment}c. 
For the soft-constrained FBPINN model, the score remains close to $-0.7$ throughout training, indicating poor alignment, which explains the increase in errors observed in Fig.~\ref{fig:comparepinn64}. In contrast, the score is positive for the first 35,000 iterations for the hard-constrained FBPINN model, aligning with the fast initial decrease in error observed in Fig.~\ref{fig:comparepinn64}. However, the hard-constrained PINN has a higher alignment score than the hard-constrained FBPINN, close to 0.7, which may seem to contradict the higher errors observed for the PINN in Fig.~\ref{fig:comparepinn64}. Since the alignment score only quantifies directional conflicts, and not the magnitudes of the loss gradients, it can be biased if the loss components become small or start to plateau. The loss curves for the hard-constrained PINN and FBPINN models are presented in Fig. \ref{fig:alignment}d, showing significantly lower losses for the hard-constrained FBPINN than the PINN model. As training progresses and the easy descent directions are exhausted, the loss terms increasingly compete for the remaining degrees of freedom in parameter space, and gradient conflict emerges, and is reflected as a decrease in alignment score for the hard-constrained FBPINN model. 

First-order optimisers move in the direction of the weighted sum of the loss gradients, making training susceptible to gradient conflicts and causing the optimiser to oscillate between multiple local minima. Furthermore, the large penalty weight $\lambda_b$ required to enforce the no-slip boundary condition in the soft-constrained models (Fig. \ref{fig:alignment}b) deteriorates the conditioning of the Hessian, resulting in a more complicated loss landscape and yielding an inefficient or unstable optimisation trajectory. In contrast, as shown by the error curves in Fig. \ref{fig:comparepinn64}, hard constraints enable much more efficient learning, leading to accurate predictions with $L^2$ errors of less than 1\%. 
\begin{figure}
\centering
\includegraphics[width=\linewidth]{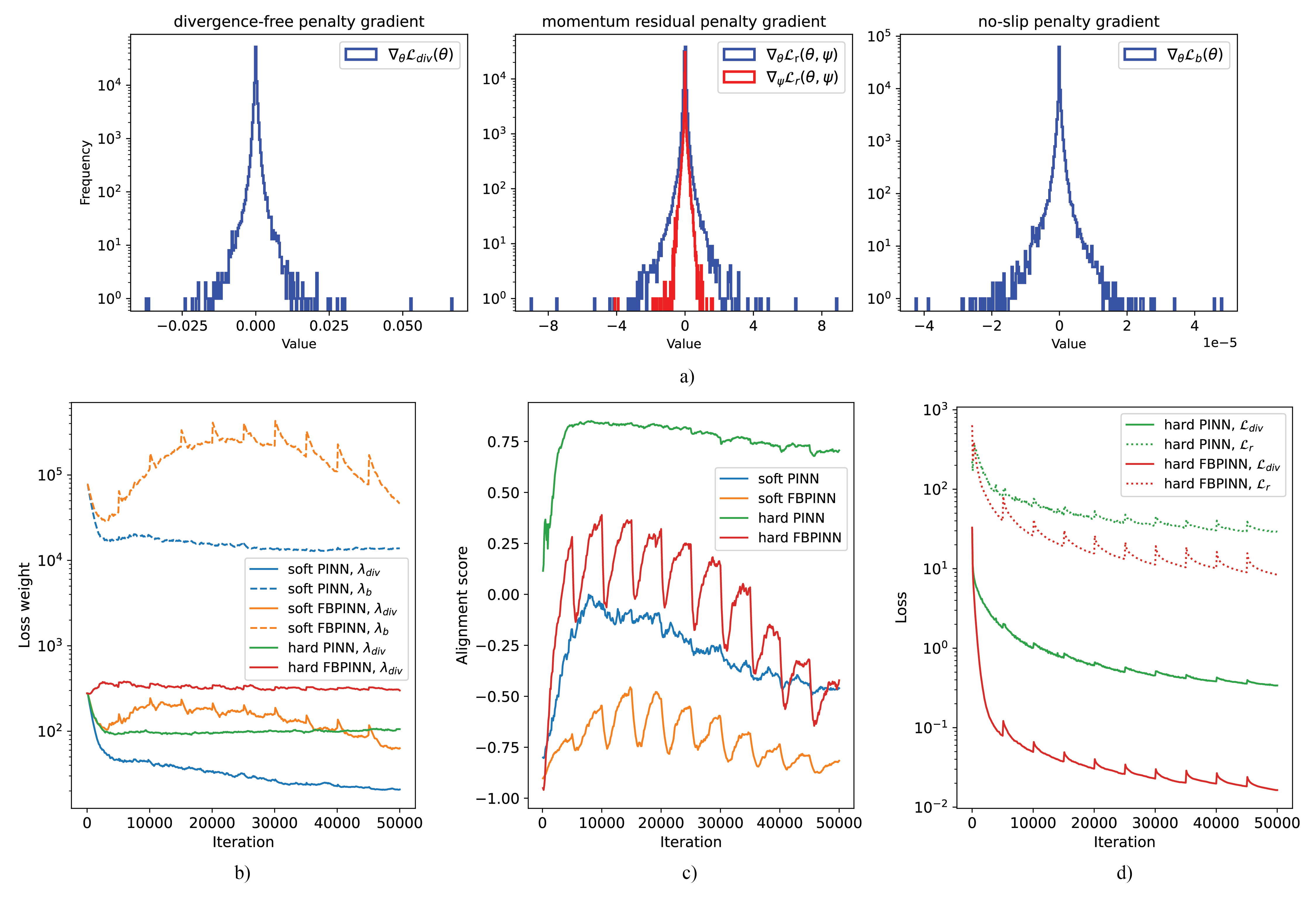}
\caption{\label{fig:alignment} Analysis of the results for the 64-perforation case. a) Histograms of the back-propagated loss gradients after 50,000 iterations for the soft-constrained FBPINN model. b) Adaptive loss weights and c) alignment scores for the soft- and hard-constrained variants of the PINN and FBPINN models. d) Loss curves for the hard-constrained PINN and FBPINN models.}
\end{figure}





\subsubsection{Sensitivity analysis on the degree of localisation in FBPINNs}
The training of hard-constrained FBPINNs is influenced by the degree of localisation introduced by both the amount of subdomain overlap controlled by the window functions, and the boundary distance functions used to enforce the hard constraints on the perforations.
To investigate the influence of these localisation parameters, we conduct multiple tests of the 64-perforation case, using five different random seeds for each test to assess the robustness of the models.
Table \ref{table:overlap_ratio} presents the $L^2$ errors in the overall domain, $L^1$ errors in the microstructure region and the compute time, averaged over the 5 random seeds, for subdomain overlap ratios $\delta$ = 1.2, 1.4, 2.0 and 2.4. As discussed in Section \ref{section:DD}, increasing the amount of overlap leads to more stable training and more accurate predictions, but at the cost of longer training times; increasing the overlap ratio from 2.0 to 2.4 leads to a 21\% reduction in $L^2$ errors on average, but a 58\% increase in the compute time. An overlap ratio of 2.0 is recommended for a good balance between training efficiency and accuracy.
Tables \ref{table:global_alpha} and \ref{table:pinn_alpha} provide results for hard-constrained FBPINNs and PINNs, respectively, for different values of the hard-constraint parameters: $a$, ranging from no $\tanh$ modulation to $a = 25$, and $m$, ranging from 2 to 12, where increasing both parameters increases the degree of localisation (Fig. \ref{fig:SDF design}). By comparing the case with no $\tanh$ modulation with the case with $a=5$, it is evident that stronger localisation of the distance function $l_{\partial \Omega^{\mathrm{p}}}$ leads to more accurate predictions. However, as the degree of localisation increases further with the increase in $a$ and $m$, the errors increase, most likely due to the stiffness introduced by the increased complexity of $\Delta l_{\partial \Omega^{\mathrm{p}}}$. This trend is consistent for both PINNs and FBPINNs. 

FBPINNs outperform conventional PINNs not only in accuracy and speed, but also in robustness to network initialisation. 
Tables \ref{table:fbpinn_alpha_l2_error_range} and \ref{table:pinn_alpha_l2_error_range} give the seed-wise minimum and maximum relative $L^2$ errors for hard-constrained FBPINNs and PINNs, respectively. While PINNs show good results for some seeds, the training is prone to instabilities that depend heavily on the choice of random seed, which controls the randomised Fourier input features. Overall, FBPINNs are more robust to the choice of random seed and consistently give highly accurate predictions.

\begin{table}[!ht]
\centering
\caption{\centering{Relative $L^2$ and $L^1$ errors and compute times for varying overlap ratios $\delta$ for 64 periodically arranged perforations. The results are averaged over the 5 seeds.}}
\label{table:overlap_ratio}
\small
\begin{tabular}{cccccccc}
\toprule
$\delta$ & \multicolumn{3}{c}{Relative $L^2$ error} &  \multicolumn{3}{c}{Relative $L^1$ error}  & Compute time \\
& $u_1$ & $u_2$ & $p$ & $u_1$ & $u_2$ & $p$ & (min)\\
\midrule
1.2 & \num{9.23E-2} & \num{1.45E-1} & \num{2.50E-1} & \num{3.02E-1} & \num{3.34E-1} & \num{2.52E-1} & 8.59 \\
1.4 & \num{2.45E-2} & \num{4.18E-2} & \num{6.32E-2} & \num{9.04E-2} & \num{1.08E-1} & \num{5.99E-2} & 10.32 \\
2.0 & \num{6.42E-03} & \num{4.90E-03} & \num{6.89E-03} & \num{1.97E-02} & \num{2.28E-02} & \num{7.57E-03} & 13.95 \\
2.4 & \num{5.22E-3} & \num{4.45E-3} & \num{4.43E-3} & \num{1.70E-2} & \num{1.81E-2} & \num{4.73E-3} & 22.00 \\
\bottomrule
\end{tabular}
\end{table}


\begin{table}[h!]
\centering
\caption{\centering{Effect of the global hard-constraint parameters $a$ and $m$ for hard-constrained FBPINNs. The first row represents a case without hyperbolic tangent modulation. The results are averaged over the 5 seeds.}}
\label{table:global_alpha}
\small
\begin{tabular}{ccccccccc}
\toprule
$a$ & $m$ & \multicolumn{3}{c}{Relative $L^2$ error} & \multicolumn{3}{c}{Relative $L^1$ error} & Compute time \\
& & $u_1$ & $u_2$ & $p$ & $u_1$ & $u_2$ & $p$ & (min)\\
\midrule
- & 2 & \num{8.05E-03} & \num{1.20E-02} & \num{1.51E-02} & \num{3.14E-02} & \num{3.72E-02} & \num{1.56E-02} & 13.85 \\
5 & 2 & \num{6.42E-03} & \num{4.90E-03} & \num{6.89E-03} & \num{1.97E-02} & \num{2.28E-02} & \num{7.57E-03} & 13.95 \\
10 & 4 & \num{5.85E-03} & \num{4.99E-03} & \num{6.19E-03} & \num{1.91E-02} & \num{2.32E-02} & \num{6.86E-03} & 14.04 \\
15 & 8 & \num{5.65E-03} & \num{6.81E-03} & \num{8.08E-03} & \num{2.17E-02} & \num{2.96E-02} & \num{8.65E-03} & 14.45 \\
25 & 12 & \num{1.54E-02} & \num{3.23E-02} & \num{3.60E-02} & \num{5.54E-02} & \num{1.05E-01} & \num{3.67E-02} & 14.48 \\
\bottomrule
\end{tabular}

\end{table}

\begin{table}[h!]
\centering
\caption{\centering{Effect of the global hard-constraint parameters $a$ and $m$ for hard-constrained PINNs. The first row represents a case without hyperbolic tangent modulation. The results are averaged over 5 random seeds.}}
\label{table:pinn_alpha}
\small
\begin{tabular}{ccccccccc}
\toprule
$a$ & $m$ & \multicolumn{3}{c}{Relative $L^2$ error} & \multicolumn{3}{c}{Relative $L^1$ error} & Compute time \\
& & $u_1$ & $u_2$ & $p$ & $u_1$ & $u_2$ & $p$ & (min)\\
\midrule
-  & 2  & \num{3.40E-01} & \num{5.38E-01} & \num{8.37E-01} & \num{8.55E-01} & \num{8.05E-01} & \num{8.58E-01} & 49.21 \\
5  & 2  & \num{7.10E-02} & \num{1.12E-01} & \num{1.60E-01} & \num{1.91E-01} & \num{1.93E-01} & \num{1.68E-01} & 50.05 \\
10 & 4  & \num{2.47E-01} & \num{3.31E-01} & \num{4.10E-01} & \num{4.00E-01} & \num{4.16E-01} & \num{4.16E-01} & 50.14 \\
15 & 8  & \num{3.12E-01} & \num{4.82E-01} & \num{8.25E-01} & \num{8.59E-01} & \num{8.51E-01} & \num{8.59E-01} & 50.25 \\
25 & 12 & \num{3.91E-01} & \num{5.59E-01} & \num{1.01E+00} & \num{9.77E-01} & \num{9.69E-01} & \num{1.00E+00} & 49.93 \\
\bottomrule
\end{tabular}
\end{table}

\begin{table}[h!]
\centering
\caption{\centering{Seed-wise minimum and maximum relative $L^2$ errors for hard-constrained FBPINNs.} }
\label{table:fbpinn_alpha_l2_error_range}
\small
\begin{tabular}{ccccc}
\toprule
$a$ & $m$ & \multicolumn{3}{c}{Relative $L^2$ error range}  \\
& & $u_1$ & $u_2$ & $p$ \\
\midrule
-  & 2
& [\num{7.31E-03}, \, \num{9.88E-03}]
& [\num{6.26E-03}, \, \num{1.71E-02}]
& [\num{1.25E-02}, \, \num{1.80E-02}]
\\

5  & 2
& [\num{6.23E-03}, \, \num{6.55E-03}]
& [\num{4.87E-03}, \, \num{5.00E-03}]
& [\num{6.50E-03}, \, \num{7.17E-03}]
\\

10 & 4
& [\num{5.66E-03}, \, \num{6.06E-03}]
& [\num{4.93E-03}, \, \num{5.12E-03}]
& [\num{5.77E-03}, \, \num{6.61E-03}]
\\

15 & 8
& [\num{5.58E-03}, \, \num{5.74E-03}]
& [\num{6.35E-03}, \, \num{7.83E-03}]
& [\num{7.64E-03}, \, \num{8.70E-03}]
\\

25 & 12
& [\num{1.39E-02}, \, \num{1.64E-02}]
& [\num{2.84E-02}, \, \num{3.55E-02}]
& [\num{3.26E-02}, \, \num{3.81E-02}]
\\
\bottomrule
\end{tabular}
\end{table}

\begin{table}[H]
\centering
\caption{\centering{Seed-wise minimum and maximum relative $L^2$ errors for hard-constrained PINNs.}}
\label{table:pinn_alpha_l2_error_range}
\small
\begin{tabular}{ccccc}
\toprule
$a$ & $m$ & \multicolumn{3}{c}{Relative $L^2$ error range}  \\
& & $u_1$ & $u_2$ & $p$ \\
\midrule
-  & 2
& $[\num{2.28E-01}, \, \num{4.27E-01}]$
& $[\num{3.54E-01}, \, \num{6.92E-01}]$
& $[\num{5.73E-01}, \, \num{9.99E-01}]$
\\

5  & 2
& $[\num{4.60E-03}, \, \num{3.34E-01}]$
& $[\num{4.80E-03}, \, \num{5.42E-01}]$
& $[\num{3.20E-03}, \, \num{7.79E-01}]$
\\

10 & 4
& $[\num{6.40E-03}, \, \num{7.74E-01}]$
& $[\num{7.40E-03}, \, \num{9.89E-01}]$
& $[\num{9.90E-03}, \, \num{1.13E+00}]$
\\

15 & 8
& $[\num{1.62E-01}, \, \num{3.61E-01}]$
& $[\num{2.68E-01}, \, \num{5.46E-01}]$
& $[\num{4.03E-01}, \, \num{9.85E-01}]$
\\

25 & 12
& $[\num{2.33E-01}, \, \num{6.63E-01}]$
& $[\num{2.64E-01}, \, \num{9.16E-01}]$
& $[\num{9.96E-01}, \, \num{1.05E+00}]$
\\
\bottomrule
\end{tabular}
\end{table}

\subsection{Random perforations}
In this section, three cases with random perforations are studied to represent more realistic arrangements of fibres within fibre bundles in composite reinforcements. In these random arrangements, there can be regions where fibres are less densely packed, leading to preferential flow pathways with higher velocities. 
The random arrangements are constructed by randomly generating each circular perforation and sequentially adding them to the domain without overlap. We consider 36, 64 and 100 perforations, progressively adding them to a subregion with a side length of 0.68 centred in the domain, to observe the effect of the increased frequency (or equivalently, the fibre volume fraction) on training convergence. The test case with 100 perforations corresponds to a fibre volume fraction of 40\%. 
In order to accommodate larger numbers of perforations in the unit domain, the fibre diameter is reduced to 0.05, and the number of subdomains in each dimension is increased to 28 accordingly, leading to a total of 784 subdomains and 228,144 parameters to train in total. In all cases, 120,000 collocation points are randomly sampled from the overall domain, with an additional 20,000 points in the microstructure region.

In Fig. \ref{fig:random_compare}, the $L^2$ errors and $L^1$ errors in the microstructure are plotted for 36, 64 and 100 random perforations. Although the errors increase slightly with increasing number of perforations, even with 100 perforations the hard-constrained FBPINN model demonstrates good training convergence, achieving $L^2$ errors of less than 5\% after only 10,000 iterations and less than 1\% after 70,000 iterations.

\begin{figure}
\centering
\includegraphics[width=0.87\linewidth]{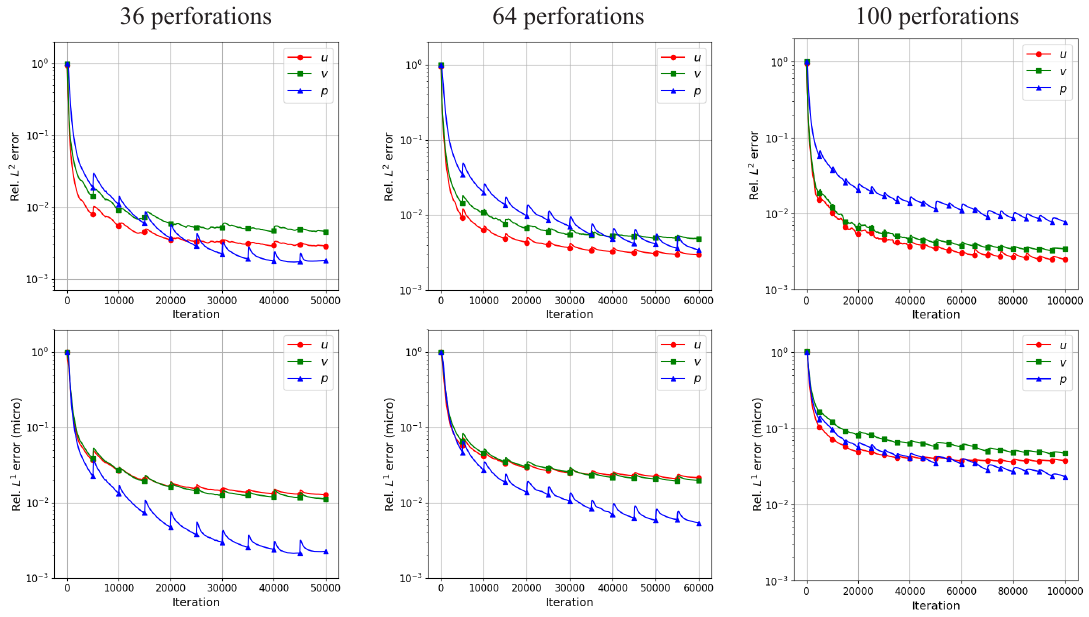}
\caption{\label{fig:random_compare} Comparison of the overall $L^2$ errors and microstructure $L^1$ errors for the hard-constrained FBPINN model in random arrangements of 36, 64, and 100 perforations. The quantities $u$, $v$, and $p$ denote the errors in the horizontal velocity, vertical velocity, and pressure, respectively.}
\end{figure}

Figs. \ref{fig:64rand_local} and \ref{fig:100rand_local} show the velocity and pressure contours in the microstructure region for 64 and 100 perforations, respectively. The predictions in the overall domain for all three random arrangements are provided in \ref{appendix:rand_solutions}. 
As the number of perforations increases, the resistance to flow increases, leading to a higher pressure gradient across the perforated region and lower velocities between the perforations. Consequently, both the increase in solution frequency within the microstructure and the difference in velocity magnitude between the microstructure and the surrounding region make such flows difficult for PINNs to approximate. We show, however, that our FBPINN approach captures the multi-scale fluid flow structure quite well.  


\begin{figure}[H]
\centering
\includegraphics[width=0.7\linewidth]{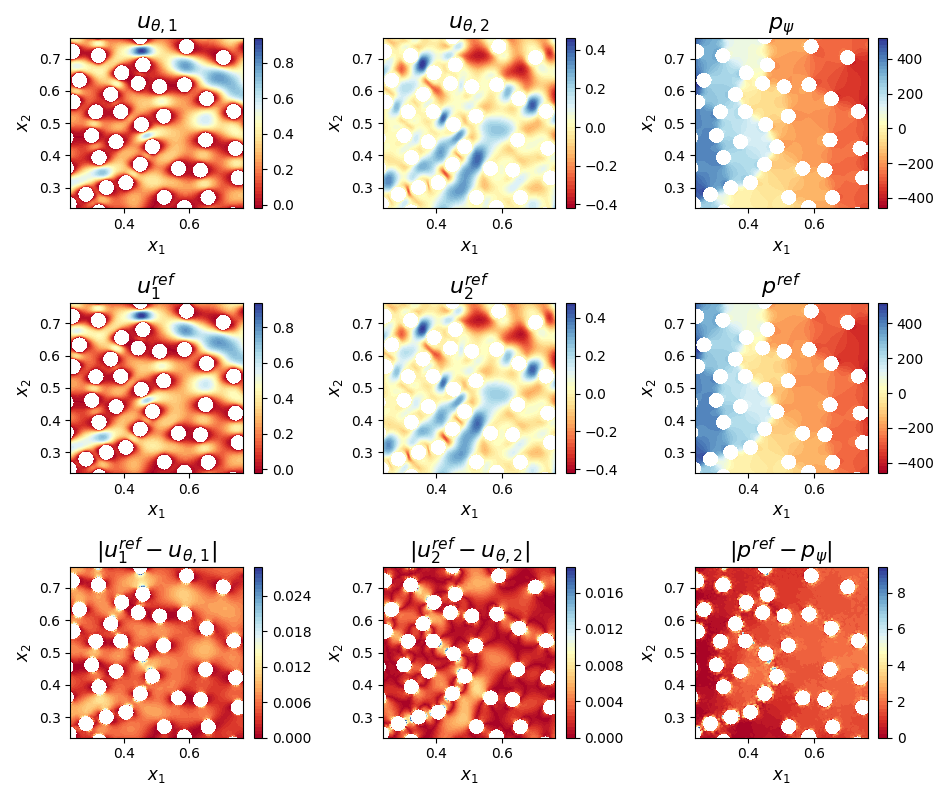}
\caption{\label{fig:64rand_local} Comparison of the hard-constrained FBPINN predictions and reference solution in the microstructure region for the velocity ($u_1$: horizontal, $u_2$: vertical) and pressure for a random arrangement of 64 perforations.}
\end{figure}
\begin{figure}[H]
\centering
\includegraphics[width=0.7\linewidth]{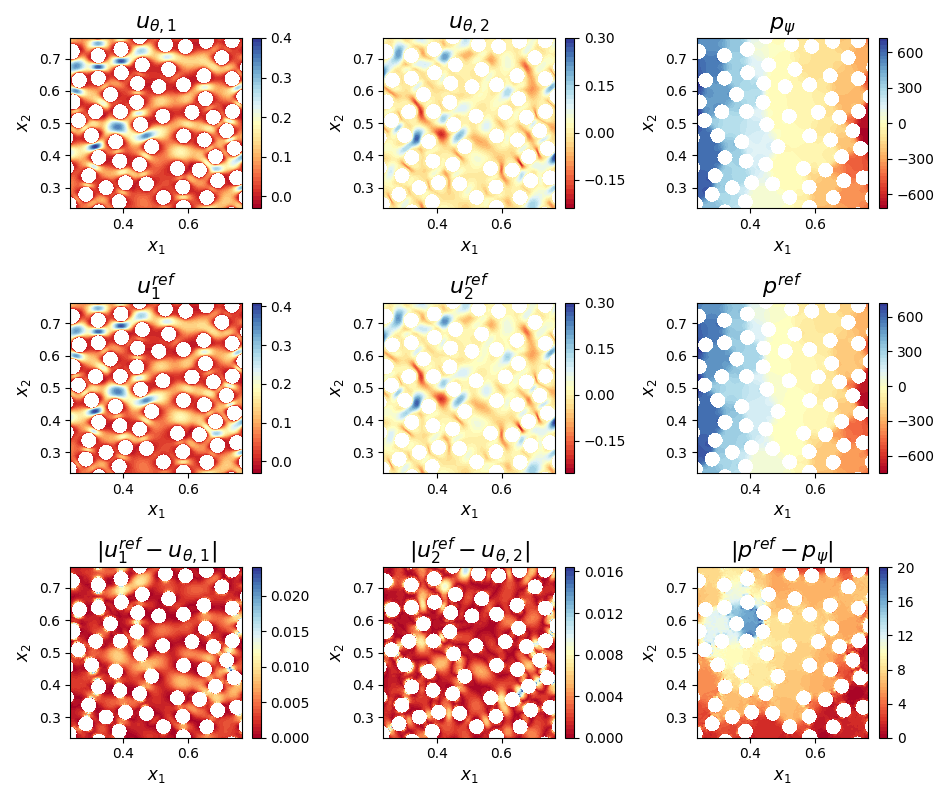}
\caption{\label{fig:100rand_local} Comparison of the hard-constrained FBPINN predictions and reference solution in the microstructure region for the velocity ($u_1$: horizontal, $u_2$: vertical) and pressure for a random arrangement of 100 perforations.}
\end{figure}

To verify that our approach provides accurate predictions of the flow field independent of the domain geometry, we perform further tests on different random arrangements of 100 perforations. Five different arrangements with the resulting vertical velocity component predictions are visualised in Fig.~\ref{fig:random_vel}. The $L^2$ errors and $L^1$ errors in the microstructure after 100,000 iterations of training are summarised in Table \ref{table:random_ablation}, showing good accuracy and only small variations for the different geometries.

\begin{table}[h!]
\centering
\caption{\centering{Relative $L^2$ and $L^1$ errors for different random arrangements of 100 perforations.}}
\small
\begin{tabular}{ccccccc}
\toprule
Test case & \multicolumn{3}{c}{Relative $L^2$ error} & \multicolumn{3}{c}{Relative $L^1$ error } \\
& $u_1$ & $u_2$ & $p$ & $u_1$ & $u_2$ & $p$ \\
\midrule
1 & \num{2.50E-03} & \num{3.42E-03} & \num{7.77E-03} 
& \num{3.78E-02} & \num{4.72E-02} & \num{2.32E-02} \\
2 & \num{2.83E-03} & \num{3.35E-03} & \num{5.17E-03} 
& \num{3.71E-02} & \num{3.61E-02} & \num{8.69E-03} \\
3 & \num{2.61E-03} & \num{3.40E-03} & \num{5.09E-03} 
& \num{3.92E-02} & \num{4.03E-02} & \num{9.85E-03} \\
4 & \num{3.74E-03} & \num{4.34E-03} & \num{9.37E-03} 
& \num{4.28E-02} & \num{4.84E-02} & \num{1.78E-02} \\
5 & \num{3.22E-03} & \num{3.71E-03} & \num{5.53E-03} 
& \num{3.66E-02} & \num{3.93E-02} & \num{9.36E-03} \\
\bottomrule
\end{tabular}
\label{table:random_ablation}
\end{table}

\begin{figure}[H]
\centering
\includegraphics[width=\linewidth]{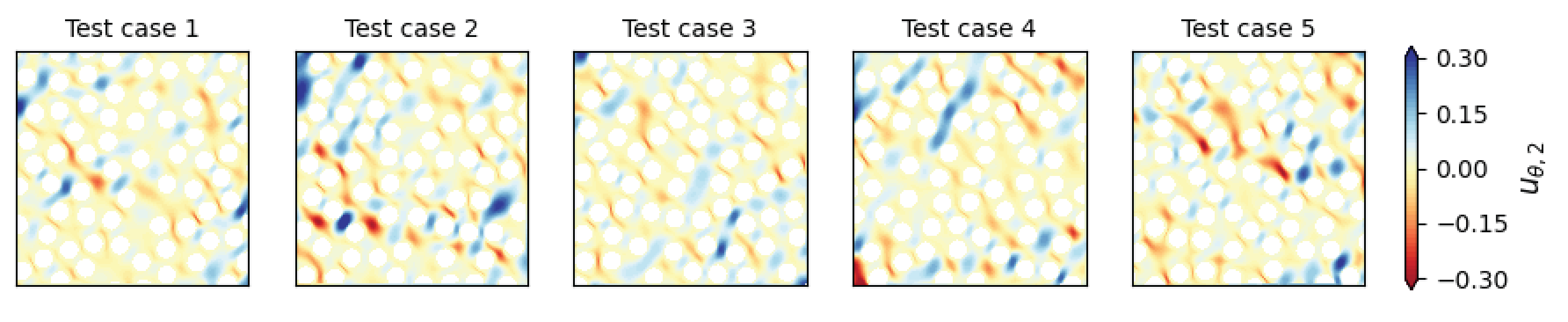}
\caption{\label{fig:random_vel} FBPINN model predictions for the vertical velocity component for five different random arrangements of 100 perforations.}
\end{figure}

        

\section{Conclusions}
The current study presents hard-constrained FBPINNs as a practical approach for approximating viscous fluid flow in highly perforated domains, demonstrating significantly faster training, higher accuracy, and better robustness -- to both network initialisation and the choice of hyperparameters -- compared to modern regular PINN architectures. Each ingredient of the method addresses a distinct source of difficulty.
Enforcing the no-slip condition as a hard constraint eliminates the associated penalty term and thereby reduces the stiffness of the discrete gradient flow and removes the boundary-induced gradient conflicts that we observe for the soft-constrained variants. The finite-basis ansatz mitigates spectral bias through the localisation of the window functions and the frequency rescaling induced by the local input normalisation, which replaces a single high-frequency global problem by a collection of lower-frequency local ones. Non-dimensionalisation, together with the output unnormalisation, places the viscous and pressure gradient contributions in the momentum residual on comparable scales, further reducing residual stiffness and yielding balanced learning of the velocity and pressure
fields. Hyperbolic tangent modulation of the aggregated distance function confines the no-slip constraint to the vicinity of the perforations and shapes a suitable geometric prior for the Laplacian of the constrained velocity. Finally, a carefully chosen subdomain overlap and residual-based adaptive collocation refinement improve accuracy while keeping the computational cost balanced. Together, these choices yield accurate predictions consistently across different numbers and arrangements of perforations, and show that hard constraints alone are not sufficient: domain decomposition and scaling principles are equally crucial for effective training.

\section*{Data availability}
The code required to reproduce the results will be made available upon publication of the manuscript.

\section*{Acknowledgements} D. Korolev and M. Duhovic acknowledge the support of the Federal Ministry of Education and Research, Germany (funding reference: 01IS24081) under project HybridSolver. J. Lee and S.S. Kim were supported by the National R\&D Program through the National Research Foundation of Korea (NRF) funded by the Ministry of Science and ICT (RS-2023-00260461).

\appendix
\section{Proofs}

\subsection{Proof of Proposition~\ref{UA proposition}}\label{APP1}

Let $\Omega \subset D$ denote the perforated domain, $\{D_i\}_{i=1}^{N}$ its overlapping rectangular subdomain partition, and $\{\omega_i\}_{i=1}^{N}$ a subordinate partition of unity satisfying $0 \leq \omega_i \leq 1$ and $\sum_{i=1}^{N} \omega_i = 1$. Define $\Omega_i := D_i \cap \Omega$ for each $i \in \{1, \dots, N\}$. Since the diameter of the perforations is smaller than the side length of $D_i$, each intersection of $\Omega$ with $D_i$ either forms a rectangular cut or a cut by a smooth arc segment of the perforated boundary. In both cases, this implies that each $\Omega_i$ is Lipschitz. Moreover, restricting the partition of unity $\{\omega_i\}_{i=1}^N$ on $D$ to $\Omega$ by setting $\omega_i^{r}:= \omega_i|_{\Omega}$ yields a partition of unity on $\Omega$ subordinate to $\Omega_i := D_i \cap \Omega$. With a slight abuse of notation, we then set $\omega_{i}:=\omega^{r}_{i}$ for each $i$.

We proceed by constructing an FBPINN ansatz $\boldsymbol{v}_{\theta}$ that can approximate $\boldsymbol{v} \in H^{s}(\Omega)^{d}$ in the Sobolev norm  $\lVert \,\cdot\,\rVert_{H^{s}(\Omega)^{d}}$. Let $T_i : \Omega_i \rightarrow \tilde{\Omega}_i$ be the normalisation map, satisfying the properties of a diffeomorphism. Then, restricting $\boldsymbol{v}$ to $\Omega_{i}$ preserves the Sobolev class, i.e, $\boldsymbol{v}^{i}:=\restr{\boldsymbol{v}}{\Omega_{i}} \in H^{s}(\Omega_{i})^{d}$. Since $T_{i}$ is a diffeomorphism, we have $\tilde{\boldsymbol{v}}^{i}:= \boldsymbol{v}^{i} \circ T_{i}^{-1} \in H^{s}(\tilde{\Omega}_{i})^{d}$, and each $\tilde{\Omega}_{i}:=T(\Omega_{i})$ is also Lipschitz. Recall that for each $k>s$, $H^{k}(\tilde{\Omega}_{i})^{d}$ is dense in $H^{s}(\tilde{\Omega}_{i})^{d}$. In addition, since $\tilde{\Omega}_{i}$ is a bounded Lipschitz domain, for each $\delta_{i}>0$ and each $\boldsymbol{w}^{i} \in H^k(\tilde{\Omega}_{i})^{d}$ with $\lVert \tilde{\boldsymbol{w}}^{i} - \tilde{\boldsymbol{v}}^{i} \rVert_{H^{s}(\tilde{\Omega}_{i})^{d}} < \delta_{i}/2$ (by density), there exists (see \cite[Theorem 4.9]{guhring2021approximation}) a fully connected $\tanh$ neural network $\boldsymbol{NN}^{i}: \tilde{\Omega}_{i} \rightarrow \mathbb{R}^{d}$ with parameters $\theta^{i}$ such that
\begin{align}\label{estimate 1}
\lVert \tilde{\boldsymbol{v}}^{i} - \boldsymbol{NN}^{i}\rVert_{H^{s}(\tilde{\Omega}_{i})^{d}} \leq \lVert \tilde{\boldsymbol{v}}^{i} - \tilde{\boldsymbol{w}}^{i} \rVert_{H^{s}(\tilde{\Omega}_{i})^{d}}  \, + \, \rVert \tilde{\boldsymbol{w}}^{i} - \boldsymbol{NN}^{i}\rVert_{H^{s}(\tilde{\Omega}_{i})^{d}} <    \delta_{i}.
\end{align}
In addition, by Sobolev norm equivalence under smooth changes of variables, we get 
\begin{align}\label{estimate 2}
\lVert \boldsymbol{v}^{i} - \boldsymbol{v}(\cdot; {\theta^{i}}) \rVert_{H^{s}(\Omega_{i})} \leq C(T_{i})\lVert  \tilde{\boldsymbol{v}}^{i} - \boldsymbol{NN}^{i} \rVert_{H^{s}(\tilde{\Omega}_{i})},
\end{align}
where $\boldsymbol{v}(\cdot; {\theta^{i}}):= \boldsymbol{NN}^{i} \circ T_{i}$ and $C(T_{i})>0$ is some constant which depends on $T_{i}$. Furthermore, since $\sum_{i=1}^{N} \omega_i = 1$, observe that $\boldsymbol{v} = \sum_{i=1}^{N} \omega_{i} \, \boldsymbol{v}$. Thus, 
\begin{align}\label{estimate 3}
\lVert \boldsymbol{v} - \boldsymbol{v}_{\theta}\rVert_{H^{s}(\Omega)^{d}} = \big\lVert \sum_{i=1}^{N} \omega_{i} (\boldsymbol{v} - \boldsymbol{v}_{\theta}) \big\rVert_{H^{s}(\Omega)^{d}} \leq \sum_{i=1}^{N} \lVert \omega_{i} (\boldsymbol{v} - \boldsymbol{v}_{\theta})\rVert_{H^{s}(\Omega)^{d}} \leq \sum_{i=1}^{N} C_{\omega_{i}} \lVert \boldsymbol{v}^{i} - \boldsymbol{v}(\cdot; \theta^{i})\rVert_{H^{s}(\Omega_{i})^{d}},
\end{align}
where $C_{\omega_{i}}>0$ is some constant which depends on the derivatives of $\omega_i$ up to the $s$-th order. Since $\delta_{i}$ was arbitrary, let $\varepsilon>0$ be arbitrary and fix $\delta_{i} = \frac{\varepsilon}{N C(T_{i}) C_{\omega_{i}}}$. For each $i \in \{1, \dots ,N \}$, construct $\boldsymbol{NN}_{i}$ satisfying \eqref{estimate 1} and consider the respective finite-basis neural network $\boldsymbol{v}_{\theta}:= \sum_{i=1}^{N} \omega_{i} \, \boldsymbol{v}(\cdot; \theta^{i})$ with $\theta = \{\theta^{i} \}_{i=1}^{N}$. By applying the estimates \eqref{estimate 1} and \eqref{estimate 2} to the estimate \eqref{estimate 3}, we obtain  
\begin{align*}
\lVert \boldsymbol{v} - \boldsymbol{v}_{\theta}\rVert_{H^{s}(\Omega)^{d}} \leq \sum_{i=1}^{N} C(T_{i}) \, C_{\omega_{i}} \, \lVert \tilde{\boldsymbol{v}}_{i} - \boldsymbol{NN}^{i}\rVert_{H^{s}(\tilde{\Omega}_{i})^{d}} <  \frac{\varepsilon}{N}\sum_{i=1}^{N} \,1 = \varepsilon.
\end{align*}
Since $\varepsilon$ was arbitrarily chosen, the claim follows. 

\subsection{Proof of Theorem~\ref{Fourier transform proposition}}\label{APP2}
Using the FBPINN ansatz and the linear unnormalisation, we have
\begin{align*}
\boldsymbol{v}_\theta(x)
=
\sum_{i=1}^{N}
\omega_i(x)
\left[
\alpha \,\boldsymbol{NN}_i\!\left(\frac{x-\mu_i}{\sigma_i}\right)
\right].
\end{align*}
The linearity of the Fourier transform and $\mathrm{supp}(\omega_i) \subset \Omega_i$ of each $\omega_i$ yield
\begin{align*}
\mathcal{F}[\boldsymbol{v}_\theta](\xi)
&=
\alpha_{\mathrm{un}} 
\sum_{i=1}^{N}
\int_{\Omega_{i}}
\omega_i(x)
\boldsymbol{NN}_i\!\left(\frac{x-\mu_i}{\sigma_i}\right)
e^{-2\pi \mathrm{i}x\cdot \xi}\,dx .
\end{align*}
We further use the local change of variables
\begin{align*}
\tilde{x}
=
\frac{x-\mu_i}{\sigma_i},
\qquad
x
=
\mu_i+\sigma_i\tilde{x},
\qquad
dx
=
\sigma_i^d\,d\tilde{x}.
\end{align*}
Thus, the subdomain $\Omega_i$ is mapped to the scaled subdomain $\tilde{\Omega}_i
:=\{\tilde{x}\in\mathbb{R}^d: \mu_i+\sigma_i\tilde{x}\in \Omega_i \}$. Define the local-coordinate partition-of-unity window $\tilde{\omega}_i(\tilde{x}):=\omega_i(\mu_i+\sigma_i\tilde{x})$. In these terms, we obtain
\begin{align*}
\int_{\Omega_{i}}
\omega_i(x)
\boldsymbol{NN}_i\!\left(\frac{x-\mu_i}{\sigma_i}\right)
e^{-2\pi \mathrm{i}x\cdot \xi}\,dx
=\sigma_i^d e^{-2\pi \mathrm{i}\mu_i\cdot \xi}
\int_{\tilde{\Omega}_i}
\tilde{\omega}_i(\tilde{x})
\boldsymbol{NN}_i(\tilde{x})
e^{-2\pi \mathrm{i}\tilde{x}\cdot(\sigma_i \,\xi)}
\,d\tilde{x}.
\end{align*}
Since $\operatorname{supp}(\tilde{\omega}_i)\subset \tilde{\Omega}_i$, we may regard $\tilde{\omega}_i\boldsymbol{NN}_i$ as a compactly supported function on $\mathbb R^d$. Therefore, we get
\begin{align*}
\int_{\tilde{\Omega}_i}
\tilde{\omega}_i(\tilde{x})
\boldsymbol{NN}_i(\tilde{x})
e^{-2\pi \mathrm{i}\tilde{x}\cdot(\sigma_i \,\xi)}
\,d\tilde{x}
=:
\mathcal{F}_{\tilde{x}}
[
\tilde{\omega}_i\boldsymbol{NN}_i
](\sigma_i \,\xi).
\end{align*}
Since the Fourier transform of the product of two functions is equal to the convolution of their Fourier transforms in Fourier space, we get
\begin{align*}
\mathcal{F}_{\tilde{x}}
\big[
\tilde{\omega}_i\boldsymbol{NN}_i
\big](\sigma_i\xi)
=
\bigg(
\mathcal{F}_{\tilde{x}}[\tilde{\omega}_i]
*
\mathcal{F}_{\tilde{x}}[\boldsymbol{NN}_i]
\bigg)(\sigma_i\xi).
\end{align*}
Combining the previous identities yields
\begin{align*}
\mathcal{F}[\boldsymbol{v}_\theta](\xi)
=\alpha_{\mathrm{un}}  \sum_{i=1}^{N}
\sigma_i^d e^{-2\pi \mathrm{i}\mu_i\cdot \xi}
\bigg(
\mathcal{F}_{\tilde{x}}[\tilde{\omega}_i]
*
\mathcal{F}_{\tilde{x}}[\boldsymbol{NN}_i]
\bigg)(\sigma_i\xi).
\end{align*}
This proves the claim.

\label{app1}

\section{Solutions for random arrangements of perforations} \label{appendix:rand_solutions}
The velocity and pressure contours for the random arrangements of 36, 64 and 100 perforations from Section 4.2 are given in Figs. \ref{fig:36rand_global}, \ref{fig:64rand_global} and \ref{fig:100rand_global}, respectively. 
\begin{figure}[H]
\centering
\includegraphics[width=0.7\linewidth]{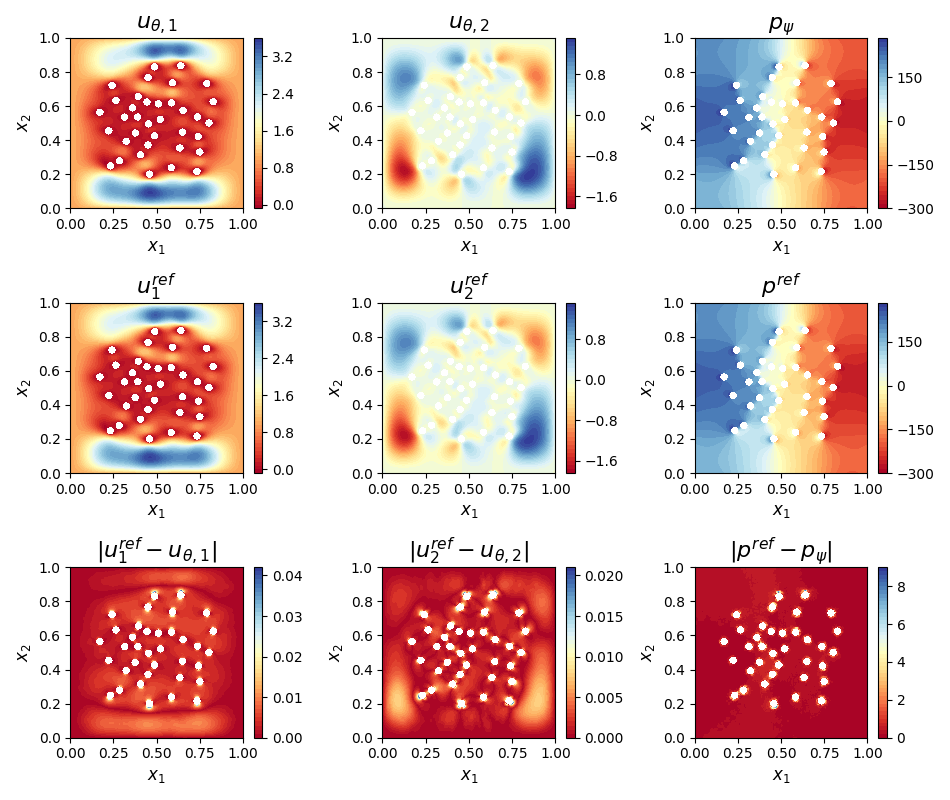}
\caption{\label{fig:36rand_global} Comparison of the hard-constrained FBPINN predictions and reference solution for the velocity ($u_1$: horizontal, $u_2$: vertical) and pressure for a random arrangement of 36 perforations}
\end{figure}
\begin{figure}[H]
\centering
\includegraphics[width=0.7\linewidth]{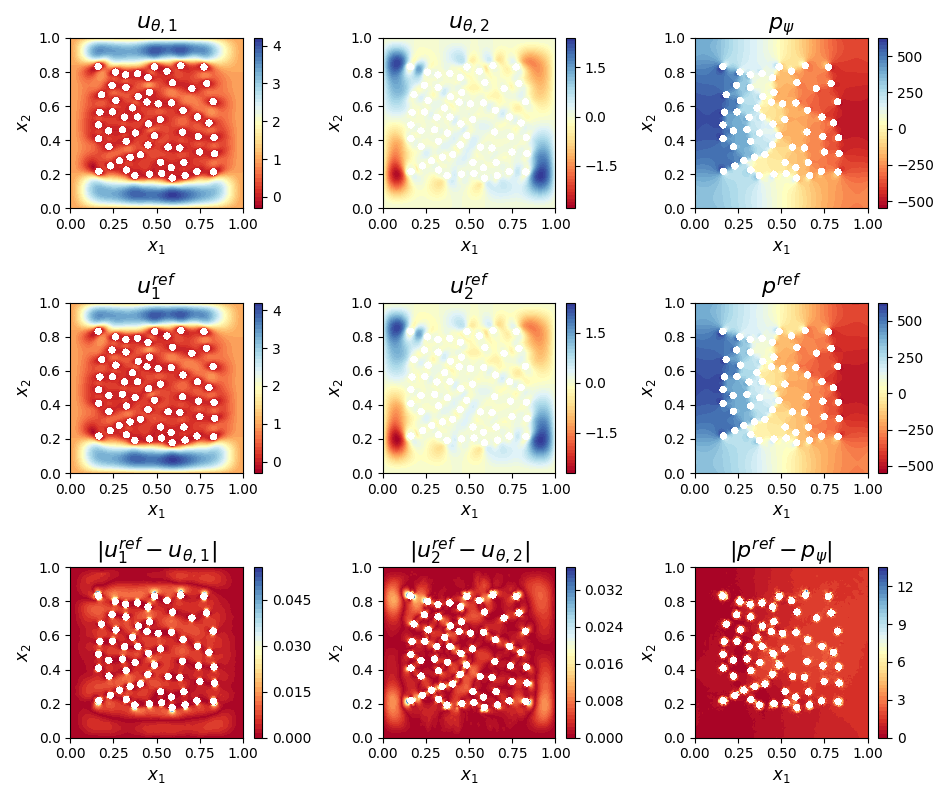}
\caption{\label{fig:64rand_global} Comparison of the hard-constrained FBPINN predictions and reference solution for the velocity ($u_1$: horizontal, $u_2$: vertical) and pressure for a random arrangement of 64 perforations}
\end{figure}
\begin{figure}[H]
\centering
\includegraphics[width=0.7\linewidth]{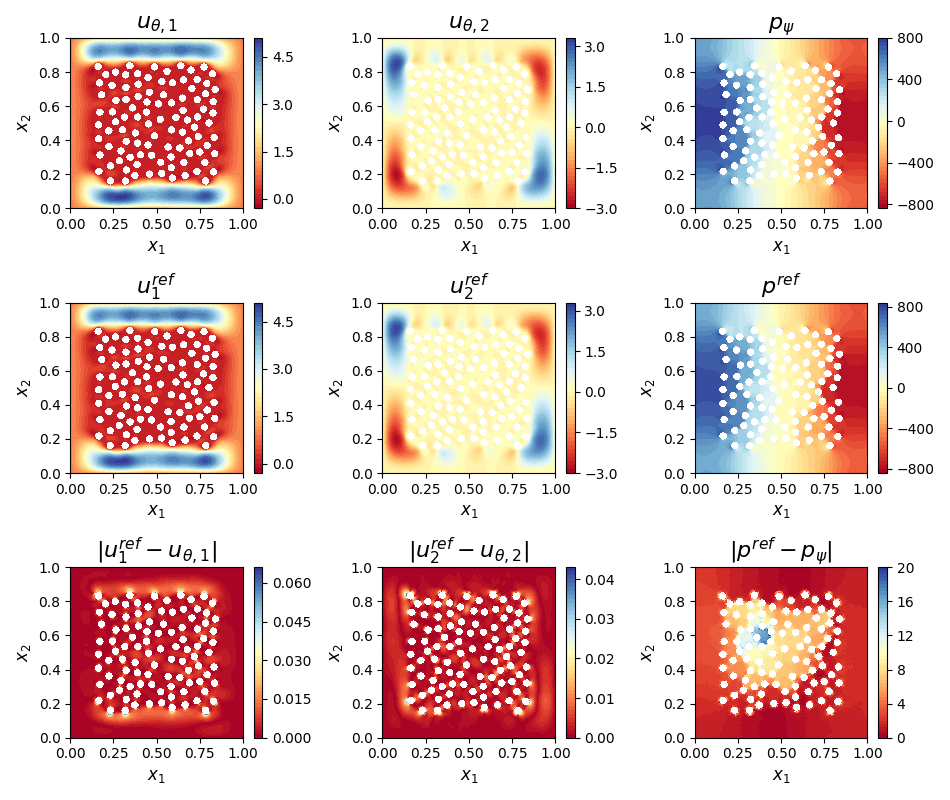}
\caption{\label{fig:100rand_global} Comparison of the hard-constrained FBPINN predictions and reference solution for the velocity ($u_1$: horizontal, $u_2$: vertical) and pressure for a random arrangement of 100 perforations}
\end{figure}



\bibliographystyle{elsarticle-num} 
\bibliography{bib.bib}

@article{moseley2023finite,
  title={Finite basis physics-informed neural networks ({FBPINNs}): a scalable domain decomposition approach for solving differential equations},
  author={Moseley, Ben and Markham, Andrew and Nissen-Meyer, Tarje},
  journal={Advances in Computational Mathematics},
  volume={49},
  number={4},
  pages={62},
  year={2023},
  publisher={Springer}
}

@article{hu2023augmented,
  title={Augmented Physics-Informed Neural Networks ({APINNs}): A gating network-based soft domain decomposition methodology},
  author={Hu, Zheyuan and Jagtap, Ameya D and Karniadakis, George Em and Kawaguchi, Kenji},
  journal={Engineering Applications of Artificial Intelligence},
  volume={126},
  pages={107183},
  year={2023},
  publisher={Elsevier}
}

@article{zeinhofer2025unified,
  title={A unified framework for the error analysis of physics-informed neural networks},
  author={Zeinhofer, Marius and Masri, Rami and Mardal, Kent--Andr{\'e}},
  journal={IMA Journal of Numerical Analysis},
  volume={45},
  number={5},
  pages={2988--3025},
  year={2025},
  publisher={Oxford University Press}
}

@article{de2024numerical,
  title={Numerical analysis of physics-informed neural networks and related models in physics-informed machine learning},
  author={De Ryck, Tim and Mishra, Siddhartha},
  journal={Acta Numerica},
  volume={33},
  pages={633--713},
  year={2024},
  publisher={Cambridge University Press}
}

@article{de2024error,
  title={Error estimates for physics-informed neural networks approximating the {N}avier--{S}tokes equations},
  author={De Ryck, Tim and Jagtap, Ameya D and Mishra, Siddhartha},
  journal={IMA Journal of Numerical Analysis},
  volume={44},
  number={1},
  pages={83--119},
  year={2024},
  publisher={Oxford University Press}
}

@article{mishra2023estimates,
  title={Estimates on the generalization error of physics-informed neural networks for approximating {PDE}s},
  author={Mishra, Siddhartha and Molinaro, Roberto},
  journal={IMA Journal of Numerical Analysis},
  volume={43},
  number={1},
  pages={1--43},
  year={2023},
  publisher={Oxford University Press}
}

@article{shin2023error,
  title={Error estimates of residual minimization using neural networks for linear {PDE}s},
  author={Shin, Yeonjong and Zhang, Zhongqiang and Karniadakis, George Em},
  journal={Journal of Machine Learning for Modeling and Computing},
  volume={4},
  number={4},
  year={2023},
  publisher={Begel House Inc.},
  pages={73-101}
}

@article{wang2021understanding,
  title={Understanding and mitigating gradient flow pathologies in physics-informed neural networks},
  author={Wang, Sifan and Teng, Yujun and Perdikaris, Paris},
  journal={SIAM Journal on Scientific Computing},
  volume={43},
  number={5},
  pages={A3055--A3081},
  year={2021},
  publisher={SIAM}
}

@article{wang2023expert,
  title={An expert's guide to training physics-informed neural networks},
  author={Wang, Sifan and Sankaran, Shyam and Wang, Hanwen and Perdikaris, Paris},
  journal={arXiv preprint arXiv:2308.08468},
  year={2023}
}

@article{korolev2026hybrid,
title = {Hybrid machine learning based scale bridging framework for permeability prediction of fibrous structures},
journal = {Composites Part A: Applied Science and Manufacturing},
volume = {202},
year = {2026},
issn = {1359-835X},
pages = {109458},
author = {Denis Korolev and Tim Schmidt and Dinesh K. Natarajan and Stefano Cassola and David May and Miro Duhovic and Michael Hintermüller}
}

@article{lu2021physics,
  title={Physics-informed neural networks with hard constraints for inverse design},
  author={Lu, Lu and Pestourie, Raphael and Yao, Wenjie and Wang, Zhicheng and Verdugo, Francesc and Johnson, Steven G},
  journal={SIAM Journal on Scientific Computing},
  volume={43},
  number={6},
  pages={B1105--B1132},
  year={2021},
  publisher={SIAM}
}

@article{liu2022unified,
  title={A unified hard-constraint framework for solving geometrically complex {PDEs}},
  author={Liu, Songming and Zhongkai, Hao and Ying, Chengyang and Su, Hang and Zhu, Jun and Cheng, Ze},
  journal={NeurIPS},
  volume={35},
  pages={20287--20299},
  year={2022}
}

@article{berg2018unified,
title = {A unified deep artificial neural network approach to partial differential equations in complex geometries},
journal = {Neurocomputing},
volume = {317},
pages = {28-41},
year = {2018},
author = {Berg, J. and Nyström, K},
}

@article{sun2020surrogate,
  title={Surrogate modeling for fluid flows based on physics-constrained deep learning without simulation data},
  author={Luning Sun and Han Gao and Shaowu Pan and Jian-Xun Wang},
  journal={Computer Methods in Applied Mechanics and Engineering},
  volume={361},
  pages={112732},
  year={2020},
}

@article{wang2023exact,
  title={Exact Dirichlet boundary physics-informed neural network {EPINN} for solid mechanics},
  author={Wang, Jiaji and Mo, YL and Izzuddin, Bassam and Kim, Chul-Woo},
  journal={Computer Methods in Applied Mechanics and Engineering},
  volume={414},
  pages={116184},
  year={2023},
  publisher={Elsevier}
}

@article{sukumar2022exact,
  title={Exact imposition of boundary conditions with distance functions in physics-informed deep neural networks},
  author={Sukumar, Natarajan and Srivastava, Ankit},
  journal={Computer Methods in Applied Mechanics and Engineering},
  volume={389},
  pages={114333},
  year={2022},
  publisher={Elsevier}
}

@article{bischof2025multi,
  title={Multi-objective loss balancing for physics-informed deep learning},
  author={Bischof, Rafael and Kraus, Michael A},
  journal={Computer Methods in Applied Mechanics and Engineering},
  volume={439},
  pages={117914},
  year={2025},
  publisher={Elsevier}
}

@inproceedings{rathore2024challenges,
  author    = {Rathore, Pratik and Lei, Weimu and Frangella, Zachary and Lu, Lu and Udell, Madeleine},
  title     = {Challenges in training {PINNs}: a loss landscape perspective},
  booktitle = {Proceedings of the 41st International Conference on Machine Learning},
  series    = {ICML},
  year      = {2024},
  address   = {Vienna, Austria}
}

@article{wright1999numerical,
  title={Numerical optimization},
  author={Wright, Stephen and Nocedal, Jorge},
  journal={Springer Science},
  volume={35},
  number={67-68},
  pages={7},
  year={1999}
}

@article{kingma2014adam,
  title={Adam: A method for stochastic optimization},
  author={Kingma, Diederik P},
  journal={arXiv preprint arXiv:1412.6980},
  year={2014}
}

@article{straub2025hard,
  title={Hard-constraining {Neumann} boundary conditions in physics-informed neural networks via {Fourier} feature embeddings},
  author={Straub, Christopher and Brendel, Philipp and Medvedev, Vlad and Rosskopf, Andreas},
  journal={arXiv preprint arXiv:2504.01093},
  year={2025}
}

@article{xie2024physics,
  title={Physics-specialized neural network with hard constraints for solving multi-material diffusion problems},
  author={Xie, Yuchen and Chi, Honghang and Wang, Yahui and Ma, Yu},
  journal={Computer Methods in Applied Mechanics and Engineering},
  volume={430},
  pages={117223},
  year={2024},
  publisher={Elsevier}
}

@article{raissi2019physics,
  title={Physics-informed neural networks: A deep learning framework for solving forward and inverse problems involving nonlinear partial differential equations},
  author={Raissi, Maziar and Perdikaris, Paris and Karniadakis, George E},
  journal={Journal of Computational physics},
  volume={378},
  pages={686--707},
  year={2019},
  publisher={Elsevier}
}

@article{toscano2025pinns,
  title={From {PINNs} to {PIKANs}: {R}ecent advances in physics-informed machine learning},
  author={Toscano, Juan Diego and Oommen, Vivek and Varghese, Alan John and Zou, Zongren and Ahmadi Daryakenari, Nazanin and Wu, Chenxi and Karniadakis, George Em},
  journal={Machine Learning for Computational Science and Engineering},
  volume={1},
  number={1},
  pages={1--43},
  year={2025},
  publisher={Springer}
}

@article{hintermuller2026hybrid,
  title={A hybrid physics-informed neural network based multiscale solver as a partial differential equation constrained optimization problem},
  author={Hinterm{\"u}ller, Michael and Korolev, Denis},
  journal={ESAIM: Control, Optimisation and Calculus of Variations},
  volume={32},
  pages={18},
  year={2026},
  publisher={EDP Sciences}
}

@book{johnson2009numerical,
  title={Numerical solution of partial differential equations by the finite element method},
  author={Johnson, Claes},
  year={2009},
  publisher={Courier Corporation}
}

@article{rohrhofer2023data,
  title={Data vs. {P}hysics: {T}he {A}pparent {P}areto {F}ront of {P}hysics-{I}nformed {N}eural {N}etworks}, 
  author={Rohrhofer, Franz M. and Posch, Stefan and Gößnitzer, Clemens and Geiger, Bernhard C.},
  journal={IEEE Access}, 
  volume={11},
  pages={86252-86261},
  year={2023}
}

@article{lu2021deepxde,
author = {Lu, Lu and Meng, Xuhui and Mao, Zhiping and Karniadakis, George Em},
title = {Deep{XDE}: {A} {D}eep {L}earning {L}ibrary for solving {D}ifferential {E}quations},
journal = {SIAM Review},
volume = {63},
number = {1},
pages = {208-228},
year = {2021},
}

@article{wu2023rar,
author = {Chenxi Wu and Min Zhu and Qinyang Tan and Yadhu Kartha and Lu Lu},
title = {A comprehensive study of non-adaptive and residual-based adaptive sampling for physics-informed neural networks},
journal = {Computer Methods in Applied Mechanics and Engineering},
volume = {403},
pages = {115671},
year = {2023},
}

@article{schmidt2025numerical,
  title={Numerical data generation for building machine learning models for permeability estimation of fibrous structures},
  author={Schmidt, Tim and Natarajan, Dinesh Krishna and Duhovic, Miro and Cassola, Stefano and Nuske, Marlon and May, David},
  journal={Polymer Composites},
  year={2025},
  pages={S104-S120},
  volume = {46},
}

@article{yong2025permbenchmark,
title = {Towards standardisation of the out-of-plane permeability measurement for reinforcement textiles},
journal = {Composites Part A: Applied Science and Manufacturing},
volume = {190},
pages = {108630},
year = {2025},
author = {A.X.H. Yong and A. Endruweit and A. George and D. May and Y.A. Aksoy and M.A. Ali and T. Allen and P. Baral and C. Betteridge and C. Brauner and others},
}

@article{syerko2023imagebased,
title = {Benchmark exercise on image-based permeability determination of engineering textiles: Microscale predictions},
journal = {Composites Part A: Applied Science and Manufacturing},
volume = {167},
pages = {107397},
year = {2023},
author = {E. Syerko and T. Schmidt and D. May and C. Binetruy and S.G. Advani and S. Lomov and L. Silva and S. Abaimov and N. Aissa and I. Akhatov and others},
}

@article{annamalai2025extension,
  title={On the extension of in-plane permeability calibration to out-of-plane measurements: {A}dvancements in additively manufactured textile-structured porous media for liquid composite moulding},
  author={Annamalai, Prabakaran and Gangipamula, Venkatesh and Ashebir, Demeke Abay and Sattar, Md Abdus and Lomov, Stepan V and May, David and Bodaghi, Masoud and Nikzad, Mostafa},
  journal={Composites Part A: Applied Science and Manufacturing},
  volume={193},
  pages={108863},
  year={2025},
  publisher={Elsevier}
}

@article{jo2024permeability,
title = {Prediction of transverse permeability in representative volume elements with closely arranged fibers through the application of {D}elaunay-triangulation and electrical-circuit analogy},
journal = {Composite Structures},
volume = {334},
pages = {117984},
year = {2024},
author = {Hyeonseong Jo and Sangyoon Bae and Hyunsoo Hong and Wonvin Kim and Seong Su Kim},
}

@article{caglar2022deeplearning,
title = {Deep learning accelerated prediction of the permeability of fibrous microstructures},
journal = {Composites Part A: Applied Science and Manufacturing},
volume = {158},
pages = {106973},
year = {2022},
author = {Baris Caglar and Guillaume Broggi and Muhammad A. Ali and Laurent Orgéas and Véronique Michaud},
}

@article{jean2026imagebased,
title = {An image-based deep learning framework for flow field prediction in arbitrary-sized fibrous microstructures},
journal = {Composites Part A: Applied Science and Manufacturing},
volume = {200},
pages = {109337},
year = {2026},
author = {Jimmy Gaspard Jean and Guillaume Broggi and Baris Caglar},
}

@article{LEE2025108857,
title = {Physics-informed neural networks for real-time simulation of transverse {L}iquid {C}omposite {M}oulding processes and permeability measurements},
journal = {Composites Part A: Applied Science and Manufacturing},
volume = {193},
pages = {108857},
year = {2025},
author = {J. Lee and M. Duhovic and D. May and T. Allen and P. Kelly},
}

@article{wu2024semiconductor,
    author  = "Erjun Wu and Bo Wang and Shuai Zhang and Yu Su and Xiaodong Chen",
    title   = "Microscale underfill dynamics and void formation of high-density
flip-chip packaging: Experiments and simulations",
    year    = "2024",
    journal = "Physics of Fluids",
    volume  = "36",
    pages   = "032117"
}

@article{asif2024heatexchanger,
    author  = "Asif, M. and Jamshed, S. and Dhiman, A.K",
    title   = "Heat transfer across an array of cylinders arranged in inline and staggered formation in a heat exchanger: {E}ffect of nanoparticle volume fraction, nanoparticle diameter, and {R}ichardson number",
    year    = "2024",
    journal = "The European Physical Journal Plus",
    volume  = "139",
    pages   = "601"
}

@article{horgue2013microreactor,
    author  = "Pierre Horgue and Frédéric Augier and Paul Duru and Marc Prat and Michel Quintard",
    title   = "Experimental and numerical study of two-phase flows in arrays
of cylinders",
    year    = "2013",
    journal = "Chemical Engineering Science",
    volume  = "102",
    pages   = "335-345"
}

@article{krishnamurthy2007microreactor,
    author  = "Santosh Krishnamurthy and Yoav Peles",
    title   = "Gas-liquid two-phase flow across a bank of micropillars",
    year    = "2007",
    journal = "Physics of Fluids",
    volume  = "19",
    pages   = "043302"
}

@article{shukla2025neurosem,
  title={Neuro{SEM}: {A} hybrid framework for simulating multiphysics problems by coupling {PINN}s and spectral elements},
  author={Shukla, Khemraj and Zou, Zongren and Chan, Chi Hin and Pandey, Additi and Wang, Zhicheng and Karniadakis, George Em},
  journal={Computer Methods in Applied Mechanics and Engineering},
  volume={433},
  pages={117498},
  year={2025},
  publisher={Elsevier}
}

@book{hunter2001applied,
  title={Applied analysis},
  author={Hunter, John K and Nachtergaele, Bruno},
  year={2001},
  publisher={World Scientific}
}

@article{wolf2022homogenization,
  title={Homogenization of the {S}tokes system in a non-periodically perforated domain},
  author={Wolf, Sylvain},
  journal={Multiscale Modeling \& Simulation},
  volume={20},
  number={1},
  pages={72--106},
  year={2022},
  publisher={SIAM}
}

@inproceedings{DBLP:conf/iclr/RyckBMB24,
  author       = {Tim De Ryck and
                  Florent Bonnet and
                  Siddhartha Mishra and
                  Emmanuel de B{\'{e}}zenac},
  title        = {An operator preconditioning perspective on training in physics-informed
                  machine learning},
  booktitle    = {ICLR},
  year         = {2024},

}

@article{cai2021physics,
  title={Physics-informed neural networks ({PINNs}) for fluid mechanics: A review},
  author={Cai, Shengze and Mao, Zhiping and Wang, Zhicheng and Yin, Minglang and Karniadakis, George Em},
  journal={Acta Mechanica Sinica},
  volume={37},
  number={12},
  pages={1727--1738},
  year={2021},
  publisher={Springer}
}

@article{cai2021flow,
  title={Flow over an espresso cup: inferring {3-D} velocity and pressure fields from tomographic background oriented {S}chlieren via physics-informed neural networks},
  author={Cai, Shengze and Wang, Zhicheng and Fuest, Frederik and Jeon, Young Jin and Gray, Callum and Karniadakis, George Em},
  journal={Journal of Fluid Mechanics},
  volume={915},
  pages={A102},
  year={2021},
  publisher={Cambridge University Press}
}

@article{wang2025simulating,
  title={Simulating three-dimensional turbulence with physics-informed neural networks},
  author={Wang, Sifan and Sankaran, Shyam and Fan, Xiantao and Stinis, Panos and Perdikaris, Paris},
  journal={arXiv preprint arXiv:2507.08972},
  year={2025}
}

@article{zhu2024physics,
  title={Physics-informed neural networks for incompressible flows with moving boundaries},
  author={Zhu, Yongzheng and Kong, Weizhen and Deng, Jian and Bian, Xin},
  journal={Physics of Fluids},
  volume={36},
  number={1},
  year={2024},
  publisher={AIP Publishing}
}

@article{botarelli2025using,
  title={Using {P}hysics-{I}nformed neural networks for solving {Navier-Stokes} equations in fluid dynamic complex scenarios},
  author={Botarelli, Tommaso and Fanfani, Marco and Nesi, Paolo and Pinelli, Lorenzo},
  journal={Engineering Applications of Artificial Intelligence},
  volume={148},
  pages={110347},
  year={2025},
  publisher={Elsevier}
}

@article{griebel2010homogenization,
  title={Homogenization and numerical simulation of flow in geometries with textile microstructures},
  author={Griebel, Michael and Klitz, Margit},
  journal={Multiscale Modeling \& Simulation},
  volume={8},
  number={4},
  pages={1439--1460},
  year={2010},
  publisher={SIAM}
}

@article{bodaghi2016statistics,
  title={On the statistics of transverse permeability of randomly distributed fibers},
  author={Bodaghi, Masoud and Catalanotti, Giuseppe and Correia, Nuno},
  journal={Composite Structures},
  volume={158},
  pages={323--332},
  year={2016},
  publisher={Elsevier}
}

@inproceedings{rahaman2019spectral,
  title={On the spectral bias of neural networks},
  author={Rahaman, Nasim and Baratin, Aristide and Arpit, Devansh and Draxler, Felix and Lin, Min and Hamprecht, Fred and Bengio, Yoshua and Courville, Aaron},
  booktitle={International conference on machine learning},
  pages={5301--5310},
  year={2019},
  organization={PMLR}
}

@article{wang2021eigenvector,
  title={On the eigenvector bias of {F}ourier feature networks: {F}rom regression to solving multi-scale {PDE}s with physics-informed neural networks},
  author={Wang, Sifan and Wang, Hanwen and Perdikaris, Paris},
  journal={Computer Methods in Applied Mechanics and Engineering},
  volume={384},
  pages={113938},
  year={2021},
  publisher={Elsevier}
}

@article{moon2025physics,
  title={Physics-informed neural operators for generalizable and label-free inference of temperature-dependent thermoelectric properties},
  author={Moon, Hyeonbin and Lee, Songho and Demeke, Wabi and Ryu, Byungki and Ryu, Seunghwa},
  journal={npj Computational Materials},
  volume={11},
  number={1},
  pages={272},
  year={2025},
  publisher={Nature Publishing Group UK London}
}

@article{zhao2025physics,
  title={Physics-informed neural networks for solving inverse problems in phase field models},
  author={Zhao, BR and Sun, DK and Wu, H and Qin, CJ and Fei, QG},
  journal={Neural Networks},
  pages={107665},
  year={2025},
  volume={190},
  publisher={Elsevier}
}

@article{HANNA2024108019,
title = {A self-supervised learning framework based on physics-informed and convolutional neural networks to identify local anisotropic permeability tensor from textiles {2D} images for filling pattern prediction},
journal = {Composites Part A: Applied Science and Manufacturing},
volume = {179},
pages = {108019},
year = {2024},
issn = {1359-835X},
author = {John M. Hanna and José V. Aguado and Sebastien Comas-Cardona and Yves {Le Guennec} and Domenico Borzacchiello},
}

@article{hanna2022residual,
  title={Residual-based adaptivity for two-phase flow simulation in porous media using physics-informed neural networks},
  author={Hanna, John M and Aguado, Jose V and Comas-Cardona, Sebastien and Askri, Ramzi and Borzacchiello, Domenico},
  journal={Computer Methods in Applied Mechanics and Engineering},
  volume={396},
  pages={115100},
  year={2022},
  publisher={Elsevier}
}

@article{haghighat2023constitutive,
  title={Constitutive model characterization and discovery using physics-informed deep learning},
  author={Haghighat, Ehsan and Abouali, Sahar and Vaziri, Reza},
  journal={Engineering Applications of Artificial Intelligence},
  volume={120},
  pages={105828},
  year={2023},
  publisher={Elsevier}
}

@article{liu2024config,
  title={Config: Towards conflict-free training of physics informed neural networks},
  author={Liu, Qiang and Chu, Mengyu and Thuerey, Nils},
  journal={arXiv preprint arXiv:2408.11104},
  year={2024}
}

@article{wang2025gradient,
  title={Gradient alignment in physics-informed neural networks: A second-order optimization perspective},
  author={Wang, Sifan and Bhartari, Ananyae Kumar and Li, Bowen and Perdikaris, Paris},
  journal={arXiv preprint arXiv:2502.00604},
  year={2025}
}

@article{xu2023transfer,
  title={Transfer learning based physics-informed neural networks for solving inverse problems in engineering structures under different loading scenarios},
  author={Xu, Chen and Cao, Ba Trung and Yuan, Yong and Meschke, G{\"u}nther},
  journal={Computer Methods in Applied Mechanics and Engineering},
  volume={405},
  pages={115852},
  year={2023},
  publisher={Elsevier}
}

@article{yu2020gradient,
  title={Gradient surgery for multi-task learning},
  author={Yu, Tianhe and Kumar, Saurabh and Gupta, Abhishek and Levine, Sergey and Hausman, Karol and Finn, Chelsea},
  journal={Advances in neural information processing systems},
  volume={33},
  pages={5824--5836},
  year={2020}
}

@article{guhring2021approximation,
  title={Approximation rates for neural networks with encodable weights in smoothness spaces},
  author={G{\"u}hring, Ingo and Raslan, Mones},
  journal={Neural Networks},
  volume={134},
  pages={107--130},
  year={2021},
  publisher={Elsevier}
}

@article{klawonn2024domain,
  title={Machine learning and domain decomposition methods ‑ a survey},
  author={Axel Klawonn and Martin Lanser and Janine Weber},
  journal={Computational Science and Engineering},
  volume={1},
  pages={2},
  year={2024}
}

@article{jagtap2020xpinn,
  title={Extended Physics-Informed Neural Networks ({XPINNs}): A Generalized Space-Time Domain Decomposition Based Deep Learning Framework for Nonlinear Partial Differential Equations},
  author={Ameya D. Jagtap and George Em Karniadakis},
  journal={Communications in Computational Physics},
  volume={28},
  number={5},
  pages={2002-2041},
  year={2020}
}

@article{zhu2019surrogate,
title = {Physics-constrained deep learning for high-dimensional surrogate modeling and uncertainty quantification without labeled data},
journal = {Journal of Computational Physics},
volume = {394},
pages = {56-81},
year = {2019},
author = {Zhu, Y. and Zabaras, N. and Koutsourelakis, P. and Perdikaris, P},
}

@article{tancik2020fourier,
  title={Fourier features let networks learn high frequency functions in low dimensional domains},
  author={Tancik, Matthew and Srinivasan, Pratul and Mildenhall, Ben and Fridovich-Keil, Sara and Raghavan, Nithin and Singhal, Utkarsh and Ramamoorthi, Ravi and Barron, Jonathan and Ng, Ren},
  journal={Advances in neural information processing systems},
  volume={33},
  pages={7537--7547},
  year={2020}
}

@article{lagaris1998artificial,
  title={Artificial neural networks for solving ordinary and partial differential equations},
  author={Lagaris, Isaac E and Likas, Aristidis and Fotiadis, Dimitrios I},
  journal={IEEE transactions on neural networks},
  volume={9},
  number={5},
  pages={987--1000},
  year={1998},
  publisher={IEEE}
}

@article{folland1997uncertainty,
  title={The uncertainty principle: a mathematical survey},
  author={Folland, Gerald B and Sitaram, Alladi},
  journal={Journal of Fourier analysis and applications},
  volume={3},
  number={3},
  pages={207--238},
  year={1997},
  publisher={Springer}
}

@inproceedings{muller2022notes,
  title={Notes on exact boundary values in residual minimisation},
  author={M{\"u}ller, Johannes and Zeinhofer, Marius},
  booktitle={Mathematical and Scientific Machine Learning},
  pages={231--240},
  year={2022},
  organization={PMLR}
}

@article{bradbury2018jax,
  title={{JAX}: composable transformations of {P}ython+{NumPy} programs},
  author={Bradbury, James and Frostig, Roy and Hawkins, Peter and Johnson, Matthew James and Leary, Chris and Maclaurin, Dougal and Necula, George and Paszke, Adam and VanderPlas, Jake and Wanderman-Milne, Skye and others},
  year={2018}
}

@article{hintermuller2026constrained,
  title={Constrained Neural Parameterization for Optimization in Function Spaces},
  author={Hinterm{\"u}ller, Michael and Ning, Jianfeng},
  journal={arXiv preprint arXiv:2606.00855},
  year={2026}
}
\end{document}